\documentclass[10pt]{article}
\usepackage{amsmath, amsthm}
\usepackage[doublespacing]{setspace}
\usepackage{geometry}
\usepackage{amsfonts}
\usepackage{sectsty}
\usepackage{booktabs}
\usepackage{fancyhdr}
\usepackage{natbib}
\usepackage[pdftex]{graphicx}
\usepackage[pdftex]{color}

\usepackage{booktabs}
\usepackage{multirow}

\usepackage[boxruled,linesnumbered]{algorithm2e}
\usepackage{bbm}

\newcommand{\EE}{\mathbb{E}}

\newcommand{\opO}{\operatorname{O}}
\newcommand{\optr}{\operatorname{trace}}

\newcommand{\opHaar}{\operatorname{Haar}}

\newcommand{\opdet}{\operatorname{det}}
\newcommand{\oplog}{\operatorname{log}}
\newcommand{\opdiag}{\operatorname{diag}}
\newcommand{\ophBayes}{\operatorname{hBayes}}

\newcommand{\opdist}{\operatorname{ADP}}

\DeclareMathOperator*{\argmin}{arg\,min}

\renewcommand{\vec}[1]{\boldsymbol{\mathrm{#1}}}

\newtheorem{thm}{Theorem}
\newtheorem{lemma}{Lemma}
\newtheorem{proposition}{Proposition}
\newtheorem{corollary}{Corollary}
\theoremstyle{remark}
\newtheorem*{remark}{Remark}

\theoremstyle{definition}
\newtheorem{definition}{Definition}

\begin{document}
\title{Hierarchical Bayesian Estimation of Covariance Matrices}

\author{Daniel Xiang, Malgorzata Bogdan, Jonas Wallin, Daniel Yekutieli}

\baselineskip=18pt

\maketitle

\begin{abstract}
We develop a hierarchical Bayesian framework for covariance matrix estimation built on a key observation: 
while equivariance under the full general linear group GL(p) is well known, it is an extremely restrictive 
property — estimators equivariant to GL(p) are limited to scalar multiples of the sample covariance matrix 
and carry considerably larger risks than shrinkage estimators. By contrast, commonly used shrinkage 
estimators, including the Haff empirical Bayes estimator, and the 
Ledoit–Wolf estimators, are all equivariant under the smaller orthogonal group O(p). Exploiting this 
structure, we establish that the Haar measure Bayes rule in an oracle eigenvalue model is the minimum risk 
estimator within the class of O(p)-equivariant estimators, and derive oracle Bayes rules for the covariance 
and precision matrices under the squared Frobenius, Stein, and squared Stein loss functions. These oracle 
rules serve as theoretical benchmarks that dominate all commonly used estimators.
To approximate them when the true eigenvalues are unknown, we introduce a hierarchical 
Bayes model that places a finite Pólya tree prior on the eigenvalue distribution and uses Gibbs sampling 
to generate posterior draws, yielding both shrinkage estimates for the eigenvalues and approximations to the oracle Bayes rules.
Simulations suggest 
that the finite Pólya tree prior is able to recover the general form of the distribution of the eigenvalues,
and confirm that the resulting estimators closely approach oracle performance, substantially outperforming 
classical competitors for both covariance and precision matrix estimation.  
\end{abstract}

%%%%%%%%%%%%%%%%%%%%%%%%%%%%%%%%%%%%%%%%%%%%%%%%%%%%%%%%%%
% 				Section 1
%%%%%%%%%%%%%%%%%%%%%%%%%%%%%%%%%%%%%%%%%%%%%%%%%%%%%%%%%%

\section{Introduction}

This report presents a hierarchical Bayes (hBayes) approach for estimating covariance matrices of 
Multivariate Normal (MVN) distributions. The problem has a rich history. \citet{JamesStein1961} established 
equivariance of covariance matrix estimation under multiplication by nonsingular matrices. \citet{Stein1975} 
proposed estimates that incorporate robust shrinkage of the eigenvalue vector, demonstrating considerably 
smaller risk in simulations. \citet{Haff1980} introduced empirical Bayes estimators expressed as weighted 
combinations of the sample covariance matrix and an arbitrary positive definite matrix, and \citet{Haff1991} 
studied the class of estimators invariant under orthogonal transformations, deriving variational Bayes rules 
that minimize risk for standard matrix loss functions. More recently, \citet{LedoitWolf2004} proposed a 
linear shrinkage estimator with analytically optimal weights, and \citet{LedoitWolf2020} extended this to a 
nonlinear shrinkage estimator based on the empirical spectral distribution.
The idea of employing a finite P\'olya tree prior to nonparametrically learn a shrinkage rule, which is a 
key feature of our hBayes framework, was previously explored in \citet{WeinsteinEtAl2025}. In that work, the 
authors place a finite P\'olya tree prior on the coefficients of a generalized linear model to build a 
hierarchical Bayes estimator that approximate an oracle shrinkage rule characterized by permutation 
invariance of the coefficient vector.

We develop a general framework for covariance matrix estimation under the Normal model, built on two 
generative models defined via eigendecomposition. The first is an oracle model in which the eigenvalue 
vector is treated as known and fixed, and the eigenvector matrix is drawn uniformly from the orthogonal 
group. The second is a hBayes model that jointly samples the eigenvector matrix and 
places a finite Pólya tree prior on the eigenvalues. Exploiting equivariance properties of standard loss 
functions and shrinkage estimators under orthogonal transformations, we establish the oracle model's Bayes 
rules as theoretical benchmarks that minimize risk within the class of orthogonally equivariant estimators 
for both covariance and precision matrix estimation. The hBayes model generates posterior samples of the 
eigenvalue vector and eigenvector matrix, which we use to derive shrinkage estimates for the sorted 
eigenvalues and to approximate the oracle Bayes rules when the eigenvalues are unknown.

The remainder of the report is organized as follows. Section 2 reviews necessary background, drawing 
primarily on \citet{JamesStein1961} and \citet{Berger1985}, and Section 3 discusses equivariance under the 
orthogonal group; both sections adopt the notation and follow the structure of Peter Hoff's 
lecture notes on Multivariate Statistical Analysis.
Section 4 presents the oracle model, and Section 5 describes its implementation and derives oracle Bayes 
rules for the covariance matrix under standard loss functions; the corresponding results for the precision 
matrix are deferred to Appendix A. Section 6 develops the hBayes approach. Section 7 presents simulation 
results. Results well established in the statistical literature are stated as Theorems; proofs are included 
in the text only when instructive, with more technical derivations deferred to Appendix B.

%%%%%%%%%%%%%%%%%%%%%%%%%%%%%%%%%%%%%%%%%%%%%%%%%%%%%%%%%%
% 				Section 2
%%%%%%%%%%%%%%%%%%%%%%%%%%%%%%%%%%%%%%%%%%%%%%%%%%%%%%%%%%

\bigskip
\section{Preliminaries}
%
%We establish invariance of covariance matrix estimation with respect to the group of orthogonal matrices.
%The material in this section is largely taken from the Berger (2006) monograph and a James and Stein (1992) -- paper is compiled in a %volume from 1992 but I think it  was first presented in 1961.

%%%%%%%%%%%%%%%%%%%%%%%%%%%%%%%%%%%%%%%%%%%%%%%%%%%%%%%%%%
\bigskip
\subsection{Equivariance to groups of transformations}

\medskip
\begin{definition}   \label{def-inv-dens}
(a) A family of densities $ f ( x  | \theta)$ is equivariant under a group of transformations ${\cal G}$ if every 
$g \in {\cal G}$ specifies a transformation on the data, $Y = g (X)$,  and on the parameters, $\bar{g}(\theta)$,
such that if $X \sim f ( x  | \theta)$ then $Y \sim f ( y  | \bar{g}(\theta) )$.
(b) An estimator $\hat{\theta}$ is equivariant under ${\cal G}$ if $\forall g \in {\cal G}$,
\begin{equation} \label{inv-est}
 \hat{\theta} ( g (X) ) =  \bar{g} ( \hat{\theta} (X)).
 \end{equation}
(c) A loss function is invariant under $G$ if  $\forall g  \in G$,
\begin{equation} \label{inv-loss}
L ( \hat{\theta} , \theta) = L ( \bar{g} (\hat{\theta}) , \bar{g}(\theta)).
\end{equation}
\end{definition}

\begin{thm} \label{cor1}
If the family of densities, estimator and loss  function, are respectively equivariant and invariant under  ${\cal G}$
and ${\cal G}$ is transitive on the parameter space  
then the risk of the estimator is constant.
\end{thm}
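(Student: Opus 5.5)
The plan is to show that the risk satisfies $R(\bar g(\theta),\hat\theta)=R(\theta,\hat\theta)$ for every $g\in{\cal G}$ and every $\theta$. Transitivity then finishes the argument. Given any two parameter values $\theta_1,\theta_2$, transitivity of the induced group $\bar{\cal G}=\{\bar g: g\in{\cal G}\}$ on the parameter space supplies a $g$ with $\bar g(\theta_1)=\theta_2$. Hence $R(\theta_2,\hat\theta)=R(\theta_1,\hat\theta)$, so the risk is constant.

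The invariance identity is a change of variables. Write
\[
R(\theta,\hat\theta)=E_\theta\bigl[L(\hat\theta(X),\theta)\bigr],\qquad X\sim f(x\mid\theta).
\]
First apply invariance of the loss, (\ref{inv-loss}), to rewrite the integrand as $L(\bar g(\hat\theta(X)),\bar g(\theta))$. Next use equivariance of the estimator, (\ref{inv-est}), to replace $\bar g(\hat\theta(X))$ by $\hat\theta(g(X))$. This gives
\[
R(\theta,\hat\theta)=E_\theta\bigl[L(\hat\theta(Y),\bar g(\theta))\bigr],\qquad Y=g(X).
\]
By equivariance of the family, Definition~\ref{def-inv-dens}(a), $Y\sim f(y\mid\bar g(\theta))$ whenever $X\sim f(x\mid\theta)$. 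The last expectation, which is an expectation of a function of $Y$ alone, therefore equals $E_{\bar g(\theta)}\bigl[L(\hat\theta(Y),\bar g(\theta))\bigr]=R(\bar g(\theta),\hat\theta)$.

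There is no serious obstacle. The only point needing care is the final step. There I must note that the expectation depends only on the distribution of $Y$, so the law-of-the-unconscious-statistician identity applies. This requires $\hat\theta$ and $L$ to be measurable, which I will take as a standing assumption. I will also make explicit that transitivity refers to the induced maps $\bar g$ acting on the parameter space, not to the maps $g$ on the data.
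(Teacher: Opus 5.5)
Your proposal is correct and follows essentially the same route as the paper. The paper also combines loss invariance, estimator equivariance, and equivariance of the family to get $R(L,\hat{\theta};\theta_0)=R(L,\hat{\theta};\bar{g}(\theta_0))$, and then uses transitivity of the induced maps $\bar{g}$ to conclude the risk is constant. Your version simply spells out the individual steps, including the measurability caveat, more explicitly.
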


\begin{proof}
Invoking equivariance and invariance properties $\forall g \in G$ yields,
\begin{eqnarray*}
 R ( L, \hat{\theta}; \theta_0)  &  =  & E_{X \sim f ( x | \theta_0)}  L ( \hat{\theta} (X),  \theta_0 )\\
&  = & E_{Y \sim f ( x | \bar{g}(\theta_0))}  L (  \hat{\theta} (Y), \bar{g} (\theta_0)) =  R ( L, \hat{\theta};  \bar{g} ( \theta_0)).
\end{eqnarray*}
Transitivity implies that $\forall \theta$, $\exists g \in {\cal G}$ such that $\theta = \bar{g} (\theta_0)$,
therefore $R ( L, \hat{\theta}; \theta_0) = R ( L, \hat{\theta}; \theta)$.
\end{proof}

\begin{lemma} \label{thm221}
If the family of densities is equivariant under  ${\cal G}$
and the prior distribution of $\theta$ is invariant to  %\dy{(Changed to invariant. I don't think definition is needed)} \dx{(ok!)}
${\cal G}$ then the posterior distribution 
of $\theta$ is equivariant to ${\cal G}$.
\end{lemma}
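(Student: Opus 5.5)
The plan is to first make precise what ``the posterior is equivariant'' should mean. I will read it as follows: for every $g \in {\cal G}$ and every measurable set $A$ in the parameter space,
\[
\pi\bigl(\bar{g}(A) \mid g(x)\bigr) = \pi(A \mid x).
\]
Equivalently, if $\theta \mid X=x \sim \pi(\cdot \mid x)$, then $\bar{g}(\theta) \mid X = g(x)$ has the same law as $\theta \mid X=g(x)$. This parallels Definition \ref{def-inv-dens}(b), with the posterior distribution playing the role of the estimator. Invariance of the prior means $\pi(\bar{g}(A)) = \pi(A)$ for all $g$ and $A$, i.e.\ $\bar{g}(\theta) \sim \pi$ whenever $\theta \sim \pi$.

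The key step is a joint-distribution argument that avoids Jacobians. Draw $\theta \sim \pi$ and $X \mid \theta \sim f(x \mid \theta)$, and set $(\theta', Y) = (\bar{g}(\theta), g(X))$.
\begin{itemize}
\item By invariance of the prior, $\theta' \sim \pi$.
\item By equivariance of the family (Definition \ref{def-inv-dens}(a)), conditional on $\theta$, and hence on $\theta'$ (since $\bar{g}$ is a bijection), we have $Y \sim f(y \mid \theta')$.
\end{itemize}
Therefore $(\theta', Y)$ has the same joint law as $(\theta, X)$. Conditioning then gives that the law of $\theta'$ given $Y=y$ equals the posterior $\pi(\cdot \mid y)$. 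Since $Y = g(X)$ with $g$ a bijection, conditioning on $Y = g(x)$ is the same as conditioning on $X = x$. Under that conditioning, $\theta' = \bar{g}(\theta)$ has law $\pi(\cdot \mid x)$ pushed forward by $\bar{g}$. Equating the two descriptions yields $\pi(\bar{g}(A) \mid g(x)) = \pi(A \mid x)$, which is the claim.

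As a check, I will also run the density version when densities exist. Write $\pi(\theta \mid x) \propto f(x \mid \theta)\pi(\theta)$. Change variables $\theta = \bar{g}(\eta)$, and use $f(g(x) \mid \bar{g}(\eta))\,|J_g(x)| = f(x \mid \eta)$, which is equivariance written in densities. The factor $|J_g(x)|$ does not depend on the parameter, so it is absorbed by the normalizing constant. Invariance of the prior handles the Jacobian of $\bar{g}$.

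The main obstacle is the measure-theoretic justification of ``conditioning on $Y=g(x)$ equals conditioning on $X=x$.'' Conditional distributions are defined only up to null sets, so the identity holds for almost every $x$. I will handle this by stating the result almost everywhere, or by noting that under the dominated densities used in the paper (MVN with the orthogonal group, where $|\det|=1$), a version of the posterior is defined pointwise via Bayes' formula and the identity holds for every $x$. I also need $g$ and $\bar{g}$ to be measurable bijections, which holds for the group actions considered.
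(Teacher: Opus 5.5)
Your proof is correct, but your main argument differs from the paper's. The paper's entire proof is the Bayes-formula computation $\pi(\bar{g}(\theta)\mid g(x)) \propto f(g(x)\mid\bar{g}(\theta))\,\pi(\bar{g}(\theta)) = f(x\mid\theta)\,\pi(\theta) \propto \pi(\theta\mid x)$. That is essentially your density ``check'', except that the paper applies both the equivariance of the family and the invariance of the prior directly as equalities of densities.

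Your primary route instead shows that $(\bar{g}(\theta), g(X))$ has the same joint law as $(\theta, X)$ and then reads off the conditional laws. This buys three things:
\begin{enumerate}
\item It needs no densities or Jacobians, only that $g$ and $\bar{g}$ be measurable bijections.
\item It gives the conclusion as an identity of measures, $\pi(\bar{g}(A)\mid g(x)) = \pi(A\mid x)$. This is exactly what the proof of Theorem~\ref{cor11} uses to transport the posterior expected loss.
\item It makes explicit that the identity holds only almost everywhere unless a specific version of the posterior is fixed.
\end{enumerate}

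The paper's computation is shorter and pointwise, but it silently assumes unit Jacobians. The equality $f(g(x)\mid\bar{g}(\theta)) = f(x\mid\theta)$ holds for $g_{\vec{R}}$ with $\vec{R}\in\opO(p)$. For a general $\vec{A}\in GL(p)$, however, one has $f(\vec{x}\vec{A}\mid\vec{A}^\top\vec{\Sigma}\vec{A}) = |\opdet\vec{A}|^{-n} f(\vec{x}\mid\vec{\Sigma})$. Likewise, $\pi(\bar{g}(\theta)) = \pi(\theta)$ presumes a density with respect to an invariant base measure such as Haar measure. Your density version absorbs $|J_g(x)|$ into the normalizing constant and lets prior invariance cancel the Jacobian of $\bar{g}$. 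It is therefore the careful form of the paper's argument and covers those cases as well.
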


\begin{proof}
Expressing the posterior distribution, 
\[ 
\pi ( \bar{g} (\theta) | g(x)) \propto f ( g(x) | \bar{g} (\theta) )  \pi ( \bar{g} (\theta))
= f ( x | \theta )  \pi ( \theta) \propto \pi ( \theta | x).
\]
equivariance of the likelihood and the prior distribution implies equivariance of the posterior 
distribution. 
\end{proof}

We now assume that the family of densities is equivariant under  ${\cal G}$, ${\cal G}$ is transitive 
on the parameter space,  and that the Haar measure on ${\cal G}$ is a probability distribution, 
$\pi^{\textnormal{Haar}}$. We specify a generative model for $(\theta, X)$:
for a fixed parameter value $\theta_0$, prior distribution samples are given by 
$\bar{g} ( \theta_0)$ for $g \sim \pi^{\textnormal{Haar}}$ and the data samples are 
$X \sim  f ( x | \bar{g} (\theta_0))$.
The Haar measure Bayes rule is the estimator that minimizes the Haar measure average risk, 
\begin{equation}
  E_{ g \sim \pi^{\opHaar}}  \ R ( L, \hat{\theta} ;  \bar{g} ( \theta_0)). \label{eq33}
\end{equation}

\begin{thm} \label{cor11}
If the loss  is invariant under  ${\cal G}$ then the Haar measure Bayes rule is the equivariant 
estimator %that 
with the smallest risk.
\end{thm}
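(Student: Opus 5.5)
The plan has two parts. First, show that the Haar measure Bayes rule is itself equivariant. Second, show that for any equivariant estimator the Haar average risk (\ref{eq33}) equals its constant risk.

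\textbf{Step 1 (the Haar measure Bayes rule is equivariant).} I will use Lemma \ref{thm221}. The prior on $\theta$ is the law of $\bar{g}(\theta_0)$ with $g\sim\pi^{\opHaar}$. Left invariance of Haar measure, $hg \sim \pi^{\opHaar}$ for fixed $h$, gives $\bar{h}(\bar{g}(\theta_0)) = \overline{hg}(\theta_0) \stackrel{d}{=} \bar{g}(\theta_0)$, so this prior is invariant under ${\cal G}$. Lemma \ref{thm221} then says the posterior is equivariant: the posterior of $\theta$ given $g(x)$ is the law of $\bar{g}(\theta)$ when $\theta\sim\pi(\cdot\mid x)$.

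The Bayes rule minimizes posterior expected loss pointwise in $x$. At data $g(x)$ it minimizes $E[L(a,\bar{g}(\theta))\mid x]$ over actions $a$. Invariance of the loss gives $L(a,\bar{g}(\theta)) = L(\bar{g}^{-1}(a),\theta)$, so the minimizer is $a=\bar{g}(\hat\theta^{B}(x))$. Hence $\hat\theta^{B}(g(x))=\bar{g}(\hat\theta^{B}(x))$.

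\textbf{Step 2 (Haar average risk equals constant risk).} Take any equivariant $\hat\theta$. By invariance of the loss and equivariance of the family and estimator, Theorem \ref{cor1} applies. Its proof only uses $R(L,\hat\theta;\theta_0)=R(L,\hat\theta;\bar{g}(\theta_0))$, which holds without transitivity when we restrict to the orbit of $\theta_0$. So the risk is constant on the support of the prior, and the Haar average risk (\ref{eq33}) equals $R(L,\hat\theta;\theta)$ for every $\theta$ in the orbit; with transitivity, for every $\theta$.

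\textbf{Step 3 (conclusion).} The Haar Bayes rule minimizes (\ref{eq33}) over all estimators, in particular over equivariant ones. For equivariant estimators, (\ref{eq33}) is exactly their constant risk. Since $\hat\theta^{B}$ is equivariant by Step 1, its constant risk is at most the constant risk of every equivariant competitor.

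\textbf{Main obstacle.} The delicate point is Step 1. It needs the Bayes rule to exist and to be essentially unique, or at least a minimizer that can be chosen equivariantly. If the minimizer of posterior expected loss is not unique, I would select one at a representative point of each data orbit and extend it by equivariance. I would also check measurability and well-definedness when the stabilizer of $x$ is nontrivial. Under uniqueness, for example with a strictly convex loss such as squared Frobenius or Stein loss, the argument above is immediate.
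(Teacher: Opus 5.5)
Your proof is correct and follows the paper's route. Haar invariance of the prior plus Lemma \ref{thm221} gives an equivariant posterior expected loss, hence an equivariant Bayes rule, and minimizing the Haar-averaged risk (\ref{eq33}) then gives minimum risk among equivariant estimators. You also make explicit two points the paper leaves implicit: that (\ref{eq33}) equals the constant risk of any equivariant estimator by the argument of Theorem \ref{cor1}, which needs transitivity only on the orbit of $\theta_0$ as in the oracle model of Proposition \ref{prop3}; and that the posterior minimizer must be unique or chosen equivariantly.
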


\begin{proof}
Sampling $g \sim \pi^{\opHaar}$ implies that the prior distribution, $\bar{g} ( \theta_0)$, 
is invariant to ${\cal G}$. 
According to Lemma \ref{thm221} the posterior distribution  $\bar{g} ( \theta_0)$,
and thus also the posterior expected loss for an invariant loss function,
are equivariant to ${\cal G}$.
As the minimizer of an equivariant posterior expected loss,
$\hat{\theta}^{\opHaar}  (x)$ is  equivariant under ${\cal G}$.
Per construction, the Haar measure Bayes rule minimizes the Haar measure average risk (\ref{eq33}).
Thus it also produces the smallest risk for any parameter value.
\end{proof}

%%%%%%%%%%%%%%%%%%%%%%%%%%%%%%%%%%%%%%%%%%%%%%%%%%%%%%%%%%
\bigskip
\subsection{Equivariance to multiplication by singular matrices}
In this report we assume the observed data is  $\vec{\bf X}_{n \times p}$ consisting of  i.i.d.  
row vectors, ${\bf X}^i \sim N( \vec{0}, \vec{\Sigma})$ with positive-definite $\vec{\Sigma}_{p \times p}$. 
We use the eigendecomposition to specify the covariance matrix $\vec{\Sigma}  = \vec{\Gamma} \vec{\Lambda} \vec{\Gamma}^\top$.
Here, $\vec{\Lambda}$ is the diagonal matrix whose diagonal is the vector of sorted eigenvalues,  
$\vec{\lambda} = ( \lambda_1 \ge \cdots \ge  \lambda_p > 0)$,
and  $\vec{\Gamma}$ is the corresponding orthogonal eigenvector matrix.
We express the density of $\vec{X} = \vec{x}$,
\begin{eqnarray}
 f  (   \vec{x} | \vec{\Sigma} )   & =  & \Pi_{i = 1}^n  (2 \pi)^{-p/2} \opdet( \vec{\Sigma})^{-1/2} \exp ( -\vec{x}^i \vec{\Sigma}^{-1} (\vec{x}^i)^\top / 2)) \nonumber \\
  & =  & (2 \pi)^{-np/2} \opdet( \vec{\Sigma})^{-n/2} \exp ( \optr( -\vec{x} \vec{\Sigma}^{-1} \vec{x}^\top / 2)) \nonumber \\
  & =  & (2 \pi)^{-np/2}  \left(  \Pi_{i = 1}^n \lambda_i \right)^{-n/2} \exp ( \optr( -  \vec{\Lambda}^{-1} \vec{\Gamma}^\top  ( n \vec{S} ) \vec{\Gamma} / 2)) \label{norm-lik}
\end{eqnarray}
for the sample covariance matrix, $\vec{S}  =  \vec{X}^\top \vec{X}  / n$.
%and eigen-decomposition for the precision matrix, $\vec{\Sigma}^{-1}  = \vec{\Gamma} \vec{\Lambda}^\top   \vec{\Gamma}^\top$,
%which we will denote $\vec{\Omega} = \vec{\Sigma}^{-1}$.
 %For our Bayesian analyses we assume that the covariance matrices are sampled from the Haar measure on $\opO(p) $,
%$\pi^{\opHaar} ( \vec{R})$.

We consider equivariance to $GL (p)$, the General Linear group of $p \times p$ nonsingular matrices.
Here, we associate each nonsingular matrix $\vec{A}$  with a transformation on the data space,
$g_{\vec{A}} (\vec{X}) =  \vec{X} \vec{A}$,
and a transformation on the parameter space,
 $\bar{g}_{\vec{A}} ( \vec{\Sigma}) =  \vec{A}^\top  \vec{\Sigma}  \vec{A}$.
In the following sections we will focus on equivariance to $\opO(p)$, 
the subgroup of orthogonal matrices.
The following are known results regarding equivariance for Normal covariance matrix estimation and 
commonly used loss functions: the squared Frobenius norm loss
$L_0 (\hat{\vec{\Sigma}}, \vec{\Sigma}) = \|  \hat{\vec{\Sigma}} -  \vec{\Sigma} \|^2$,
the Stein loss $L_1 (\hat{\vec{\Sigma}}, \vec{\Sigma}) = \optr (  \hat{\vec{\Sigma}} \vec{\Sigma}^{-1}  ) -  
\oplog \opdet  ( \hat{\vec{\Sigma}} \vec{\Sigma}^{-1}) - p$,
the squared Stein loss
$L_2 (\hat{\vec{\Sigma}},\vec{\Sigma} )  = \optr ( \hat{\vec{\Sigma}}  \vec{\Sigma}^{-1}  - I )^2$. %\dx{(Why is it called squared Stein loss?)}.

\begin{thm} \label{lemm1a}  
\begin{enumerate}
\item[(a)] The $ N( \vec{0}, \vec{\Sigma})$ distribution is equivariant to $GL(p)$.
\item[(b)] The sample covariance matrix $\vec{S}$ is equivariant to $GL(p)$.
%\textcolor{red}{
\item[(c)] The $L_0$ loss is invariant to $\opO(p)$.
\item[(d)] The $L_1$ loss is invariant to $GL(p)$.
\item[(e)] The $L_2$ loss is invariant to $GL(p)$.
\end{enumerate}
\end{thm}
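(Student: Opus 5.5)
Each of the five parts reduces to a direct verification using the group actions $g_{\vec{A}}(\vec{X}) = \vec{X}\vec{A}$ and $\bar{g}_{\vec{A}}(\vec{\Sigma}) = \vec{A}^\top \vec{\Sigma}\vec{A}$, together with Definition \ref{def-inv-dens}. I will take the parts in order.

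\emph{Part (a).} If a row satisfies $\vec{X}^i \sim N(\vec{0},\vec{\Sigma})$, then the transformed row $\vec{X}^i\vec{A}$ is a linear image of a Gaussian vector. Its mean is $\vec{0}$ and its covariance is $\vec{A}^\top\vec{\Sigma}\vec{A}$. Rows remain independent, so $\vec{X}\vec{A}$ has density $f(\cdot \mid \bar g_{\vec{A}}(\vec{\Sigma}))$. Alternatively, one can change variables in (\ref{norm-lik}) using the Jacobian $|\opdet \vec{A}|^{n}$.

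\emph{Part (b).} The sample covariance of the transformed data is
\[
\vec{S}(\vec{X}\vec{A}) = \vec{A}^\top\vec{X}^\top\vec{X}\vec{A}/n = \vec{A}^\top\vec{S}\vec{A} = \bar g_{\vec{A}}(\vec{S}),
\]
which is exactly the equivariance condition (\ref{inv-est}).

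\emph{Part (c).} For orthogonal $\vec{A}$ we have
\[
\|\vec{A}^\top(\hat{\vec{\Sigma}}-\vec{\Sigma})\vec{A}\|^2 = \optr\bigl(\vec{A}^\top(\hat{\vec{\Sigma}}-\vec{\Sigma})\vec{A}\vec{A}^\top(\hat{\vec{\Sigma}}-\vec{\Sigma})\vec{A}\bigr).
\]
Using $\vec{A}\vec{A}^\top = I$ and cyclicity of the trace, this equals $\|\hat{\vec{\Sigma}}-\vec{\Sigma}\|^2$. Invariance can fail for a general $\vec{A}\in GL(p)$: for instance $\vec{A}=cI$ scales the loss by $c^4$. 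This is why the statement restricts (c) to $\opO(p)$.

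\emph{Parts (d) and (e).} The key identity is
\[
(\vec{A}^\top\hat{\vec{\Sigma}}\vec{A})(\vec{A}^\top\vec{\Sigma}\vec{A})^{-1} = \vec{A}^\top\hat{\vec{\Sigma}}\vec{\Sigma}^{-1}\vec{A}^{-\top},
\]
so the transformed matrix is similar to $\hat{\vec{\Sigma}}\vec{\Sigma}^{-1}$. Both $L_1$ and $L_2$ depend on $\hat{\vec{\Sigma}}\vec{\Sigma}^{-1}$ only through similarity-invariant functions. For $L_1$ these are the trace and the determinant. For $L_2$ it is $\optr(\vec{M}-I)^2$ with $\vec{M}=\hat{\vec{\Sigma}}\vec{\Sigma}^{-1}$, since $\vec{B}(\vec{M}-I)\vec{B}^{-1} = \vec{B}\vec{M}\vec{B}^{-1}-I$. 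Invariance under all of $GL(p)$ follows.

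None of these steps is genuinely hard. The only care needed is in (a), to confirm that the transformation matches the stated parametrization $\vec{A}^\top\vec{\Sigma}\vec{A}$ (rather than $\vec{A}\vec{\Sigma}\vec{A}^\top$) given that the data are row vectors.
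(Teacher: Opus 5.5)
Your proposal is correct in all five parts. The paper gives no proof to compare it with: the theorem is quoted as a well-established result, attributed to James and Stein, and no proof appears in the text or in Appendix~\ref{sec:proofs}. Your direct verifications are the standard argument: rows $\vec{X}^i\vec{A}$ are Gaussian with covariance $\vec{A}^\top\vec{\Sigma}\vec{A}$; $\vec{S}(\vec{X}\vec{A})=\vec{A}^\top\vec{S}\vec{A}$, which is exactly the fact the paper later invokes in the proofs of Lemmas~\ref{lemm2a} and~\ref{cor34}; $L_0$ is invariant by cyclicity of the trace with $\vec{A}\vec{A}^\top=I$; and $L_1$, $L_2$ are invariant because $(\vec{A}^\top\hat{\vec{\Sigma}}\vec{A})(\vec{A}^\top\vec{\Sigma}\vec{A})^{-1}=\vec{A}^\top\hat{\vec{\Sigma}}\vec{\Sigma}^{-1}\vec{A}^{-\top}$ is similar to $\hat{\vec{\Sigma}}\vec{\Sigma}^{-1}$. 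Your $\vec{A}=cI$ example also correctly explains why (c) is stated only for $\opO(p)$.
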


%\noindent
%As for any two covariance matrices $\vec{\Sigma}$ and $\vec{\Sigma'}$ there exists a nonsingular $\vec{A}$ such that,
%$\bar{g}_{\vec{A}} ( \vec{\Sigma} ) =   \vec{\Sigma'}$.

%\begin{thm} \label{prop11}
%The risk for a loss function and estimator that are invariant under multiplication by nonsingular matrices is the same for all $p \times p$ covariance matrices.
% \end{thm}

\begin{thm} \label{prop122}
Estimators that are equivariant to $GL(p)$ are of the form $ \hat{\vec{\Sigma}} (\vec{S}) = c \vec{S}$ for $c > 0$.
 \end{thm}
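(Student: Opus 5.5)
The plan is to work with estimators that are functions of the sufficient statistic $\vec{S}$, so $\hat{\vec{\Sigma}} = \hat{\vec{\Sigma}}(\vec{S})$ is defined on the cone of positive-definite matrices. Under $g_{\vec{A}}$ we have $\vec{X}\mapsto \vec{X}\vec{A}$, hence $\vec{S} \mapsto \vec{A}^\top \vec{S}\vec{A}$. The equivariance requirement (\ref{inv-est}) then reads
\[
\hat{\vec{\Sigma}}(\vec{A}^\top \vec{S} \vec{A}) = \vec{A}^\top \hat{\vec{\Sigma}}(\vec{S}) \vec{A} \quad \text{for all } \vec{A}\in GL(p).
\]
This functional equation is the only input; the idea is to exploit the fact that $GL(p)$ acts transitively on positive-definite matrices, so the estimator is determined by its value at a single point.

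\textbf{Step 1: reduction to the identity.} For $\vec{S}$ positive definite, take $\vec{A} = \vec{S}^{1/2}$, or any $\vec{A}$ with $\vec{A}^\top\vec{A} = \vec{S}$. Applying the functional equation at $\vec{S}_0=\vec{I}$ gives
\[
\hat{\vec{\Sigma}}(\vec{S}) = \vec{A}^\top \hat{\vec{\Sigma}}(\vec{I}) \vec{A}.
\]
So everything hinges on the single matrix $\vec{M} := \hat{\vec{\Sigma}}(\vec{I})$.

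\textbf{Step 2: the stabilizer forces $\vec{M}=c\vec{I}$.} Apply the functional equation with $\vec{S}=\vec{I}$ and $\vec{A}=\vec{O}\in\opO(p)$. Since $\vec{O}^\top\vec{I}\vec{O}=\vec{I}$, this yields $\vec{M} = \vec{O}^\top \vec{M}\vec{O}$ for every orthogonal $\vec{O}$, i.e. $\vec{M}$ commutes with all of $\opO(p)$. I would then show such an $\vec{M}$ is scalar in two moves:
\begin{itemize}
\item Taking $\vec{O}$ to be the diagonal sign matrices kills all off-diagonal entries, because conjugation by $\opdiag(\pm1)$ flips the sign of $M_{ij}$ whenever exactly one of $i,j$ is flipped.
\item Taking $\vec{O}$ to be permutation matrices equalizes the diagonal entries.
\end{itemize}
Hence $\vec{M}=c\vec{I}$. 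Positivity $c>0$ follows from $\hat{\vec{\Sigma}}$ taking values in the positive-definite matrices, which is the parameter space.

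\textbf{Step 3: conclusion.} Substituting back, $\hat{\vec{\Sigma}}(\vec{S}) = c\,\vec{A}^\top\vec{A} = c\vec{S}$. For consistency one should note that the right-hand side in Step 1 does not depend on the choice of $\vec{A}$ with $\vec{A}^\top\vec{A}=\vec{S}$. Such choices differ by left multiplication by an orthogonal matrix, and Step 2 is exactly what makes this harmless.

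The main subtlety rather than an obstacle is the domain issue. If $n<p$, $\vec{S}$ is singular, and the transitivity argument only covers the orbit of $\vec{I}$, namely the invertible $\vec{S}$. For general $\vec{S}$ I would instead reduce to $\vec{S}=\opdiag(\vec{I}_r,\vec{0})$ via $\vec{A}$ and repeat Step 2 with the larger stabilizer. That stabilizer contains block matrices with an arbitrary invertible lower-right block, which forces $\hat{\vec{\Sigma}}$ there to be $\opdiag(c_r \vec{I}_r,\vec{0})$. At that point positive definiteness would be violated, so I would state the theorem for $n\ge p$, where $\vec{S}$ is almost surely invertible, and note this assumption explicitly.
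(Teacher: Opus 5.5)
Your proof is correct. The paper does not prove Theorem~\ref{prop122}; it quotes it as a known result. Your argument is the standard one, and its core (Step 2) is the same device the paper uses in Appendix~\ref{sec:proofs} for Proposition~\ref{prop12}: conjugating by the sign matrices $\vec{A}_i$ kills the off-diagonal entries, and conjugating by the column swaps $\vec{B}_{i,j}$ equalizes the diagonal entries. What $GL(p)$ adds over $\opO(p)$ is your Step 1. Transitivity on the positive-definite cone reduces the whole estimator to the single matrix $\hat{\vec{\Sigma}}(\vec{I})$. Because that point's stabilizer is all of $\opO(p)$, you get a scalar rather than a data-dependent $\opdiag(d_1,\dots,d_p)$.

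Two remarks are worth making explicit.

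First, restricting to estimators that depend on $\vec{X}$ only through $\vec{S}$ is a genuine hypothesis of the statement, not just a convenient plan. For any fixed positive-definite $n \times n$ matrix $\vec{D}$, the estimator $\vec{X}^\top \vec{D} \vec{X}$ satisfies $(\vec{X}\vec{A})^\top\vec{D}(\vec{X}\vec{A}) = \vec{A}^\top(\vec{X}^\top\vec{D}\vec{X})\vec{A}$. It is therefore $GL(p)$-equivariant in the sense of the paper's definition, yet it is not of the form $c\vec{S}$ unless $\vec{D}$ is a multiple of the identity.

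Second, in the singular case your stabilizer computation gives $\hat{\vec{\Sigma}}(\opdiag(\vec{I}_r,\vec{0})) = c_r \opdiag(\vec{I}_r,\vec{0})$. This transports back to $\hat{\vec{\Sigma}}(\vec{S}) = c_r\vec{S}$ for every $\vec{S}$ of rank $r$. So the form $c\vec{S}$ survives when $n<p$; what fails is only positive definiteness, meaning no $GL(p)$-equivariant estimator takes values in the parameter space.
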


%%%%%%%%%%%%%%%%%%%%%%%%%%%%%%%%%%%%%%%%%%%%%%%%%%%%%%%%%%
% 				Section 3
%%%%%%%%%%%%%%%%%%%%%%%%%%%%%%%%%%%%%%%%%%%%%%%%%%%%%%%%%%

\bigskip
\section{Equivariance to multiplication by orthogonal matrices}
\label{sec:equivariance-to-O(p)}
%
%We establish invariance of covariance matrix estimation with respect to the group of orthogonal matrices.
%The material in this section is largely taken from the Berger (2006) monograph and a James and Stein (1992) -- paper is 
%compiled in a %volume from 1992 but I think it  was first presented in 1961.
Expressing the covariance matrix $\vec{\Sigma} =   \bar{g}_{\vec{\Gamma}^\top} ( \vec{\Lambda})$
 reveals that $\opO(p)$ is transitive on the set of all covariance matrices with eigenvalue vector $\vec{\lambda}$.
 We denote the  eigen-decomposition of the sample covariance matrix $\vec{S}   = \vec{V} \vec{L} \vec{V}^\top$.
Here,  $\vec{L}$ is the diagonal matrix of ordered sample eigenvalues,
$\vec{l} = ( l_1 \ge  \cdots \ge  l_p)$,
and $\vec{V} = ( \vec{V}_1, \cdots, \vec{V}_p)$ is the corresponding orthogonal sample eigenvector matrix.
The following result is known for the $n \ge p$ case, in which $\vec{S}$ is nonsingular with $p$ positive sample eigenvalues.
For the $n<p$ case,  in which $\vec{S}$ is singular with $l_i = 0$ for $i = n+1, \dots, p$, and only the eigenvectors corresponding to nonzero sample eigenvalues are uniquely specified,
the non-uniqueness of the eigenvectors corresponding to $l_i = 0$ results in equal diagonal elements for  $\hat{\vec{\Sigma}} (\vec{L})$.

\begin{proposition} \label{prop12}
If a covariance matrix estimator is equivariant  to $\opO(p)$  then 
$\hat{\vec{\Sigma}} ( \vec{S})  = \vec{V}  \hat{\vec{\Sigma}} (\vec{L} )  \vec{V}^\top$
with  $\hat{\vec{\Sigma}} (\vec{L}) = \opdiag ( d_1, \cdots,  d_p)$.
If  $n \le p-2$ then $d_{n+1} = \cdots = d_p$.
 \end{proposition}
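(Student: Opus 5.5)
Since $\vec{S}$ is a sufficient statistic, I regard the estimator as a function of $\vec{S}$. Under the group action of Theorem \ref{lemm1a} with $\vec{A}$ orthogonal, $g_{\vec{A}}$ sends $\vec{S}$ to $\vec{A}^\top \vec{S}\vec{A}$, and equivariance reads
\[
\hat{\vec{\Sigma}}(\vec{A}^\top \vec{S}\vec{A}) = \vec{A}^\top \hat{\vec{\Sigma}}(\vec{S})\vec{A} \quad \text{for all } \vec{A}\in\opO(p).
\]
The plan has three steps: reduce to the diagonal matrix $\vec{L}$, show that $\hat{\vec{\Sigma}}(\vec{L})$ is diagonal, and then handle the degenerate zero eigenvalues.

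\textbf{Reduction to $\vec{L}$.} Take $\vec{A}=\vec{V}$. Then $\vec{V}^\top\vec{S}\vec{V}=\vec{L}$, so $\hat{\vec{\Sigma}}(\vec{L}) = \vec{V}^\top\hat{\vec{\Sigma}}(\vec{S})\vec{V}$. Rearranging gives $\hat{\vec{\Sigma}}(\vec{S})=\vec{V}\hat{\vec{\Sigma}}(\vec{L})\vec{V}^\top$.

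\textbf{Diagonality.} Apply equivariance with $\vec{A}=\vec{D}_i$, the diagonal sign matrix that has $-1$ in position $i$ and $+1$ elsewhere. Since $\vec{D}_i^\top\vec{L}\vec{D}_i=\vec{L}$, we get $\hat{\vec{\Sigma}}(\vec{L})=\vec{D}_i\hat{\vec{\Sigma}}(\vec{L})\vec{D}_i$. Conjugation by $\vec{D}_i$ flips the sign of every off-diagonal entry in row $i$ and column $i$. Those entries therefore vanish, and ranging over all $i$ shows that $\hat{\vec{\Sigma}}(\vec{L})=\opdiag(d_1,\dots,d_p)$.

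\textbf{Equal $d_i$ on the null space.} When $n<p$, we have $l_{n+1}=\dots=l_p=0$, so $\vec{L}$ is invariant under conjugation by any block-diagonal $\vec{A}=\opdiag(\vec{I}_n,\vec{Q})$ with $\vec{Q}\in\opO(p-n)$. Equivariance then gives $\vec{Q}^\top\opdiag(d_{n+1},\dots,d_p)\vec{Q}=\opdiag(d_{n+1},\dots,d_p)$ for all such $\vec{Q}$. Choosing $\vec{Q}$ to be the permutation swapping coordinates $j$ and $k$ yields $d_j=d_k$, so the block is a scalar multiple of the identity. The hypothesis $n\le p-2$ is needed only so that the block has size at least two and the statement is nontrivial; for $n=p-1$ the block is $1\times1$.

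\textbf{Expected obstacle.} I expect the main difficulty to be well-definedness, not any of the algebraic steps above. $\vec{V}$ is not unique: columns can flip sign, and when sample eigenvalues tie, or on the null space when $n<p$, $\vec{V}$ can be rotated within eigenspaces. I would handle this by noting that the two ambiguities are exactly the transformations already absorbed above. Sign flips are matched by the diagonality step, and rotations within a null eigenspace are matched by the scalar-block step. Hence $\vec{V}\hat{\vec{\Sigma}}(\vec{L})\vec{V}^\top$ does not depend on which $\vec{V}$ is chosen. The same permutation argument gives equal $d_i$ within any tied block of positive eigenvalues, although the proposition only asserts this for the zero block, which ties almost surely.
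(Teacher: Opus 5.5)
Your proof is correct and matches the paper's. Conjugation by the sign matrices $\vec{A}_i$ kills the off-diagonal entries of $\hat{\vec{\Sigma}}(\vec{L})$, and conjugation by the coordinate swaps $\vec{B}_{i,j}$ with $n<i<j\le p$ forces $d_i=d_j$; your step with $\vec{A}=\vec{V}$ just spells out the reduction to $\vec{L}$ that the paper leaves implicit.

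Two minor remarks:
\begin{itemize}
\item Sufficiency of $\vec{S}$ does not by itself make an equivariant estimator a function of $\vec{S}$. That is an assumption carried by the notation $\hat{\vec{\Sigma}}(\vec{S})$, exactly as in the paper.
\item The non-uniqueness of $\vec{V}$ is not an obstacle. Your reduction identity already holds for every orthogonal $\vec{V}$ with $\vec{V}^\top\vec{S}\vec{V}=\vec{L}$, so no separate well-definedness argument is needed.
\end{itemize}
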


Note that the $N ( \vec{0}, \vec{\Sigma})$ density may also be specified by the 
precision matrix $\vec{\Omega} = \vec{\Sigma}^{-1}$, with eigen-decomposition $\vec{\Omega} = \vec{\Gamma} \vec{\Lambda}^{-1} \vec{\Gamma}^\top$
for $\vec{\Lambda}^{-1} = \opdiag( 1/\lambda_1 \le  \cdots \le 1/\lambda_p)$.
Lemma \ref{lemm2a} establishes equivariance to $\opO(p)$
of applying the transformation $\bar{g}_{\vec{R}}$ to the precision matrix, and equivariance to $\opO(p)$
of the sample precision matrix $\vec{S}^{-1} ( \vec{X})$.

\begin{lemma} \label{lemm2a}  
For $\vec{R} \in \opO(p)$,
(a) $\bar{g}_{\vec{R} }( \vec{\Sigma}^{-1}) = (\bar{g}_{\vec{R} }( \vec{\Sigma}))^{-1}$,
and (b) for $n \ge p$,  $\vec{S}^{-1} ( g_{\vec{R}} ( \vec{X})) = \bar{g}_{\vec{R}} ( \vec{S}^{-1} ( \vec{X}))$.
\end{lemma}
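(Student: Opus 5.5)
Both parts reduce to direct matrix identities, using only $\vec{R}^\top \vec{R} = \vec{R}\vec{R}^\top = I$ for $\vec{R} \in \opO(p)$ and the definitions $g_{\vec{R}}(\vec{X}) = \vec{X}\vec{R}$ and $\bar{g}_{\vec{R}}(\vec{\Sigma}) = \vec{R}^\top \vec{\Sigma} \vec{R}$.

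For part (a), the key observation is that for orthogonal $\vec{R}$ we have $\vec{R}^{-1} = \vec{R}^\top$, so the transformation $\bar{g}_{\vec{R}}$ is a similarity (conjugation) and therefore commutes with matrix inversion. Concretely,
\[
(\bar{g}_{\vec{R}}(\vec{\Sigma}))^{-1} = (\vec{R}^\top \vec{\Sigma} \vec{R})^{-1} = \vec{R}^{-1} \vec{\Sigma}^{-1} (\vec{R}^\top)^{-1} = \vec{R}^\top \vec{\Sigma}^{-1} \vec{R} = \bar{g}_{\vec{R}}(\vec{\Sigma}^{-1}).
\]
Invertibility is automatic because $\vec{\Sigma}$ is positive definite and $\vec{R}$ is nonsingular. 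It is worth noting that this is exactly where orthogonality is needed: for a general $\vec{A} \in GL(p)$ the inverse of $\vec{A}^\top \vec{\Sigma} \vec{A}$ is $\vec{A}^{-1} \vec{\Sigma}^{-1} \vec{A}^{-\top}$, which is not $\bar{g}_{\vec{A}}(\vec{\Sigma}^{-1})$. This explains why the lemma is stated for $\opO(p)$.

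For part (b), I would first use the equivariance of $\vec{S}$ from Theorem \ref{lemm1a}(b), or simply compute it directly:
\[
\vec{S}(g_{\vec{R}}(\vec{X})) = (\vec{X}\vec{R})^\top (\vec{X}\vec{R})/n = \vec{R}^\top \vec{S}(\vec{X}) \vec{R} = \bar{g}_{\vec{R}}(\vec{S}(\vec{X})).
\]
The assumption $n \ge p$ is needed so that $\vec{S}$ is nonsingular almost surely, which is the case when $\vec{\Sigma}$ is positive definite. Under that assumption $\vec{S}^{-1}$ is defined, and applying part (a) with $\vec{S}(\vec{X})$ in place of $\vec{\Sigma}$ gives
\[
\vec{S}^{-1}(g_{\vec{R}}(\vec{X})) = (\bar{g}_{\vec{R}}(\vec{S}(\vec{X})))^{-1} = \bar{g}_{\vec{R}}(\vec{S}^{-1}(\vec{X})).
\]

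There is no real obstacle here. The only point requiring care is the almost-sure invertibility of $\vec{S}$ when $n \ge p$. If desired, this can be justified by noting that $\vec{X}$ has full column rank with probability one, since its rows are i.i.d.\ draws from a nondegenerate Gaussian.
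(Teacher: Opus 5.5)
Your proof is correct and matches the paper's argument: for (a) you invert $\vec{R}^\top\vec{\Sigma}\vec{R}$ directly using $\vec{R}^{-1}=\vec{R}^\top$, whereas the paper checks that $\bar{g}_{\vec{R}}(\vec{\Sigma}^{-1})\,\bar{g}_{\vec{R}}(\vec{\Sigma}) = \vec{I}$; this is the same computation. For (b) you combine equivariance of $\vec{S}$ with (a) exactly as the paper does, and your remarks on the almost-sure invertibility of $\vec{S}$ for $n \ge p$ and on why orthogonality is needed are correct and a bit more careful than the paper.
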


\begin{corollary} \label{cor44}
If a precision matrix estimator is equivariant  to $\opO(p)$  then 
$\hat{\vec{\Omega}} ( \vec{S})  = \vec{V}  \hat{\vec{\Omega}} (\vec{L} )  \vec{V}^\top$
with  $\hat{\vec{\Omega}} (\vec{L}) = \opdiag ( d_1, \cdots,  d_p)$.
If  $n \le p-2$ then $d_{n+1} = \cdots = d_p$.
\end{corollary}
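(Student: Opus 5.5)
The plan is to reduce Corollary \ref{cor44} to Proposition \ref{prop12} by way of Lemma \ref{lemm2a}. The key point is that an $\opO(p)$-equivariant precision matrix estimator obeys exactly the same functional equation as an $\opO(p)$-equivariant covariance matrix estimator. The transformation acting on the precision parameter is the same $\bar{g}_{\vec{R}}$ that acts on the covariance parameter.

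By Lemma \ref{lemm2a}(a), $\bar{g}_{\vec{R}}(\vec{\Omega}) = (\bar{g}_{\vec{R}}(\vec{\Sigma}))^{-1}$. So when $\vec{X}\mapsto \vec{X}\vec{R}$, the induced action on $\vec{\Omega}$ is again $\vec{\Omega}\mapsto \vec{R}^\top\vec{\Omega}\vec{R}$. Equivariance of $\hat{\vec{\Omega}}$ therefore means $\hat{\vec{\Omega}}(g_{\vec{R}}(\vec{X})) = \vec{R}^\top \hat{\vec{\Omega}}(\vec{X})\vec{R}$ for all $\vec{R}\in\opO(p)$. This is literally the defining property in Proposition \ref{prop12}, with $\hat{\vec{\Omega}}$ in place of $\hat{\vec{\Sigma}}$.

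Proposition \ref{prop12} uses no property of the estimator beyond this identity, together with the fact that it is a function of $\vec{S}$ (or of $\vec{X}$ through $\vec{S}$). Hence its conclusions transfer verbatim. First, take $\vec{R}=\vec{V}$, so that $g_{\vec{V}}$ maps $\vec{S}$ to $\vec{L}$; this gives $\hat{\vec{\Omega}}(\vec{S}) = \vec{V}\hat{\vec{\Omega}}(\vec{L})\vec{V}^\top$. Next, take $\vec{R}$ to be a diagonal sign matrix $\opdiag(\pm1)$, which fixes $\vec{L}$. Equivariance then forces $\hat{\vec{\Omega}}(\vec{L})_{ij} = -\hat{\vec{\Omega}}(\vec{L})_{ij}$ for $i\ne j$, so $\hat{\vec{\Omega}}(\vec{L})$ is diagonal. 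Finally, when $n\le p-2$ the null block has dimension $p-n\ge 2$. Any $\vec{R}$ equal to the identity on the first $n$ coordinates and an arbitrary orthogonal matrix on the last $p-n$ fixes $\vec{L}$. The corresponding diagonal block of $\hat{\vec{\Omega}}(\vec{L})$ must therefore commute with all of $\opO(p-n)$, and so it is a scalar multiple of the identity. This gives $d_{n+1}=\cdots=d_p$.

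The main subtlety is the role of Lemma \ref{lemm2a}(b). The natural estimator $\vec{S}^{-1}$ exists only for $n\ge p$, and (b) shows that it belongs to the class considered. The corollary itself, however, concerns a general equivariant $\hat{\vec{\Omega}}$, which need not be built from $\vec{S}^{-1}$. So I would rely only on part (a) to identify the group action on precision matrices, and invoke Proposition \ref{prop12} directly. This keeps the $n\le p-2$ case valid even though $\vec{S}$ is singular there.
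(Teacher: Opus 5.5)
Your proposal is correct and follows the paper's own route. The paper likewise proves the corollary by noting that the proof of Proposition \ref{prop12} uses only the $\opO(p)$-equivariance identity, so it applies verbatim to $\hat{\vec{\Omega}}$. Your one deviation is to use the full block $\opO(p-n)$ rather than the coordinate transpositions $\vec{B}_{i,j}$ to equalize $d_{n+1},\dots,d_p$, which is a harmless variant of the same argument.
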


\begin{proof}
We may extend Proposition \ref{prop12} to precision matrix estimators because the only condition needed from the estimator in 
its proof was equivariance under $\opO(p)$.
\end{proof}

%\textcolor{red}{ ADD PROPOSITION listing commonly used covariance matrix estimators that are equivariant to $\opO(p)$.}

%MLE, Stein's shrinkage estimators, Haff's EB estimator (with isotropic shrinkage target), Ledoit and Wolf linear and non-linear shrinkage estimators, Donoho-Gavish-Johnstone scalar shrinkage .. write in general form ..

\begin{lemma}
    Let $\vec{S} = \vec{V} \vec{L} \vec{V}^\top$ be the eigendecomposition of the sample covariance matrix. Any estimator of the form $\hat{\vec{\Sigma}}(\vec{S}) = \vec{V} \hat{\vec{\Lambda}}(\vec{L}) \vec{V}^\top$, where $\hat{\vec{\Lambda}}(\vec{L}) = \opdiag(\hat{\lambda}_1(\vec{L}),\dots,\hat{\lambda}_p(\vec{L}))$ is a diagonal matrix, is orthogonally equivariant.
\end{lemma}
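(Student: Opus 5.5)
We need to verify that for every $\vec{R} \in \opO(p)$, $\hat{\vec{\Sigma}}(\vec{S}(g_{\vec{R}}(\vec{X}))) = \bar{g}_{\vec{R}}(\hat{\vec{\Sigma}}(\vec{S}(\vec{X})))$, i.e.\ $\hat{\vec{\Sigma}}(\vec{R}^\top \vec{S} \vec{R}) = \vec{R}^\top \hat{\vec{\Sigma}}(\vec{S}) \vec{R}$. By Theorem \ref{lemm1a}(b), the sample covariance of $\vec{X}\vec{R}$ is $\vec{R}^\top \vec{S}\vec{R}$, so the whole argument reduces to how the eigendecomposition transforms under orthogonal conjugation.

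First, from $\vec{S} = \vec{V}\vec{L}\vec{V}^\top$ we get
\[
\vec{R}^\top \vec{S}\vec{R} = (\vec{R}^\top\vec{V})\,\vec{L}\,(\vec{R}^\top\vec{V})^\top ,
\]
and $\vec{R}^\top\vec{V}$ is orthogonal. Hence this is a valid eigendecomposition of the transformed sample covariance. It has the \emph{same} ordered eigenvalue matrix $\vec{L}$, since eigenvalues are invariant under similarity, and its eigenvector matrix is $\vec{R}^\top\vec{V}$. Because $\hat{\vec{\Lambda}}$ depends on the data only through $\vec{L}$, the diagonal matrix $\hat{\vec{\Lambda}}(\vec{L})$ is unchanged. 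Then
\[
\hat{\vec{\Sigma}}(\vec{R}^\top\vec{S}\vec{R}) = \vec{R}^\top\vec{V}\,\hat{\vec{\Lambda}}(\vec{L})\,\vec{V}^\top\vec{R} = \bar{g}_{\vec{R}}(\hat{\vec{\Sigma}}(\vec{S})),
\]
which is the required identity.

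The main obstacle is well-definedness. The eigenvector matrix is not unique: columns can be sign-flipped, and within a repeated eigenvalue, or within the null space when $n<p$, any orthonormal basis may be used. We must check that $\vec{V}\hat{\vec{\Lambda}}(\vec{L})\vec{V}^\top$ does not depend on the choice of $\vec{V}$. Otherwise the step "use $\vec{R}^\top\vec{V}$ as the eigenvector matrix" could give a different value from whatever $\vec{V}'$ the estimator actually selects for $\vec{R}^\top\vec{S}\vec{R}$.

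To handle this, I would write any two eigenvector matrices as $\vec{V}' = \vec{V}\vec{Q}$, where $\vec{Q}$ is block-diagonal orthogonal with blocks indexed by the groups of tied eigenvalues in $\vec{L}$. The product $\vec{V}\vec{Q}\hat{\vec{\Lambda}}\vec{Q}^\top\vec{V}^\top$ equals $\vec{V}\hat{\vec{\Lambda}}\vec{V}^\top$ exactly when $\vec{Q}$ commutes with $\hat{\vec{\Lambda}}(\vec{L})$. This holds provided $\hat{\lambda}_i(\vec{L}) = \hat{\lambda}_j(\vec{L})$ whenever $l_i = l_j$. In particular it holds for the zero eigenvalues when $n<p$, which is consistent with Proposition \ref{prop12}. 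Sign flips are always harmless, since they are diagonal and commute with any diagonal $\hat{\vec{\Lambda}}$.

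I would therefore state this tie condition explicitly; it is generically automatic, since ties occur with probability zero when $n \ge p$. Alternatively, one can adopt the convention that the estimator is defined through a fixed selection of $\vec{V}$ and verify equivariance almost surely. With well-definedness settled, the computation above completes the proof.
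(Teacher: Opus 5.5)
Your proof is correct and follows the paper's argument: the paper likewise observes that $(\vec{\Gamma}\vec{V})\vec{L}(\vec{\Gamma}\vec{V})^\top$ is an eigendecomposition of $\vec{\Gamma}\vec{S}\vec{\Gamma}^\top$ for arbitrary $\vec{\Gamma}\in\opO(p)$ (your $\vec{R}^\top$), and from this reads off $\hat{\vec{\Sigma}}(\vec{\Gamma}\vec{S}\vec{\Gamma}^\top)=\vec{\Gamma}\hat{\vec{\Sigma}}(\vec{S})\vec{\Gamma}^\top$. Your well-definedness check is a point the paper's one-line proof skips, and it is worth keeping: when $n\le p-2$ the condition $\hat{\lambda}_i(\vec{L})=\hat{\lambda}_j(\vec{L})$ on the zero block is a genuine extra requirement, not automatic, and it is exactly the constraint $d_{n+1}=\cdots=d_p$ of Proposition~\ref{prop12}, so your fixed-selection, almost-sure alternative covers only $n\ge p-1$.
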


\begin{proof}
%The statement follows from Theorem 26 in Peter Hoff's notes, whose proof we reproduce for completeness. 
Let $\vec{\Gamma} \in \opO(p)$ be an arbitrary orthonormal matrix. The eigendecomposition of $\vec{\Gamma}\vec{S}\vec{\Gamma}^\top$ is $(\vec{\Gamma}\vec{V})\vec{L}(\vec{\Gamma}\vec{V})^\top$, which implies $\hat{\vec{\Sigma}}(\vec{\Gamma}\vec{S}\vec{\Gamma}^\top) = (\vec{\Gamma}\vec{V})\hat{\vec{\Lambda}}(\vec{L})(\vec{\Gamma}\vec{V})^\top = \vec{\Gamma} (\vec{V} \hat{\vec{\Lambda}}(\vec{L})\vec{V}^\top)\vec{\Gamma}^\top = \vec{\Gamma} \hat{\vec{\Sigma}}(\vec{S}) \vec{\Gamma}^\top$.
\end{proof}

\begin{corollary}
\label{cor:common-invariant-estimators}
    The following covariance estimators are orthogonally equivariant:
    \begin{enumerate}
        \item MLE, where $\hat{\lambda}_j(\vec{L}) = l_j$.
        \item \citet{Stein1975} eigenvalue shrinkage estimator $\hat{\lambda}_j(\vec{L}) = l_j / (n+p+1-2j)$, and adaptive eigenvalue shrinkage estimator
        $\hat{\lambda}_j(\vec{L}) = l_j / \big(n-p+1+2l_j\sum_{i\neq j}(l_j-l_i)^{-1} \big)_+$.
        \item \citet{Haff1980} empirical Bayes estimator $\hat{\lambda}_j(\vec{L}) = \frac{1}{n+p+1}\Big(l_j + \frac{p-1}{(n-p+3)\sum_{i=1}^p l_i^{-1}} \Big)$.
        \item \citet{LedoitWolf2004} linear shrinkage estimator $\hat{\lambda}_j(\vec{L}) = \hat{w}\hat{c} + (1-\hat{w})l_j/n$, where $\hat{w},\hat{c}$ are functions of $\vec{L}$.
        \item  \citet{LedoitWolf2020}  non-linear shrinkage estimator $\hat{\lambda}_j(\vec{L}) = \tilde{d}_{p,n}(l_j;F_p)$, where $\tilde{d}_{p,n}(\cdot;F_p)$ is a non-linear function parameterized by the empirical distribution of the sample eigenvalues $F_p = \frac{1}{p}\sum_{i=1}^p \delta_{l_i}$ and the ratio $p/n$.
  %      \item The Gavish and Donoho (2014) hard thresholding estimator, 
  %      which sets sample eigenvalues below an asymptotically AMSE-optimal threshold to zero, 
  %      is of the form  $\hat{\lambda}_j(\vec{L}) = \tilde{h}(l_j)$ for a scalar function $\tilde{h}$, 
  %      and is therefore orthogonally equivariant by Lemma 3.
 % \dy{I edited out previous item} \dx{DX: Could you point me to the section in the 2014 paper that discusses covariance matrix %estimation?} 
        \item \cite*{donoho2018optimal} estimator $\hat{\lambda}_j(\vec{L})=\eta^*(l_j, \gamma)$, where $\eta^*$ is a non-linear shrinkage function depending on the loss function and the limiting aspect ratio $n/p \to \gamma \in (0,1]$.
 \end{enumerate}
\end{corollary}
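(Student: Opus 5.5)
The corollary follows by applying the preceding lemma to each estimator in turn. That lemma says any estimator of the form $\vec{V}\hat{\vec{\Lambda}}(\vec{L})\vec{V}^\top$ is orthogonally equivariant, provided $\hat{\vec{\Lambda}}(\vec{L})$ is diagonal and depends on the data only through $\vec{L}$. So for each of the six estimators I will check two things. First, the estimator is $\vec{V}\,\opdiag(\hat\lambda_1,\dots,\hat\lambda_p)\vec{V}^\top$ with the same sample eigenvectors $\vec{V}$. Second, each $\hat\lambda_j$ is a function of the sample eigenvalues $\vec{L}$ alone (together with the fixed constants $n,p$). Given both, the lemma's one-line argument applies verbatim: $\vec{\Gamma}\vec{S}\vec{\Gamma}^\top$ has the same $\vec{L}$ and eigenvectors $\vec{\Gamma}\vec{V}$.

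The MLE, both Stein estimators, Haff's estimator and the Donoho--Gavish--Johnstone estimator need only inspection of the displayed formulas. Each $\hat\lambda_j$ is built from $l_j$, the other $l_i$, sums such as $\sum_i l_i^{-1}$ or $\sum_{i\ne j}(l_j-l_i)^{-1}$, and constants depending on $n,p$ or the aspect ratio $\gamma$. All of these are invariant under $\vec{S}\mapsto\vec{\Gamma}\vec{S}\vec{\Gamma}^\top$ because the spectrum is unchanged. For Haff's estimator, which is originally written as a matrix combination of $\vec{S}$ and a scalar multiple of the identity, I will also note that $c_1\vec{S}+c_2\vec{I}=\vec{V}(c_1\vec{L}+c_2\vec{I})\vec{V}^\top$. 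Its coefficient involves $\optr(\vec{S}^{-1})=\sum_i l_i^{-1}$, a spectral function.

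For Ledoit--Wolf linear shrinkage, the estimator is a convex combination of the scaled sample covariance and $\hat c\,\vec{I}$. Rewriting it gives $\vec{V}(\hat w\hat c\vec{I}+(1-\hat w)\vec{L}/n)\vec{V}^\top$. The substantive point is that $\hat c$ and $\hat w$ depend only on $\vec{L}$. The target scale $\hat c$ is proportional to $\optr(\vec{S})$, and the dispersion term $\|\vec{S}-\hat c\vec{I}\|^2$ is invariant by Theorem \ref{lemm1a}(c). However, the Ledoit--Wolf estimate of the variance term $\sum_i\|\vec{x}_i^\top\vec{x}_i-\vec{S}\|^2$ uses the individual rows of $\vec{X}$, not just $\vec{L}$. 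I expect this to be the main obstacle. I will handle it by showing directly that this statistic is invariant under $\vec{X}\mapsto\vec{X}\vec{\Gamma}^\top$, using Frobenius invariance termwise. Then $\hat w$ is invariant, and the equivariance argument goes through by the same calculation as the lemma, even though the statistic is not literally a function of $\vec{L}$. The nonlinear Ledoit--Wolf estimator depends on $F_p$ and $p/n$ only, so it is immediate.
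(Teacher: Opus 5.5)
Your proposal is correct and follows the paper's route: the corollary is obtained by applying the preceding lemma estimator by estimator, since each listed rule has the form $\vec{V}\hat{\vec{\Lambda}}(\vec{L})\vec{V}^\top$. Your extra step for \citet{LedoitWolf2004} is a genuine sharpening rather than a detour. The paper simply asserts that $\hat w$ is a function of $\vec{L}$, but the variance estimate contains $\sum_i \|\vec{x}_i\|^4$, which depends on the left singular vectors of $\vec{X}$ and not only on $\vec{L}$. So checking directly that $\sum_i\|\vec{x}_i^\top\vec{x}_i-\vec{S}\|^2$ is invariant under $\vec{X}\mapsto\vec{X}\vec{\Gamma}^\top$ is exactly what that case needs, and Definition \ref{def-inv-dens}(b) permits it because equivariance there is stated in terms of the data.
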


%Estimators that are not orthogonally equivariant include
On the other hand, Haff's empirical Bayes estimator 
with non-isotropic prior \citep{Haff1980}, the graphical Lasso precision matrix estimator \citep{friedman2008sparse}, and the banded sample covariance estimator \citep{bickel2008regularized} %are not dominated by the oracle eigenvalues Bayes rules in general 
all use coordinate-level 
structure of $\vec{\Sigma}$ or $\vec{\Omega}$, for example sparsity or bandedness in a particular basis, and are thus not equivariant to orthogonal rotations. 

In the next section, we derive the Bayes estimator for $\vec{\Sigma}$ in an oracle model where the true eigenvalues of $\vec{\Sigma}$ are known and fixed and the eigenvector matrix $\vec{\Gamma}$ is drawn from the Haar measure on $\opO(p)$. This (oracle) decision rule minimizes the risk in the class of decision functions that are equivariant under $\opO(p)$; it is an oracle in the sense that it requires the true eigenvalue vector $\vec{\lambda}$. Section \ref{sec:hbayes} discusses methods for approximating the oracle Bayes rule using estimates for $\vec{\lambda}$ based on hierarchical Bayes modeling.

%Since these estimators are defined relative to a specific coordinate system, they are not rotationally equivariant and are not in general dominated by the oracle eigenvalues Bayes rules.

%\textcolor{red}{ADD in next section a Proposition that specifies formula for $\hat{\vec{\Omega}} (\vec{L} )$ for $\hat{\vec{\Sigma}} (\vec{L} )$ is oracle Bayes rule for equivariant loss.}    

% \begin{proposition}
%     If the loss function is invariant, the likelihood is invariant, and given $\vec{D}=\opdiag(\delta_1,\dots,\delta_p)$ positive eigenvalues, and the prior for $\vec{M}$ is: take $\vec{D} \mapsto \vec{\Gamma}\vec{D} \vec{\Gamma}^\top$ where $\vec{\Gamma} \sim \pi^{\textnormal{Haar}}$, and $\vec{S}=\vec{V}\vec{L}\vec{V}^\top$. 
% \end{proposition}

%%%%%%%%%%%%%%%%%%%%%%%%%%%%%%%%%%%%%%%%%%%%%%
% 				Section 4
%%%%%%%%%%%%%%%%%%%%%%%%%%%%%%%%%%%%%%%%%%%%%%

\bigskip
%\section{Oracle model modelling}  
\section{Oracle eigenvalue model}
\label{sec:oracle-eigenvalues-bayes-rules}

The oracle  model is a generative model for the data that assumes that $\vec{X} \sim f ( \vec{x} | \vec{\Sigma})$ 
for $\vec{\Sigma} = \vec{\Gamma} \vec{\Lambda} \vec{\Gamma}^\top$ 
with  known fixed eigenvalue vector $\vec{\lambda}$ and eigenvector matrix $\vec{\Gamma}$ sampled from  $\pi^{\opHaar}$,
the Haar measure on $\opO(p)$.
The conditional distribution of $\vec{\Gamma}$ given $\vec{X}  = \vec{x}$ for the oracle model is
 \begin{equation}  \label{eq34}
\pi^{\opHaar} ( \vec{\Gamma} | \vec{x}, \vec{\lambda}) 
= \frac{ f  (   \vec{x} | \vec{\Gamma} , \vec{\lambda} ) \pi^{\opHaar} ( \vec{\Gamma} |  \vec{\lambda})}{ f^{\opHaar} (\vec{x} | \vec{\lambda})}
= \frac{ f  (   \vec{x} | \vec{\Sigma} ) \pi^{\opHaar} ( \vec{\Gamma} )}{ f^{\opHaar} (\vec{x} | \vec{\lambda})},
\end{equation}
where  $f^{\opHaar} (\vec{x} | \vec{\lambda})$ is the marginal distribution of $\vec{X}$ in the oracle model.
As noted in \citet{hoff2009simulation},  the use of the uniform  prior distribution implies that $\vec{\Gamma}$ has  
a matrix Bingham posterior distribution in the oracle model,
 \begin{equation}  \label{eq34a}
\pi^{\opHaar} ( \vec{\Gamma} | \vec{x}, \vec{\lambda})  \propto \exp ( \optr( -  \vec{\Lambda}^{-1} \vec{\Gamma}^\top  ( n \vec{S} ) \vec{\Gamma} / 2)),
\end{equation}
which is proportional to the normal likelihood (\ref{norm-lik}).
According to Theorem 5 in \cite{jupp1979maximum},
%Jupp and Mardia (1979), 
the sample eigenvector matrix $\vec{V}$ is a mode of $\pi^{\opHaar} ( \vec{\Gamma} | \vec{x}, \vec{\lambda})$.

\medskip
The next two lemmas identify symmetries and therefore multiple modes 
in $\pi^{\opHaar} ( \vec{\Gamma} | \vec{x}, \vec{\lambda})$ 
if the components of $\vec{\lambda}$ are non-distinct
and thus $\exists \vec{B} \in \opO(p) \setminus \{\vec{I}\}$ 
with  $\vec{\Lambda} = \vec{B} \vec{\Lambda} \vec{B}^\top$,
or if $n < p$ and then $l_i = 0$ for $i >n$ and thus 
$\exists \vec{B} \in \opO(p) \setminus \{\vec{I}\}$  with
$\vec{L} = \vec{B} \vec{L} \vec{B}^\top$.

\begin{lemma} \label{lemm4a}
For $\vec{B} \in \opO(p)$ such that $\vec{\Lambda} = \vec{B} \vec{\Lambda} \vec{B}^\top$,
$\pi^{\opHaar} (  \vec{\Gamma} \vec{B} | \vec{x}, \vec{\lambda}) =\pi^{\opHaar} ( \vec{\Gamma} | \vec{x}, \vec{\lambda})$.
\end{lemma}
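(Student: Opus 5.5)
The plan is to compare the two posterior densities directly through the matrix Bingham form (\ref{eq34a}). I will also use the fact that the Haar measure on $\opO(p)$ is invariant under right multiplication.

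By (\ref{eq34}), the posterior at $\vec{\Gamma}$ is proportional to $f(\vec{x} \mid \vec{\Gamma}\vec{\Lambda}\vec{\Gamma}^\top)\,\pi^{\opHaar}(\vec{\Gamma})$, and the normalizing constant $f^{\opHaar}(\vec{x}\mid\vec{\lambda})$ does not depend on $\vec{\Gamma}$. It therefore suffices to prove two things:
\begin{enumerate}
\item[(i)] $\pi^{\opHaar}(\vec{\Gamma}\vec{B}) = \pi^{\opHaar}(\vec{\Gamma})$;
\item[(ii)] the covariance matrix generated by $\vec{\Gamma}\vec{B}$ equals the one generated by $\vec{\Gamma}$.
\end{enumerate}

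Fact (i) is the right invariance of the Haar measure on the compact group $\opO(p)$, which is both left and right invariant. Since the density is constant with respect to Haar measure, the prior weight is unchanged.

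For (ii), I would compute directly:
\[
(\vec{\Gamma}\vec{B})\vec{\Lambda}(\vec{\Gamma}\vec{B})^\top = \vec{\Gamma}(\vec{B}\vec{\Lambda}\vec{B}^\top)\vec{\Gamma}^\top = \vec{\Gamma}\vec{\Lambda}\vec{\Gamma}^\top = \vec{\Sigma},
\]
using the hypothesis $\vec{\Lambda} = \vec{B}\vec{\Lambda}\vec{B}^\top$. Hence the likelihood $f(\vec{x}\mid\vec{\Sigma})$ in (\ref{norm-lik}) is identical at $\vec{\Gamma}$ and at $\vec{\Gamma}\vec{B}$.

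As a cross-check in the form (\ref{eq34a}), I would first invert the hypothesis. Taking inverses in $\vec{\Lambda} = \vec{B}\vec{\Lambda}\vec{B}^\top$ and using $\vec{B}^{-1} = \vec{B}^\top$ gives $\vec{\Lambda}^{-1} = \vec{B}\vec{\Lambda}^{-1}\vec{B}^\top$. Then, by cyclicity of the trace,
\[
\optr\bigl(\vec{\Lambda}^{-1}\vec{B}^\top\vec{\Gamma}^\top(n\vec{S})\vec{\Gamma}\vec{B}\bigr) = \optr\bigl(\vec{B}\vec{\Lambda}^{-1}\vec{B}^\top\vec{\Gamma}^\top(n\vec{S})\vec{\Gamma}\bigr) = \optr\bigl(\vec{\Lambda}^{-1}\vec{\Gamma}^\top(n\vec{S})\vec{\Gamma}\bigr),
\]
so the unnormalized Bingham densities agree. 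Dividing by the common normalizing constant gives the claimed equality.

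There is no real obstacle. The only point needing care is to state explicitly that $\pi^{\opHaar}$ is right invariant, not merely left invariant. This holds because $\opO(p)$ is compact and hence unimodular. It is also worth noting that when $\vec{\lambda}$ has distinct entries, $\vec{B}$ is necessarily a diagonal sign matrix, so the lemma then expresses the usual sign ambiguity of eigenvectors.
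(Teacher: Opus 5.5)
Your proposal is correct, and your cross-check is exactly the paper's proof: invert the hypothesis to get $\vec{\Lambda}^{-1} = \vec{B}\vec{\Lambda}^{-1}\vec{B}^\top$, then use cyclicity of the trace in the Bingham exponent (\ref{eq34a}). Your primary argument repackages the same observation slightly more directly and skips the inversion step: $\vec{\Gamma}\vec{B}$ and $\vec{\Gamma}$ generate the same $\vec{\Sigma}$, so the likelihood is unchanged, and the Haar prior density is constant.
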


\begin{lemma}  \label{lemm5a}
For $\vec{B} \in \opO(p)$ such that $\vec{L} = \vec{B} \vec{L} \vec{B}^\top$,
\[
\pi^{\opHaar} \left(  \vec{\Gamma}  ( \vec{\Gamma}^\top \vec{V} \vec{B}^\top \vec{V}^\top \vec{\Gamma} )  | \vec{x}, \vec{\lambda}\right) 
=\pi^{\opHaar} ( \vec{\Gamma} | \vec{x}, \vec{\lambda}).
\]
\end{lemma}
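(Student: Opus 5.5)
The plan is to work directly with the Bingham form (\ref{eq34a}) of the oracle posterior, which depends on $\vec{\Gamma}$ only through
\[
h(\vec{\Gamma}) = \optr\left( \vec{\Lambda}^{-1} \vec{\Gamma}^\top (n\vec{S}) \vec{\Gamma} \right),
\]
with the normalizing constant $f^{\opHaar}(\vec{x}|\vec{\lambda})$ not depending on $\vec{\Gamma}$. It therefore suffices to show that $h(\vec{\Gamma}\vec{C}) = h(\vec{\Gamma})$ for
\[
\vec{C} = \vec{\Gamma}^\top \vec{V} \vec{B}^\top \vec{V}^\top \vec{\Gamma}.
\]
As a preliminary step I will check that $\vec{C}$ is orthogonal, being a product of orthogonal matrices. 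Hence $\vec{\Gamma}\vec{C} \in \opO(p)$, and the Haar density, which is constant on $\opO(p)$, takes the same value at both arguments.

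The key computation is to simplify $\vec{\Gamma}\vec{C}$ itself. Since $\vec{\Gamma}\vec{\Gamma}^\top = \vec{I}$,
\[
\vec{\Gamma}\vec{C} = \vec{V}\vec{B}^\top\vec{V}^\top\vec{\Gamma}.
\]
This is a left multiplication of $\vec{\Gamma}$ by $\vec{Q} = \vec{V}\vec{B}^\top\vec{V}^\top$. Substituting into $h$ gives
\[
h(\vec{\Gamma}\vec{C}) = \optr\left( \vec{\Lambda}^{-1} \vec{\Gamma}^\top \vec{Q}^\top (n\vec{S}) \vec{Q} \vec{\Gamma} \right).
\]
So the claim reduces to the identity $\vec{Q}^\top \vec{S} \vec{Q} = \vec{S}$.

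To verify that identity, I will write $\vec{S} = \vec{V}\vec{L}\vec{V}^\top$ and compute
\[
\vec{Q}^\top \vec{S}\vec{Q} = \vec{V}\vec{B}\vec{V}^\top \vec{V}\vec{L}\vec{V}^\top \vec{V}\vec{B}^\top\vec{V}^\top = \vec{V}\,(\vec{B}\vec{L}\vec{B}^\top)\,\vec{V}^\top = \vec{V}\vec{L}\vec{V}^\top = \vec{S},
\]
where the hypothesis $\vec{L} = \vec{B}\vec{L}\vec{B}^\top$ is used in the third equality.

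The only point needing care is bookkeeping of transposes. The hypothesis must be used in the direction $\vec{B}\vec{L}\vec{B}^\top$, which is exactly what appears above. If it were needed in the form $\vec{B}^\top\vec{L}\vec{B} = \vec{L}$ instead, that follows by conjugating the hypothesis by $\vec{B}^\top$. I expect no real obstacle; the argument mirrors Lemma \ref{lemm4a}, except that the symmetry acts on the data side (left multiplication) rather than the parameter side (right multiplication).
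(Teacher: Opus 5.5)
Your proposal is correct and matches the paper's proof essentially step for step: both reduce $\vec{\Gamma}(\vec{\Gamma}^\top\vec{V}\vec{B}^\top\vec{V}^\top\vec{\Gamma})$ to $\vec{V}\vec{B}^\top\vec{V}^\top\vec{\Gamma}$, substitute it into the Bingham exponent, and use $\vec{S}=\vec{V}\vec{L}\vec{V}^\top$ together with $\vec{B}\vec{L}\vec{B}^\top=\vec{L}$ to recover $\vec{S}$. Your explicit checks that the transformed matrix lies in $\opO(p)$ and that the normalizing constant does not depend on $\vec{\Gamma}$ are minor points the paper leaves implicit.
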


\noindent
Setting $\vec{\Gamma}  =  \vec{V}$ in Lemma \ref{lemm5a} yields the following result.  
\begin{corollary}
For $n < p$, $\exists \vec{B} \in \opO(p) \setminus \{\vec{I}\}$ 
with $\vec{L} = \vec{B} \vec{L} \vec{B}^\top$,
for which $\vec{V} \vec{B}^\top$ is also a  mode of 
$\pi^{\opHaar} ( \vec{\Gamma} | \vec{x}, \vec{\lambda})$.
\end{corollary}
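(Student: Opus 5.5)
The plan is to exhibit an explicit nontrivial $\vec{B}$ stabilizing $\vec{L}$, and then combine Lemma \ref{lemm5a} with the fact, from Theorem 5 of \citet{jupp1979maximum}, that $\vec{V}$ is a mode of $\pi^{\opHaar} ( \vec{\Gamma} | \vec{x}, \vec{\lambda})$.

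\textbf{Step 1: constructing $\vec{B}$.} When $n<p$, the sample covariance $\vec{S} = \vec{X}^\top\vec{X}/n$ has rank at most $n$, so $l_{n+1} = \cdots = l_p = 0$ and $\vec{L} = \opdiag(l_1,\dots,l_n,0,\dots,0)$. Take
\[
\vec{B} = \opdiag(1,\dots,1,-1),
\]
which flips the sign of the last coordinate. It is orthogonal and differs from $\vec{I}$. Since $\vec{B}$ and $\vec{L}$ are both diagonal, they commute, so $\vec{B}\vec{L}\vec{B}^\top = \vec{L}\vec{B}\vec{B}^\top = \vec{L}$. The only fact about $\vec{L}$ used here is diagonality, and the sign flip of the final coordinate needs nothing beyond $n<p$. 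If a nondiscrete symmetry is wanted instead, one can take any nonidentity $\vec{B} = \opdiag(\vec{I}_n, \vec{Q})$ with $\vec{Q} \in \opO(p-n)$ acting on the zero block.

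\textbf{Step 2: transporting the mode.} Apply Lemma \ref{lemm5a} with $\vec{\Gamma} = \vec{V}$. Since $\vec{V}^\top\vec{V} = \vec{I}$, the argument on the left-hand side simplifies:
\[
\vec{V}\,(\vec{V}^\top\vec{V}\vec{B}^\top\vec{V}^\top\vec{V}) = \vec{V}\vec{B}^\top.
\]
The lemma therefore gives $\pi^{\opHaar}(\vec{V}\vec{B}^\top|\vec{x},\vec{\lambda}) = \pi^{\opHaar}(\vec{V}|\vec{x},\vec{\lambda})$. Because $\vec{V}$ attains the maximal posterior density, so does $\vec{V}\vec{B}^\top$, and hence $\vec{V}\vec{B}^\top$ is also a mode.

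\textbf{Step 3: nondegeneracy.} It remains to check that the new mode is genuinely different. Since $\vec{B} \ne \vec{I}$, we have $\vec{V}\vec{B}^\top \neq \vec{V}$.

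\textbf{Main obstacle.} The delicate point is that "mode" must mean a global maximizer of the density on $\opO(p)$, which is the sense in which Jupp and Mardia's result applies. Lemma \ref{lemm5a} holds for every $\vec{\Gamma}$, so it actually describes a density-preserving map $\vec{\Gamma} \mapsto \vec{V}\vec{B}^\top\vec{V}^\top\vec{\Gamma}$. Under this map the maximal value is preserved, so a global maximizer is sent to a global maximizer. The same argument would carry local maximality through as well, since the map is a homeomorphism of $\opO(p)$.
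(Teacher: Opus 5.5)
Your argument is correct and is the paper's own: the corollary is just Lemma~\ref{lemm5a} evaluated at $\vec{\Gamma}=\vec{V}$, combined with the Jupp--Mardia fact that $\vec{V}$ is a mode. One caveat about your witness: the sign flip $\opdiag(1,\dots,1,-1)$ commutes with every diagonal $\vec{L}$, so it never uses $n<p$, and it gives $\vec{V}\vec{B}^\top\vec{\Lambda}\vec{B}\vec{V}^\top=\vec{V}\vec{\Lambda}\vec{V}^\top$, i.e.\ the same $\vec{\Sigma}$; it therefore proves the literal statement only trivially, whereas the symmetry the paper has in mind is your block choice $\opdiag(\vec{I}_n,\vec{Q})$ with $\vec{Q}\in\opO(p-n)$ not a signed diagonal matrix, which requires $n\le p-2$.
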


\begin{lemma} \label{cor34}
$\pi^{\opHaar} ( \vec{\Gamma} | \vec{x}, \vec{\lambda} )$ is equivariant to $\opO (p)$.
\end{lemma}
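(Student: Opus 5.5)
We need to show that if $\vec{X}$ is replaced by $g_{\vec{R}}(\vec{X}) = \vec{X}\vec{R}$ for $\vec{R} \in \opO(p)$, the posterior of $\vec{\Sigma} = \vec{\Gamma}\vec{\Lambda}\vec{\Gamma}^\top$ is carried to itself under $\bar{g}_{\vec{R}}$. In terms of $\vec{\Gamma}$, the claim is
\[
\pi^{\opHaar}(\vec{R}^\top\vec{\Gamma} \mid \vec{x}\vec{R}, \vec{\lambda}) = \pi^{\opHaar}(\vec{\Gamma} \mid \vec{x}, \vec{\lambda}),
\]
since $\bar{g}_{\vec{R}}(\vec{\Gamma}\vec{\Lambda}\vec{\Gamma}^\top) = (\vec{R}^\top\vec{\Gamma})\vec{\Lambda}(\vec{R}^\top\vec{\Gamma})^\top$.

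The natural route is Lemma \ref{thm221}, which requires two inputs:
\begin{itemize}
\item[(i)] Equivariance of the family of densities under $\opO(p)$. This is immediate from Theorem \ref{lemm1a}(a), because $\opO(p)$ is a subgroup of $GL(p)$.
\item[(ii)] Invariance of the prior under $\opO(p)$. This holds because the prior on $\vec{\Gamma}$ is Haar measure, which is invariant under left multiplication: $\vec{R}^\top\vec{\Gamma}$ has the same distribution as $\vec{\Gamma}$. Consequently the induced prior on $\vec{\Sigma}$ with fixed $\vec{\lambda}$ is invariant under $\bar{g}_{\vec{R}}$.
\end{itemize}
With these two inputs, Lemma \ref{thm221} gives the result.

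To be concrete, I would also verify the identity directly from the Bingham form (\ref{eq34a}). The sample covariance matrix of $\vec{x}\vec{R}$ is $\vec{R}^\top\vec{S}\vec{R}$, by Theorem \ref{lemm1a}(b). Substituting $\vec{R}^\top\vec{\Gamma}$ for $\vec{\Gamma}$, the exponent becomes
\[
\optr\bigl(-\vec{\Lambda}^{-1}\vec{\Gamma}^\top\vec{R}\,(n\vec{R}^\top\vec{S}\vec{R})\,\vec{R}^\top\vec{\Gamma}/2\bigr) = \optr\bigl(-\vec{\Lambda}^{-1}\vec{\Gamma}^\top(n\vec{S})\vec{\Gamma}/2\bigr),
\]
so the unnormalized densities agree.

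The normalizing constants $f^{\opHaar}(\vec{x}\vec{R} \mid \vec{\lambda})$ and $f^{\opHaar}(\vec{x} \mid \vec{\lambda})$ must also agree. Both are integrals of the same integrand against Haar measure. Under the change of variables $\vec{\Gamma} \mapsto \vec{R}^\top\vec{\Gamma}$, the Haar measure is invariant and the determinant factor $\prod_i \lambda_i$ does not change, so the two integrals coincide.

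The only delicate point is fixing the correct group action on $\vec{\Gamma}$, namely left multiplication by $\vec{R}^\top$ rather than right multiplication. The right action corresponds instead to the symmetries described in Lemma \ref{lemm4a}, and must not be confused with this one. The proof also needs the Haar measure to be invariant under that left action; since $\opO(p)$ is compact, Haar measure is bi-invariant, so this holds.
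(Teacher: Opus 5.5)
Your proof is correct, and your direct verification from the Bingham form (\ref{eq34a}) is exactly the paper's argument: send $\vec{\Gamma}\mapsto\vec{R}^\top\vec{\Gamma}$, use $\vec{S}(\vec{x}\vec{R})=\vec{R}^\top\vec{S}\vec{R}$, and cancel $\vec{R}\vec{R}^\top$ in the exponent. Framing the result as an instance of Lemma \ref{thm221}, and explicitly checking that the normalizing constants agree via left-invariance of Haar measure (which the paper leaves implicit in its use of $\propto$), are useful additions rather than a different route.
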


\begin{definition}   \label{oracle-lambda-BR-def} 
The oracle Bayes rule is the estimator that minimizes the average risk for the  oracle eigenvalue model,
$E_{\vec{\Gamma} \sim \pi^{\opHaar}}  R (  \vec{\Sigma}, \hat{\vec{\Sigma}})$.
It is  derived by minimizing the posterior expected loss,
\begin{equation} \label{oracle-lambda-BR} 
\hat{\Sigma}^{\opHaar}   ( \vec{x}, \vec{\lambda})  := \argmin_{\hat{ \vec{\Sigma}}}  \left[
E_{\vec{\Gamma}  \sim  \pi^{\opHaar} ( \vec{\Gamma} | \vec{x}, \vec{\lambda} ) } 
L (  \vec{\Sigma}, \hat{\vec{\Sigma}}) \right].
\end{equation}

\end{definition}

\begin{lemma} \label{prop2w}
If the loss function is invariant to  $\opO (p)$ then the oracle Bayes rule is equivariant 
to $\opO (p)$.
\end{lemma}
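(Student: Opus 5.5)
The plan is to show directly that, for every $\vec{R} \in \opO(p)$, the posterior expected loss at the rotated data $g_{\vec{R}}(\vec{x}) = \vec{x}\vec{R}$ is the posterior expected loss at $\vec{x}$, with the candidate estimate transformed by $\bar{g}_{\vec{R}}^{-1}$. Once this holds, the minimizers correspond under $\bar{g}_{\vec{R}}$, which gives
\[
\hat{\Sigma}^{\opHaar}(\vec{x}\vec{R}, \vec{\lambda}) = \vec{R}^\top \hat{\Sigma}^{\opHaar}(\vec{x}, \vec{\lambda}) \vec{R}.
\]
This is the same mechanism as in the proof of Theorem \ref{cor11}, now specialized to the oracle model with its prior $\pi^{\opHaar}$ on $\vec{\Gamma}$.

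First, I will record how the posterior transforms. The data map $\vec{x} \mapsto \vec{x}\vec{R}$ sends $n\vec{S}$ to $\vec{R}^\top (n\vec{S}) \vec{R}$. By (\ref{eq34a}), the posterior density at $\vec{R}^\top \vec{\Gamma}$ given $\vec{x}\vec{R}$ then equals the density at $\vec{\Gamma}$ given $\vec{x}$, because
\[
\optr\bigl(\vec{\Lambda}^{-1} (\vec{R}^\top\vec{\Gamma})^\top \vec{R}^\top (n\vec{S}) \vec{R} (\vec{R}^\top\vec{\Gamma})\bigr) = \optr\bigl(\vec{\Lambda}^{-1}\vec{\Gamma}^\top (n\vec{S})\vec{\Gamma}\bigr).
\]
The Haar measure is invariant under left multiplication, so the normalizing constants agree. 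This is exactly Lemma \ref{cor34}, which I will cite. It says that if $\vec{\Gamma} \sim \pi^{\opHaar}(\cdot \mid \vec{x}, \vec{\lambda})$, then $\vec{R}^\top\vec{\Gamma} \sim \pi^{\opHaar}(\cdot \mid \vec{x}\vec{R}, \vec{\lambda})$. Equivalently, the induced posterior on $\vec{\Sigma} = \vec{\Gamma}\vec{\Lambda}\vec{\Gamma}^\top$ pushes forward under $\bar{g}_{\vec{R}}$.

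Next, I will rewrite the posterior expected loss at $\vec{x}\vec{R}$ for an arbitrary candidate $\hat{\vec{\Sigma}}$. Substituting $\vec{\Sigma}' = \vec{R}^\top \vec{\Sigma} \vec{R}$ with $\vec{\Sigma}$ drawn from the posterior given $\vec{x}$ gives
\[
E\bigl[L(\bar{g}_{\vec{R}}(\vec{\Sigma}), \hat{\vec{\Sigma}}) \mid \vec{x}\bigr] = E\bigl[L(\vec{\Sigma}, \vec{R}\hat{\vec{\Sigma}}\vec{R}^\top) \mid \vec{x}\bigr].
\]
The equality uses the $\opO(p)$ invariance of $L$ (Definition \ref{def-inv-dens}(c)), applied with $\bar{g}_{\vec{R}^\top}$. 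The map $\hat{\vec{\Sigma}} \mapsto \vec{R}\hat{\vec{\Sigma}}\vec{R}^\top$ is a bijection of the positive definite cone. Hence $\hat{\vec{\Sigma}}$ minimizes the left side if and only if $\vec{R}\hat{\vec{\Sigma}}\vec{R}^\top$ minimizes the right side, i.e. $\vec{R}\hat{\vec{\Sigma}}\vec{R}^\top = \hat{\Sigma}^{\opHaar}(\vec{x}, \vec{\lambda})$. This is the claimed equivariance.

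The main obstacle is well-definedness of the argmin. If the minimizer is unique, as it will be for $L_0$, $L_1$ and $L_2$ by strict convexity in $\hat{\vec{\Sigma}}$, the argument is complete. Otherwise I will phrase the conclusion as equivariance of the argmin set, or argue that a measurable equivariant selection exists. A lesser point is checking the direction convention of $\bar{g}_{\vec{R}}$ (namely $\vec{A}^\top\vec{\Sigma}\vec{A}$) consistently throughout the substitution.
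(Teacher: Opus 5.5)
Your proposal is correct and follows the same route as the paper: the paper also uses Lemma \ref{cor34} and the invariance of the loss to conclude that the posterior expected loss is equivariant, and then passes this equivariance to its minimizer. Your version makes explicit what the paper leaves implicit, namely the change of variables $\hat{\vec{\Sigma}} \mapsto \vec{R}\hat{\vec{\Sigma}}\vec{R}^\top$, the convention $\bar{g}_{\vec{R}}(\vec{\Sigma}) = \vec{R}^\top\vec{\Sigma}\vec{R}$, and the uniqueness-of-argmin caveat.
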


\begin{proof} 
Lemma  \ref{cor34} implies that  for an invariant loss function the posterior expected loss in (\ref{oracle-lambda-BR}) is equivariant under $\opO (p)$.
Therefore the oracle Bayes rule, its minimizer,  is equivariant under $\opO (p)$.
 \end{proof}

\begin{proposition} \label{prop3}
If the loss function is invariant to $\opO (p)$ then the oracle Bayes rule 
is the equivariant estimator under $\opO(p)$ with the smallest risk.
\end{proposition}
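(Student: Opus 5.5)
The plan is to specialize Theorem \ref{cor11} to the group $\opO(p)$ acting on covariance matrices with a fixed eigenvalue vector $\vec{\lambda}$, and to make explicit the two ingredients it uses: constancy of risk along orbits, and Bayes optimality of the oracle rule.

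\textbf{Step 1 (constant risk).} Fix $\vec{\lambda}$ and let $\Theta_{\vec{\lambda}}=\{\vec{\Gamma}\vec{\Lambda}\vec{\Gamma}^\top:\vec{\Gamma}\in\opO(p)\}$. By Theorem \ref{lemm1a}(a) the normal family is equivariant to $\opO(p)\subset GL(p)$. As noted at the start of Section \ref{sec:equivariance-to-O(p)}, $\opO(p)$ acts transitively on $\Theta_{\vec{\lambda}}$ via $\bar{g}_{\vec{R}}$. Let $\hat{\vec{\Sigma}}$ be any $\opO(p)$-equivariant estimator and $L$ an $\opO(p)$-invariant loss. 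Applying Theorem \ref{cor1} with parameter space $\Theta_{\vec{\lambda}}$ shows that $R(L,\hat{\vec{\Sigma}};\vec{\Sigma})$ is constant on $\Theta_{\vec{\lambda}}$. Denote this constant by $r(\hat{\vec{\Sigma}},\vec{\lambda})$. Averaging over $\vec{\Gamma}\sim\pi^{\opHaar}$ then gives
\[E_{\vec{\Gamma}\sim\pi^{\opHaar}}R(L,\hat{\vec{\Sigma}};\vec{\Gamma}\vec{\Lambda}\vec{\Gamma}^\top)=r(\hat{\vec{\Sigma}},\vec{\lambda}).\]

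\textbf{Step 2 (the oracle rule is in the class).} By Lemma \ref{prop2w}, which rests on the posterior equivariance of Lemma \ref{cor34}, the oracle Bayes rule $\hat{\Sigma}^{\opHaar}(\cdot,\vec{\lambda})$ is $\opO(p)$-equivariant. Step 1 therefore also applies to it, so its risk is constant on $\Theta_{\vec{\lambda}}$ and equals its Haar-average risk.

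\textbf{Step 3 (comparison).} By Definition \ref{oracle-lambda-BR-def}, the oracle rule minimizes the posterior expected loss pointwise in $\vec{x}$. By Fubini, it therefore minimizes the Haar-average risk over all estimators, and in particular over equivariant ones. Combining this with Steps 1 and 2 gives, for every equivariant $\hat{\vec{\Sigma}}$ and every $\vec{\Sigma}\in\Theta_{\vec{\lambda}}$,
\[R(L,\hat{\Sigma}^{\opHaar};\vec{\Sigma})=\text{avg risk}(\hat{\Sigma}^{\opHaar})\le\text{avg risk}(\hat{\vec{\Sigma}})=R(L,\hat{\vec{\Sigma}};\vec{\Sigma}).\]
The risk of an equivariant estimator at a true $\vec{\Sigma}$ depends only on that matrix's eigenvalues. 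Applying the oracle rule built from the true $\vec{\lambda}$ thus dominates every equivariant estimator at every $\vec{\Sigma}$.

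\textbf{Main obstacle.} The delicate point is justifying Step 3. This needs the posterior expected loss to have a minimizer that is measurable in $\vec{x}$ and finite. If the minimizer is not unique, equivariance in Step 2 also needs care: one must choose an equivariant selection, or argue as in Lemma \ref{prop2w} that the argmin set transforms equivariantly. For $L_0$, $L_1$ and $L_2$ the minimizers are explicit and unique (posterior means of $\vec{\Sigma}$ or of $\vec{\Sigma}^{-1}$, suitably combined), so uniqueness holds there. I would state this as a mild regularity assumption rather than prove it in general. Finally, in the $n<p$ case one should note that Theorem \ref{cor1} still applies, because it uses only equivariance of the family and the estimator, not invertibility of $\vec{S}$.
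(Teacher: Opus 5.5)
Your proposal is correct and follows the paper's route: the paper's proof just cites Lemma \ref{prop2w} (the oracle rule is $\opO(p)$-equivariant) and Theorem \ref{cor11}. Your Steps 1--3 unpack Theorem \ref{cor11}'s argument, making explicit the constant-risk step on the orbit $\Theta_{\vec{\lambda}}$ via Theorem \ref{cor1}, which the paper leaves implicit, and your caveat about existence, measurability and equivariant selection of the minimizer is a fair point the paper glosses over.
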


\begin{proof}
This result follows from Lemma \ref{prop2w} and Theorem \ref{cor11}.
\end{proof}

%%%%%%%%%%%%%%%%%%%%%%%%%%%%%%%%%%%%%%%%%%%%%%

\subsection{Oracle Bayes rules}

In this section, we derive expressions for the oracle Bayes rule \eqref{oracle-lambda-BR} under
squared Frobenius loss, Stein loss and squared Stein loss.
Letting $\vec{S}=\vec{V}\vec{L}\vec{V}^\top$ denote the eigendecomposition of the sample covariance matrix, the oracle Bayes rules are equivariant to $\opO(p)$ with $\hat{\vec{\Sigma}}(\vec{S}) = \vec{V}\hat{\vec{\Sigma}}(\vec{L})\vec{V}^\top$. 
%By Proposition \ref{prop12}, $\hat{\vec{\Sigma}}(\vec{L}) = \opdiag(d_1,\dots,d_p)$ is a diagonal matrix. 
Therefore, it suffices to derive the oracle Bayes rules when $\vec{S}=\vec{L}$, which are themselves diagonal by Proposition \ref{prop12}.
%obtained by optimizing over $d_1,\dots,d_p$ in the definition \eqref{oracle-lambda-BR}. If $n \leq p-2$, Proposition \ref{prop12} implies $d_{n+1}=\dots=d_p$. 
%Below, we state the oracle Bayes rules for covariance estimation for the three loss functions $L_0,L_1,$ and $L_2$. 
Throughout, we use the notation $\vec{\Sigma}=\vec{\Gamma} \vec{\Lambda} \vec{\Gamma}^\top$, where $\{\Gamma_{ij}\}$ denotes the entries of $\vec{\Gamma}$, and $\{\lambda_j\}$ denotes the diagonal entries of $\vec{\Lambda}$. The formulas are stated below in Proposition \ref{prop:oracle-bayes-rules} and derived in Appendix \ref{sec:proofs}. %derivation of these results can be found in Appendix \ref{sec:proofs}. 
The oracle Bayes rules for estimating the precision matrix %under each of these losses 
are recorded in Section \ref{sec:precision-rules}. Since the oracle Bayes rules are optimal within the class of rotationally equivariant estimators for each fixed $\vec{\lambda}$, they dominate the estimators listed in Corollary \ref{cor:common-invariant-estimators} under the loss functions listed at the start of this subsection.

%By Lemma \ref{prop2w}, the oracle Bayes rule is orthogonally equivariant $\hat{\vec{\Sigma}}(\vec{S}) = \vec{V} \hat{\vec{\Sigma}}(\vec{L})\vec{V}^\top$. It therefore suffices to derive $\hat{\vec{\Sigma}}(\vec{S})$ for diagonal $\vec{S}=\vec{L}$. By Proposition \ref{prop12}, $\hat{\vec{\Sigma}}(\vec{L})$ is itself a diagonal matrix. We give expressions for its diagonal entries in the following proposition.

%
%{\color{red}  COMMENT: I think that in the expressions for the Bayes rule we should explicitly condition on $\vec{L}$ in all the %expectations.} \dx{I made the edit in the proposition. I think it is clear enough in the proofs (there is a sentence at the start of %each proof saying we assume $\vec{S}=\vec{L}$ throughout)}

%\paragraph{Frobenius loss.}  
\begin{proposition}
\label{prop:oracle-bayes-rules}
    The Frobenius loss is defined $L_0(\hat{\vec{\Sigma}},\vec{\Sigma}) = \|\hat{\vec{\Sigma}} - \vec{\Sigma}\|^2_F  = \optr((\hat{\vec{\Sigma}}-\vec{\Sigma})^2)$. If $n \geq p$, then the oracle Bayes rule \eqref{oracle-lambda-BR} for estimating $\vec{\Sigma}$ under the loss function $L_0$ is $\hat{\vec{\Sigma}}(\vec{L}) = \opdiag(d_1,\dots,d_p)$, where
    \begin{align*}
        d_k = \sum_{j=1}^p \lambda_j \EE [\Gamma^2_{kj} \mid \vec{S}=\vec{L}], \hspace{2em} k=1,\dots,p.
    \end{align*}
    When $p>n$, the first $n$ entries are as above, and the remaining $p-n$ entries are all equal to
    \begin{align*}
        \bar{d} = \frac{1}{p-n} \sum_{j=1}^p \lambda_j \sum_{k=n+1}^p \EE[\Gamma_{kj}^2 \mid \vec{S}=\vec{L}].
    \end{align*}
%\end{proposition}
%\begin{proposition}
%\label{prop:bayes-rule-L1}
    The Stein loss is defined $L_1(\hat{\vec{\Sigma}},\vec{\Sigma}) = \optr(\hat{\vec{\Sigma}} \vec{\Sigma}^{-1}) - \log \det(\hat{\vec{\Sigma}}\vec{\Sigma}^{-1})-p$. If $n \geq p$, then the oracle Bayes rule \eqref{oracle-lambda-BR} for estimating $\vec{\Sigma}$ under the loss function $L_1$ is $\hat{\vec{\Sigma}}(\vec{L}) = \opdiag(d_1,\dots,d_p)$, where
    \begin{align*}
        d_k = \Big( \sum_{j=1}^p \lambda_j^{-1}\EE [ \Gamma_{kj}^2 \mid \vec{S}=\vec{L}] \Big)^{-1},\hspace{2em} k=1,\dots,p.
    \end{align*}
    When $p>n$, the first $n$ entries are as above, and the remaining entries are equal to
    \begin{align*}
        \bar{d} = \Big( \frac{1}{p-n} \sum_{k=n+1}^p \sum_{j=1}^p  \lambda_j^{-1}\EE[\Gamma_{kj}^2 \mid \vec{S}=\vec{L}] \Big)^{-1}.
    \end{align*}
%\end{proposition}
%\begin{proposition}
%\label{prop:bayes-rule-L2}
    The squared Stein loss is defined $L_2(\hat{\vec{\Sigma}},\vec{\Sigma}) = \optr((\hat{\vec{\Sigma}}\vec{\Sigma}^{-1}-I)^2)$. If $n\geq p$, then the oracle Bayes rule \eqref{oracle-lambda-BR} for estimating $\vec{\Sigma}$ under the loss function $L_2$ is $\hat{\vec{\Sigma}}(\vec{L}) = \opdiag(d_1,\dots,d_p)$, where $d = (d_1,\dots,d_p)^\top \in \mathbb{R}^p$ solves the linear system $\vec{A} d = b$, where
    \begin{align}
    \label{eq:b}
        b_k &= \sum_{j=1}^p \lambda_j^{-1} \EE[\Gamma_{kj}^2 \mid \vec{S}=\vec{L}], \\
        \label{eq:A}
        A_{k\ell} &= \sum_{i,j=1}^p \lambda_i^{-1}\lambda_j^{-1} \EE[\Gamma_{ki}\Gamma_{kj}\Gamma_{\ell i}\Gamma_{\ell j} \mid \vec{S}=\vec{L}].
    \end{align}
    When $p>n$, we have $\hat{\vec{\Sigma}}(\vec{L}) = \opdiag(d_1,\dots,d_n,\bar{d},\dots,\bar{d})$, i.e.~the last $p-n$ diagonal entries are all equal to $\bar{d}$, where $\vec{d}=(d_1,\dots,d_n,\bar{d}) \in \mathbb{R}^{n+1}$ solves the $(n+1) \times (n+1)$ system of equations $\vec{A}\vec{d}=b$, where for $k\leq n$,
    \begin{align}
    \label{eq:b2}
        b_k = \sum_{j=1}^p \lambda_j^{-1} \EE[\Gamma_{kj}^2 \mid \vec{S}=\vec{L}], \hspace{2em} b_{n+1} = \sum_{j=1}^p \lambda_j^{-1} \sum_{\ell=n+1}^p \EE[\Gamma_{\ell j}^2 \mid \vec{S}=\vec{L}]
    \end{align}
    and for $k,\ell \leq n$, the entries $\{A_{k \ell}\}$ of $\vec{A}$ are defined:
    \begin{align}
    \label{eq:A2}
        A_{k\ell} &= \sum_{i,j=1}^p \lambda_i^{-1} \lambda_j^{-1} \EE [\Gamma_{ki}\Gamma_{kj}\Gamma_{\ell i} \Gamma_{\ell j} \mid \vec{S}=\vec{L}] \\
        \label{eq:Ak(n+1)}
        A_{k,(n+1)} &= \sum_{\ell'=n+1}^p \sum_{i,j=1}^p \lambda_i^{-1}\lambda_j^{-1} \EE[\Gamma_{ki}\Gamma_{kj}\Gamma_{\ell' i}\Gamma_{\ell' j} \mid \vec{S}=\vec{L}] \\
        \label{eq:A(n+1)k}
        A_{(n+1),\ell} &= \sum_{k'=n+1}^p \sum_{i,j=1}^p \lambda_i^{-1} \lambda_j^{-1} \EE[\Gamma_{k' i} \Gamma_{k'j} \Gamma_{\ell i}\Gamma_{\ell j} \mid \vec{S}=\vec{L}] \\
        \label{eq:A(n+1)^2}
        A_{(n+1),(n+1)} &= \sum_{k',\ell'=n+1}^p \sum_{i,j=1}^p \lambda_i^{-1} \lambda_j^{-1} \EE[\Gamma_{k' i} \Gamma_{k'j} \Gamma_{\ell' i}\Gamma_{\ell' j} \mid \vec{S}=\vec{L}].
    \end{align}
\end{proposition}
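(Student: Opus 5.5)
By Lemma \ref{prop2w} and Proposition \ref{prop12}, the oracle Bayes rule has the form $\vec{V}\hat{\vec{\Sigma}}(\vec{L})\vec{V}^\top$ with $\hat{\vec{\Sigma}}(\vec{L})$ diagonal. When $p>n$ its last $p-n$ diagonal entries are equal. So we may take $\vec{S}=\vec{L}$ and minimize the posterior expected loss in (\ref{oracle-lambda-BR}) over diagonal matrices $\vec{D}=\opdiag(d_1,\dots,d_p)$. In the $p>n$ case we additionally impose $d_{n+1}=\cdots=d_p=\bar d$. This restriction loses nothing, since the unrestricted minimizer is known to lie in this class.

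Each loss then becomes an explicit function of $\vec{d}$ whose coefficients are posterior moments of entries of $\vec{\Gamma}$. We use $\vec{\Sigma}=\vec{\Gamma}\vec{\Lambda}\vec{\Gamma}^\top$, $\vec{\Sigma}^{-1}=\vec{\Gamma}\vec{\Lambda}^{-1}\vec{\Gamma}^\top$, $\Sigma_{kk}=\sum_j\lambda_j\Gamma_{kj}^2$ and $(\Sigma^{-1})_{kl}=\sum_j\lambda_j^{-1}\Gamma_{kj}\Gamma_{lj}$.

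\emph{Frobenius loss.} Here
\[
L_0=\sum_k d_k^2-2\sum_k d_k\Sigma_{kk}+\|\vec{\Sigma}\|_F^2 ,
\]
and $\|\vec{\Sigma}\|_F^2=\sum_j\lambda_j^2$ does not depend on $\vec{d}$. Minimizing coordinatewise gives $d_k=\EE[\Sigma_{kk}\mid \vec{S}=\vec{L}]$, which is the stated formula. Under the tied constraint, the terms involving $\bar d$ are $(p-n)\bar d^2-2\bar d\sum_{k>n}\Sigma_{kk}$. The minimizer is therefore the average $\bar d=\frac{1}{p-n}\sum_{k>n}\EE[\Sigma_{kk}\mid\vec{S}=\vec{L}]$.

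\emph{Stein loss.} Here
\[
L_1=\sum_k d_k(\Sigma^{-1})_{kk}-\sum_k\log d_k+\text{const},
\]
because $\log\det\vec{\Sigma}^{-1}$ is fixed once $\vec{\lambda}$ is fixed. This expression is strictly convex in each $d_k$. The first-order condition gives $d_k=1/\EE[(\Sigma^{-1})_{kk}\mid \vec{S}=\vec{L}]$. In the tied case the condition becomes $\sum_{k>n}\EE(\Sigma^{-1})_{kk}=(p-n)/\bar d$, which yields the stated $\bar d$.

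\emph{Squared Stein loss.} Expanding,
\[
L_2=\optr(\vec{D}\vec{\Sigma}^{-1}\vec{D}\vec{\Sigma}^{-1})-2\optr(\vec{D}\vec{\Sigma}^{-1})+p
=\sum_{k,l}d_kd_l\big((\Sigma^{-1})_{kl}\big)^2-2\sum_k d_k(\Sigma^{-1})_{kk}+p .
\]
Substituting $(\Sigma^{-1})_{kl}^2=\sum_{i,j}\lambda_i^{-1}\lambda_j^{-1}\Gamma_{ki}\Gamma_{kj}\Gamma_{li}\Gamma_{lj}$ and taking posterior expectations gives the quadratic $\vec{d}^\top\vec{A}\vec{d}-2b^\top\vec{d}+p$, with $\vec{A}$ and $b$ as in (\ref{eq:b})--(\ref{eq:A}). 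Setting the gradient to zero gives $\vec{A}\vec{d}=b$.

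In the tied case we write $\vec{d}=\vec{P}\tilde{\vec{d}}$. Here $\tilde{\vec{d}}\in\mathbb{R}^{n+1}$, and $\vec{P}$ is the $p\times(n+1)$ matrix that copies the last coordinate into positions $n+1,\dots,p$. The reduced quadratic has matrix $\vec{P}^\top\vec{A}\vec{P}$ and linear term $\vec{P}^\top b$. Summing the relevant rows and columns reproduces (\ref{eq:b2})--(\ref{eq:A(n+1)^2}).

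The main obstacle is justifying that these stationary points are the unique minimizers. I would show that $\vec{A}$ is positive definite, so the quadratic is strictly convex. Indeed $\vec{d}^\top\vec{A}\vec{d}=\EE\,\optr\big((\vec{D}\vec{\Sigma}^{-1})^2\big)=\EE\|\vec{\Sigma}^{-1/2}\vec{D}\vec{\Sigma}^{-1/2}\|_F^2$, which is strictly positive for $\vec{d}\neq 0$. The same argument applies to $\vec{P}^\top\vec{A}\vec{P}$, since $\vec{P}$ has full column rank.

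Two further points need checking. First, the posterior moments are finite, which is immediate because $\vec{\Gamma}$ is orthogonal and so all its entries are bounded. Second, the restriction to diagonal $\vec{D}$ is legitimate, which is exactly what equivariance and Proposition \ref{prop12} supply. I would also note that the Frobenius and Stein objectives are separable and strictly convex in $\vec{d}$.
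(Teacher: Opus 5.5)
Your proposal is correct and follows essentially the same route as the paper. Both reduce to diagonal $\hat{\vec{\Sigma}}$ at $\vec{S}=\vec{L}$ via equivariance and Proposition~\ref{prop12}, expand each loss in posterior moments of the entries of $\vec{\Gamma}$, and solve the first-order conditions, with a tied $\bar d$ when $p>n$. Your $\vec{P}^\top\vec{A}\vec{P}$ reduction and the positive-definiteness of $\vec{A}$ are small additions that make explicit that the stationary points are the unique minimizers, a point the paper's appendix leaves implicit.
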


\begin{corollary}
\label{prop:general-bayes-rule}
    Suppose the loss function $L(\cdot,\cdot)$ is invariant to $\opO (p)$, 
%    .. likelihood is invariant .. prior is $\pi^{\textnormal{Haar}}$ on $\opO(p)$ .. 
    and the data are generated from the oracle eigenvalue model. According to Lemma \ref{lemm2a}, posterior samples of the precision matrix are given by $\vec{\Omega}=\vec{\Gamma}\vec{\Lambda}^{-1}\vec{\Gamma}^\top$ for $\vec{\Gamma}  \sim  \pi^{\opHaar} ( \vec{\Gamma} | \vec{x}, \vec{\lambda} )$.
    % , i.e.
    % \begin{align*}
    %     \vec{\Gamma} &\sim \pi^{\textnormal{Haar}},\; \textnormal{the Haar measure on $\opO(p)$}, \\ \vec{\Sigma} &= \vec{\Gamma} \vec{\Lambda} \vec{\Gamma}^\top, \;\;\;
    %     \vec{X}^{i} \sim N(\vec{0},\vec{\Sigma}), \;\; \textnormal{independently, $i=1,\dots,n$}
    % \end{align*}
    %$\vec{S}=\vec{V}\vec{L}\vec{V}^\top$ for the sample covariance matrix and 
    %Let $\vec{\Omega}=\vec{\Gamma}\vec{\Lambda}^{-1}\vec{\Gamma}^\top$ be a matrix estimand where $\vec{D}=\opdiag(\delta_1,\dots,\delta_p)$ is a diagonal matrix with entries, and $\vec{\Gamma}$ is the same eigenvector matrix as in the definition of $\vec{\Sigma}$. Then t
    Then the oracle Bayes rule for estimating $\vec{\Omega}$ under loss $L(\vec{\Omega},\hat{\vec{\Omega}})$ is:
    \begin{align*}
        \hat{\vec{\Omega}}^{\textnormal{Haar}}(\vec{x},\vec{\lambda}) := \argmin_{\hat{ \vec{\Omega}}}  \left[
E_{\vec{\Gamma}  \sim  \pi^{\opHaar} ( \vec{\Gamma} | \vec{x}, \vec{\lambda} ) } 
L (  \vec{\Gamma}\vec{\Lambda}^{-1}\vec{\Gamma}^\top, \hat{\vec{\Omega}}) \right]=\vec{V} \opdiag(\hat{\omega}_1,\dots,\hat{\omega}_p) \vec{V}^\top
    \end{align*}
    where each $\hat{\omega}_k$ has the same form as $d_k$ in Proposition \ref{prop:oracle-bayes-rules} with $\lambda_j$ replaced by $\lambda_j^{-1}$ throughout (see Appendix \ref{sec:precision-rules}). Furthermore, the oracle Bayes rule for estimating the precision matrix has the minimum risk among all orthogonally equivariant estimators. %, for each loss $L\in \{L_0,L_1,L_2\}$.
    %depends on $\vec{M}$ only through $\vec{D}$ and the posterior distribution of its eigenvectors
    % , i.e.~there exist functions $F_1,\dots,F_p$ depending on the loss function, such that
    % %and $\pi^{\textnormal{Haar}}(\vec{\Gamma} \mid \vec{x},\vec{\lambda})$, such that
    % \begin{align*}
    %     \hat{\omega}_i = F_i(\delta_1,\dots,\delta_p;\pi^{\textnormal{Haar}}(\vec{\Gamma} \mid \vec{x},\vec{\lambda})).
    % \end{align*}
    %{\color{red}$F_i$ depends on the loss, likelihood, and posterior distribution of $\vec{\Gamma}$}
\end{corollary}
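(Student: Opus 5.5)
The plan has three parts: reduce to the diagonal case, rerun the covariance computations with $\lambda_j$ replaced by $\lambda_j^{-1}$, and deduce optimality from the general Haar results.

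\textbf{Reduction to diagonal form.} By Lemma \ref{lemm2a}(a), if $\vec{\Sigma}=\vec{\Gamma}\vec{\Lambda}\vec{\Gamma}^\top$ with $\vec{\Gamma}$ a posterior draw, then $\vec{\Omega}=\vec{\Sigma}^{-1}=\vec{\Gamma}\vec{\Lambda}^{-1}\vec{\Gamma}^\top$. So posterior draws of $\vec{\Omega}$ come from the same Bingham posterior (\ref{eq34a}) with eigenvalue matrix $\vec{\Lambda}^{-1}$, and the posterior expected loss is
\[
\rho(\hat{\vec{\Omega}}\mid\vec{x}) = E_{\vec{\Gamma}\sim\pi^{\opHaar}(\vec{\Gamma}\mid\vec{x},\vec{\lambda})}\, L(\vec{\Gamma}\vec{\Lambda}^{-1}\vec{\Gamma}^\top,\hat{\vec{\Omega}}).
\]
Lemma \ref{cor34} says the posterior of $\vec{\Gamma}$ is $\opO(p)$-equivariant. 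With invariance of $L$, this gives $\rho(\vec{R}^\top\hat{\vec{\Omega}}\vec{R}\mid g_{\vec{R}}(\vec{x}))=\rho(\hat{\vec{\Omega}}\mid\vec{x})$. Hence the minimizer is $\opO(p)$-equivariant, exactly as in Lemma \ref{prop2w}. Corollary \ref{cor44} then gives $\hat{\vec{\Omega}}^{\opHaar}=\vec{V}\opdiag(\hat\omega_1,\dots,\hat\omega_p)\vec{V}^\top$, with equal trailing entries when $n\le p-2$. It therefore suffices to minimize over diagonal $\opdiag(\omega_1,\dots,\omega_p)$ when $\vec{S}=\vec{L}$.

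\textbf{Computing the diagonal entries.} For each loss I would repeat the Appendix derivation of Proposition \ref{prop:oracle-bayes-rules} with $\vec{\Lambda}$ replaced by $\vec{\Lambda}^{-1}$.
\begin{itemize}
\item Under $L_0$, the objective is $\sum_k \omega_k^2 - 2\sum_k \omega_k E[(\vec{\Gamma}\vec{\Lambda}^{-1}\vec{\Gamma}^\top)_{kk}]+\mathrm{const}$. Its minimizer is $\hat\omega_k=\sum_j\lambda_j^{-1}E[\Gamma_{kj}^2\mid\vec{S}=\vec{L}]$.
\item Under $L_1$, differentiate $\sum_k\omega_k E[(\vec{\Omega}^{-1})_{kk}]-\sum_k\log\omega_k$, using $\vec{\Omega}^{-1}=\vec{\Gamma}\vec{\Lambda}\vec{\Gamma}^\top$. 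This gives $\hat\omega_k=(\sum_j\lambda_jE[\Gamma_{kj}^2\mid\vec{S}=\vec{L}])^{-1}$.
\item Under $L_2$, the objective is the quadratic $\sum_{k,\ell}\omega_k\omega_\ell E[(\vec{\Omega}^{-1})_{k\ell}^2]-2\sum_k\omega_kE[(\vec{\Omega}^{-1})_{kk}]$. This yields the linear system with $\lambda$ in place of $\lambda^{-1}$.
\end{itemize}
In each case the $\lambda_j$ of Proposition \ref{prop:oracle-bayes-rules}'s formula is replaced by $\lambda_j^{-1}$, since the estimand's eigenvalues are $\lambda_j^{-1}$ and the loss involves the estimand's inverse where applicable.

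\textbf{Case $p>n$.} Here I would impose the constraint $\omega_{n+1}=\dots=\omega_p=\bar\omega$ from Corollary \ref{cor44} and optimize over $\bar\omega$. This reproduces the pooled formulas, summing over $k\ge n+1$.

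\textbf{Minimum risk among equivariant estimators.} The oracle model puts the Haar prior on $\vec{\Gamma}$, so the induced prior on $\vec{\Omega}$ is $\opO(p)$-invariant. The group $\opO(p)$ acts transitively on precision matrices with eigenvalues $\vec{\lambda}^{-1}$. By Theorem \ref{cor1}, any equivariant $\hat{\vec{\Omega}}$ paired with an invariant loss has constant risk on this orbit, so its risk equals its Haar-average risk. The oracle rule minimizes the Haar-average risk and is itself equivariant. By Theorem \ref{cor11}, exactly as in Proposition \ref{prop3}, it therefore has the smallest risk.

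\textbf{Main obstacle.} I expect the main difficulty to be justifying that the argmin over all matrices can be restricted to diagonal ones without loss. This requires either strict convexity and uniqueness of the posterior minimizer, so that equivariance forces the Corollary \ref{cor44} form, or a direct argument. For the direct argument I would average over the sign-flip matrices $\vec{B}=\opdiag(\pm1)$, which fix $\vec{L}$. These make the off-diagonal posterior moments $E[\Gamma_{ki}\Gamma_{\ell i}]$ vanish for $k\neq \ell$, and convexity of each loss in its estimate argument would then let me replace a minimizer by its sign-averaged, diagonal version.
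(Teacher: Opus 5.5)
Your proposal is correct and takes essentially the paper's route. Posterior equivariance (Lemma~\ref{cor34}) together with Corollary~\ref{cor44} reduces the problem to diagonal estimates at $\vec{S}=\vec{L}$; the Appendix~\ref{sec:proofs} derivations are then rerun with $\lambda_j$ replaced by $\lambda_j^{-1}$, exactly as in Appendix~\ref{sec:precision-rules}; and minimum risk follows from Lemma~\ref{prop2w} and Theorem~\ref{cor11}, as in Proposition~\ref{prop3}. Your sign-flip averaging plus convexity argument for restricting the argmin to diagonal matrices tightens a step the paper takes for granted, since the paper implicitly assumes the posterior minimizer is unique, but it does not change the overall approach.
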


\begin{remark}[No data case]
When $n = 0$, the oracle Bayes rules for both the covariance and precision matrices reduce to a diagonal matrix with a common constant. 
The specific value of this constant is determined by the true eigenvalues and the chosen loss function. The table below reports these constants for 
the oracle covariance matrix estimators as well as for the oracle precision matrix estimators.\end{remark}
    
%{\color{red}to do: add precision matrix loss functions and oracle precision matrix estimators when $n\geq 1$.}
\medskip
\begin{center}
\renewcommand{\arraystretch}{1.80}
\setlength{\tabcolsep}{24pt}
\begin{tabular}{ccc}
%\toprule
\textbf{Loss} & $\hat{\vec{\Sigma}}$ \textbf{diagonal} & $\hat{\vec{\Omega}}$ \textbf{diagonal} \\
\midrule
Frobenius & $\displaystyle\frac{\textstyle\sum \lambda_j}{p}$ & $\displaystyle\frac{\textstyle\sum \lambda_j^{-1}}{p}$ \\[8pt]
Stein & $\displaystyle\frac{p}{\textstyle\sum \lambda_j^{-1}}$ & $\displaystyle\frac{p}{\textstyle\sum \lambda_j}$ \\[8pt]
$L_2$ & $\displaystyle\frac{\textstyle\sum \lambda_j^{-1}}{\textstyle\sum \lambda_j^{-2}}$ & $\displaystyle\frac{\textstyle\sum \lambda_j}{\textstyle\sum \lambda_j^{2}}$ \\ [8pt]
\bottomrule
\end{tabular}
\end{center}

\bigskip
\section{Implementation of the oracle model}

In this section we present the sampling algorithm for the posterior covariance matrix distribution we 
use for implementing the oracle Bayes rules.

\subsection{Sampling orthogonal matrices}

We begin by describing the subgroup algorithm of \citet{DiaconisShahshahani1987}
for generating orthogonal matrices from the Haar measure on $\opO (p)$ that is the basis for our
Gibbs sampler algorithms.
For $\rho \sim U [0, 2 \pi]$ and $b\sim \text{Uniform}\{-1,+1\}$ independently,
%$b = \pm 1$ with probability $1/2$,
 \[
  \vec{\Gamma}_2 = \begin{bmatrix}
\cos (\rho)  & \sin(\rho) \\
-b \sin (\rho)  & b \cos(\rho)
\end{bmatrix},
\]
is a random sample from the Haar measure on $\opO (2)$.
%For $k = 3, \cdots, p$, the random sample of $\opO(k)$ is given by  random coset representatives for 
%$\opO(k-1)$ in $\opO(k)$ for the random sample of $\opO(k-1)$.
For $k = 3, \cdots, p$, let $\vec{\Gamma}_{k-1}$ be a sample from the Haar measure on $\opO (k-1)$.
Samples from the Haar measure on $\opO (k)$ are given by 
 \begin{equation} \label{df1}
   \vec{\Gamma}_{k} =  h( \vec{U}_{k}) 
   \begin{bmatrix}
1         & 0 & \cdots & 0  \\
0         &    &           &     \\
\vdots  &    &    \vec{\Gamma}_{k-1}   &     \\
0         &     &                           & 
\end{bmatrix}_{k \times k}.
\end{equation}
For the householder matrix, $h( \vec{U}_k) = I_{k \times k}  - 2 \vec{v} \vec{v}^\top$, 
%\dx{(should this be $I_{k \times k} - 2 \vec{u}\vec{u}^\top$? Otherwise the dimensions don't match.)}
where $\vec{v} = (\vec{e}_1 - \vec{U}_k) / \| \vec{e}_1 - \vec{U}_k \|$,
for unit vector  $\vec{e}_1  = (1, 0, \cdots, 0)^\top$
and $\vec{U}_k = \vec{Z} / \| \vec{Z} \|$ with $\vec{Z} \sim N( \vec{0}, I_{k \times k})$ 
%\dx{(We already used $\vec{V}_k$ to denote sample eigenvector in Section~\ref{sec:equivariance-to-O(p)}. Perhaps use $\vec{U}_k$ instead of $\vec{V}_k$ throughout?)}.

In our implementation,  the sample from the Haar measure on $\opO (p)$
is the product of $ h( \vec{U}_{p})$ and the $p \times p$ matrices
constructed by imbedding the householder matrices for $k = p-1, \cdots, 3$, and $\vec{\Gamma}_2$  
in the bottom-right of the  $p \times p$ identity matrix,
 %\begin{eqnarray*} 
 \begin{align*} 
 \vec{\Gamma} (\vec{U}_{p}, \vec{U}_{p-1}, \dots, \vec{U}_{3}, \rho, b)  = 
%&  &   \; \;   
h( \vec{U}_{p}) \times 
   \begin{bmatrix}
1         & 0 & \cdots & 0  \\
0         &    &           &     \\
\vdots  &    &    h( \vec{U}_{p-1})   &     \\
0         &     &                           & 
\end{bmatrix}_{p \times p} \hspace{-1.2em}
\times \;  \cdots \;
\times  \; \begin{bmatrix}
1         & 0 & \cdots & 0  \\
0         &    &           &     \\
\vdots  &    &   \vec{\Gamma}_2 ( b, \rho) &     \\
0         &     &                           & 
\end{bmatrix}_{p \times p}.
\end{align*}
%\end{eqnarray*}
We denote by 
$\vec{\Sigma} (\vec{U}_{p}, \vec{U}_{p-1}, \dots, \vec{U}_{3}, \rho, b; \vec{\Lambda}) = 
  \bar{g}_{\vec{\Gamma}^\top (\vec{U}_{p},  \dots , b)}  ( \vec{\Lambda})$
the covariance matrix produced by this orthogonal matrix.

\subsection{Sampling the posterior Oracle model} \label{sec:metrpls}

We employ the Gibbs sampler in Algorithm \ref{alg:oracle} to draw samples from 
$\pi^{\opHaar} ( \vec{\Gamma} | \vec{x}, \vec{\lambda} )$.
At each Gibbs iteration, we carry out Metropolis updates for the components $\vec{U}_k$ for $k = p, \dots, 3$ as well as for $\rho$ and $b$.
In each Metropolis update, we generate a proposal for $\vec{U}_k$ of the form 
$\tilde{\vec{V}}_k = \vec{W} / \| \vec{W} \|$, where $\vec{W} \sim N( \vec{U}_k, I_{k \times k} \cdot \sigma^2)$, with 
$\sigma^2$ tuned adaptively to achieve an adequate acceptance rate, 
and then evaluate the probability%\dx{(What is $m$? Should it be $p$? in the arguments for $\vec{\Sigma}$ below)}
\[
\hbox{Pr}_k = 
\min \left(\frac{f ( \vec{x} | \vec{\Sigma} ( \vec{U}_{p}, \dots, \vec{U}_{k+1}, \tilde{\vec{V}}_k, \vec{U}_{k-1},\dots, \vec{U}_{3}, \rho, b;  \vec{\Lambda})}
{ f ( \vec{x} | \vec{\Sigma} ( \vec{U}_{p}, \dots, \vec{U}_{k+1}, {\vec{V}}_k, \vec{U}_{k-1},\dots, \vec{U}_{3}, \rho, b;  \vec{\Lambda})}, 1 \right).
 \]
Then, with probability $\hbox{Pr}_k$ the value of $\vec{U}_k$ is updated to $\tilde{\vec{V}}_k$,
otherwise  the value of $\vec{U}_k$ remains unchanged.
In the Metropolis step for updating $\rho, b$,  the proposal value is 
$\tilde{\rho} \sim N( \rho, \sigma^2 )$ 
with adaptively tuned $\sigma^2$ and $\tilde{b} \sim \text{Uniform} \{-1,+1\}$ 
%\dx{(This could be read as the (continuous) uniform distribution on the interval $(-1,1)$. Do we mean $\text{Uniform}\{-1,+1\}$? 
%There is also $m$ below, should it be $p$?)}
and the updating probability is 
\[
\hbox{Pr}_{\rho, b} = 
\min \left(\frac{f ( \vec{x} | \vec{\Sigma} ( \vec{U}_{p}, \dots,  \vec{U}_{3}, \tilde{\rho}, \tilde{b};  \vec{\Lambda}))}
{ f ( \vec{x} | \vec{\Sigma} ( \vec{U}_{p}, \dots, \vec{U}_{3}, \rho, b;  \vec{\Lambda}))}, 1 \right).
 \]

\begin{algorithm}[]
	\SetAlgoLined
	\SetKwInOut{Set}{Set}
	\SetKwInOut{Input}{Input}
	\SetKwInOut{Output}{Output}
	\Set {number of Gibbs sampler iterations $G$,  initial values of 
    $\vec{U}_p^{(0)}, \dots, \vec{U}_3^{(0)}, \rho^{(0)}, b^{(0)}$ }
	\Input{$\vec{x}$, $\vec{\lambda}$}
%	\Output{posterior samples $\vec{\Sigma}^{(1)},  \cdots, \vec{\Sigma}^{(G)}$}
	
	\vspace{.2cm}
	
	\For{$g = 1, \dots,  G$}{
		
		\For{$k = p,  \dots, 3$}{
			Perform Metropolis step for updating value of $\vec{U}_k^{(g)}$ }

		Perform Metropolis step for updating value of $\rho^{(g)}$ and $b^{(g)}$
		
		Compute $\vec{\Gamma}^{(g)} = \vec{\Gamma} ( \vec{U}_{p}^{(g)}, \dots, \vec{U}_{3}^{(g)},
        \rho^{(g)}, b^{(g)})$
	}
	\caption{Oracle covariance matrix Gibbs sampler}
	\label{alg:oracle}
\end{algorithm}

\medskip
In Section \ref{sec:oracle-eigenvalues-bayes-rules}, %this report, 
we presented %derived 
oracle Bayes rules of the form  
$\hat{\vec{\Sigma}}(\vec{S}) = \vec{V}\hat{\vec{\Sigma}}(\vec{L})\vec{V}^\top$,
where $\hat{\vec{\Sigma}}(\vec{L}) = \opdiag(d_1,\dots,d_p)$  
and each $d_k$ is a function of $\vec{\lambda}$ and of posterior expectations of functions of $\vec{\Gamma}$.  
When $\vec{\lambda}$ is known, these Bayes rules can be computed by replacing the posterior expectations involving $\vec{\Gamma}$ with empirical means of the corresponding functions evaluated on posterior samples $\vec{\Gamma}^{(g)}$ generated by the oracle Gibbs sampler.

%%%%%%%%%%%%%%%%%%%%%%%%%%%%%
% 				Section 5 - hBayes modeling
%%%%%%%%%%%%%%%%%%%%%%%%%%%%%

\bigskip
\section{Hierarchical Bayes modeling}
\label{sec:hbayes}

%In this section we present the hierarchical Bayes (hBayes) approach we use to approximate the oracle Bayes rules for the general case that we don't get to observe $\vec{\lambda}$.
In this section, we present a hierarchical Bayes (hBayes) approach for approximating the oracle Bayes rules in the general case where $\vec{\lambda}$ is unobserved.

\subsection{The finite Polya tree model}

The hierarchical Bayes (hBayes) approach assumes that the components of $\vec{\lambda}$ are sampled from a finite Polya tree (FPT) model.
The FPT model generates distributions with piecewise constant density functions on a dyadic partition of ${\cal I}_0 = ( a_{\min}, a_{\max}]$, 
corresponding to a fixed endpoints vector $\vec{a} = (a_{\min} = a_0 < a_1 <  \cdots < a_{2^L-1}  \le  a_{2^L} = a_{\max})$. 
The dyadic partition consists of subintervals
${\cal I}_{l,i} = (a_{(i-1) \cdot 2^{L-l}}, a_{i \cdot 2^{L-l}}]$,  for $l = 1 ,\dots, L$ and $i = 1, \dots, 2^l$.
The parameters of the FPT model are the Beta distribution hyper-parameters, $( \alpha_{l, j}, \beta_{l, j} )$ with $j = 1 ,\dots, 2^{l-1}$,
for the conditional subinterval probabilities $\phi_{l, j} \sim \text{Beta} ( \alpha_{l, j}, \beta_{l, j})$:
 $\Pr({\cal I}_{1, 1} | {\cal I}_0  ) = \phi_{1, 1}$,   $\Pr({\cal I}_{1,2} |  {\cal I}_0 )  = 1 - \phi_{1, 1}$.
 For $l = 2 ,\dots, L$, $\Pr({\cal I}_{l, 2 \cdot j - 1} | {\cal I}_{l-1, j}) = \phi_{l, j}$ and $\Pr({\cal I}_{l, 2 \cdot j } | {\cal I}_{l-1, j}) = 1 - \phi_{l, j}$.
 The subinterval probabilities, $\Pr({\cal I}_{l,j})  = \pi_{l, j}$,
are products of the conditional subinterval probabilities.
$\pi_{1, 1} =  \phi_{1, 1}$ and $\pi_{1, 2} = 1 - \phi_{1, 1}$.
For $l = 2, \dots ,L$,
$\pi_{l,2 \cdot  j -1} = \phi_{l,  j} \cdot \pi_{l-1,j}$ and $\pi_{l,2 \cdot  j} = (1- \phi_{l,  j}) \cdot \pi_{l-1,j}$.
 The level $L$ subinterval probabilities, $\vec{\pi}_L = (\pi_{L, 1} \cdots \pi_{L, 2^L})$, define the step function density:
\begin{equation} \label{def-step}
f ( \theta  |  \vec{\pi}_L  ; \vec{a} )  =  \pi_{L,1} \cdot  \frac{ I_{(a_0 , a_1]} ( \theta) }{a_1 - a_0} + 
\pi_{L,2} \cdot \frac{ I_{(a_{1} , a_2]} ( \theta) }{ a_2 - a_1} + \cdots + 
\pi_{L,I} \cdot \frac{ I_{(a_{2^L-1} , a_{2^L}]} ( \theta) }{ a_{2^L} - a_{2^L-1}},
\end{equation}
 for indicator function $I_{(a_{i-1}, a_i]} (\theta)$.

\bigskip
\subsection{The hierarchical Bayes generative model}
For computing hBayes estimates for $\vec{\Sigma}$ we assume the following 
generative model for the FPT Beta random variables vector of $\boldsymbol{\phi}$,
the eigenvalue vector, the eigenvector matrix, and the data.

\begin{definition} {\bf The hierarchical Bayes generative model} \label{def:gen} 
\begin{enumerate}
\item Generate $ f ( \theta |  \vec{\pi}_L  ; \vec{a} )$ from the FPT model with $\phi_{l,i} \sim \text{Beta}(1,1)$.
\item For $j = 1 ,\dots ,p$,  generate  $\theta_j \sim f (  \theta | \vec{\pi}_L ;  \vec{a})$. 
\item Let $\vec{\Lambda} = \opdiag( \vec{\lambda})$, with 
$\vec{\lambda} = \left( \theta_{(1)} >  \theta_{(2)} > \cdots > \theta_{(p)} \right)$.
\item Sample $\vec{\Gamma} \sim \pi^{\opHaar}$ and compute $\vec{\Sigma} = \vec{\Gamma}  \vec{\Lambda}  \vec{\Gamma}^\top$
\item For $i = 1, \dots, n$, sample  $\vec{X}^i \sim N( \vec{0}, \vec{\Sigma})$.
 \end{enumerate}
 \end{definition}
 
We use Gibbs sampling algorithm \ref{alg:hBayes} for generating posterior samples of hBayes 
generative model.
The metropolis samples for generating $\vec{\Gamma}$ are run as described in Section \ref{sec:metrpls},
 where in this case $\vec{\lambda}$ is sampled in the algorithm and not a fixed oracle.
To derive the posterior conditional samples of $\vec{\lambda}$ given $\vec{\Gamma}$ and  $\vec{\pi}_L$, 
we use the fact that the columns of $\vec{x} \vec{\Gamma}$ are independent with $N(0, \lambda_j)$ components.
For $l = 1 ,\dots, L$ and  $i = 1 ,\dots, 2^l$, 
let $N_{l, i} ( \vec{\lambda})$ denote that number of $\lambda_j$ in subinterval ${\cal I}_{l, i}$.
According to the conjugacy of the FPT model,  
the conditional distribution of the  Beta random variables given $\vec{\lambda}$ in the hBayes generative model is
$\phi_{l, j} | \vec{\lambda} \  \sim  \ \text{Beta}( 1 + N_{l, 2 \cdot j - 1}, 1 + N_{l, 2 \cdot j})$ for $j = 1 ,\dots, 2^{l-1}$.

\begin{algorithm}[]
	\SetAlgoLined
	\SetKwInOut{Set}{Set}
	\SetKwInOut{Input}{Input}
	\SetKwInOut{Output}{Output}
	\Set {number of Gibbs sampler iterations $G$,  initial values of $\vec{\pi}^{(0)}_L$, $\vec{\lambda}^{(0)}, \vec{U}_p^{(0)},  \dots, \vec{U}_3^{(0)}, \rho^{(0)}, b^{(0)}$ }
	\Input{$\vec{x}$}
%	\Output{posterior samples $ \vec{\Sigma}^{(1)} \cdots \vec{\Sigma}^{(G)}$, $\vec{\pi}^{(1)}_L \cdots \vec{\pi}^{(G)}_L$, $\vec{\lambda}^{(1)} \cdots \vec{\lambda}^{(G)}$}
	
	\vspace{.2cm}
	
	\For{$g = 1, \dots,  G$}{
		
		\For{$k = p,  \dots, 3$}{
			Perform Metropolis step for updating value of $\vec{U}_k^{(g)}$ 
            given $\vec{\lambda}^{(g-1)}$ }

		Perform Metropolis step for updating value of $\rho^{(g)}$ and $b^{(g)}$  
        given $\vec{\lambda}^{(g-1)}$ 

        Compute $\vec{\Gamma}^{(g)} = \vec{\Gamma} ( \vec{U}_{p}^{(g)}, \dots, \vec{U}_{3}^{(g)},
        \rho^{(g)}, b^{(g)})$

		Sample value of $\vec{\lambda}^{(g)}$ from its conditional posterior distribution given 
        $\vec{\Gamma}^{(g)}$ and  $\vec{\pi}^{(g-1)}_L$
		
		Sample values of $\boldsymbol{\phi}$ from their conditional posterior distribution given $\vec{\lambda}^{(g)}$ and 
		use them to compute updated value of $\vec{\pi}^{(g)}_L$ 
		
	}
	\caption{hBayes covariance matrix Gibbs sampler}
	\label{alg:hBayes}
\end{algorithm}

The support of the FPT model distribution $(a_{\min}, a_{\max}]$ is chosen empirically so that it encompasses all observed sample eigenvalues. Due to the strong regularisation embedded in the FPT model, the posterior draws of $f(\boldsymbol{\phi} \mid \vec{x})$ in the hBayes generative framework provide stable estimates of the marginal eigenvalue distribution.

By writing the posterior distribution of $\vec{\Gamma}$ and $\vec{\lambda}$ in the hBayes model as
\begin{equation} \label{eq:gen:pst}
\pi^{\ophBayes} ( \vec{\Gamma}, \vec{\lambda} \mid \vec{x} ) = 
\int f ( \vec{\Gamma}, \vec{\lambda}, \boldsymbol{\phi} \mid \vec{x}) \, d \boldsymbol{\phi} = 
\int f ( \vec{\Gamma} \mid  \vec{\lambda}, \vec{x})\,
f ( \vec{\lambda} \mid \boldsymbol{\phi}, \vec{x})\,
f ( \boldsymbol{\phi} \mid \vec{x}) \, d \boldsymbol{\phi},
\end{equation}
where $f ( \vec{\Gamma} \mid \vec{\lambda}, \vec{x})$ denotes the posterior distribution of $\vec{\Gamma}$ in the oracle model, we see that if the FPT posterior $f(\boldsymbol{\phi} \mid \vec{x})$ can successfully recover the marginal distribution of $\vec{\lambda}$—and thereby induce $f ( \vec{\lambda} \mid \boldsymbol{\phi}, \vec{x})$ that places high probability mass near the oracle $\vec{\lambda}$—then the resulting hBayes posterior for $\vec{\Sigma}$ can serve as a good approximation to the oracle posterior distribution of $\vec{\Sigma}$.

\subsection{Approximating the oracle Bayes rules} \label{sec:app}
We propose using the hBayes framework to construct two plug-in approximations to the oracle Bayes rules.  
For the first hBayes approximation, we estimate $\vec{\lambda}$ by the posterior medians based on the samples $\vec{\Gamma}^{(g)}$, denoted $\hat{\vec{\lambda}}$, and then compute $d_k$ by applying the relevant function to $\vec{\Gamma}^{(g)}$ and taking sample means of these functions over posterior draws $\vec{\Gamma}^{(g)}$ from the hBayes Gibbs sampler.
For the second hBayes approximation, we first obtain $\hat{\vec{\lambda}}$ using the hBayes Gibbs sampler and then run the oracle Gibbs sampler with $\hat{\vec{\lambda}}$ substituted for $\vec{\lambda}$. In this case, each $d_k$ is evaluated as a function of $\hat{\vec{\lambda}}$ together with sample means of the corresponding functions of posterior draws $\vec{\Gamma}^{(g)}$ from the oracle Gibbs sampler.  
Because the prior distribution for $(\vec{\lambda}, \vec{\Gamma})$ in the hBayes generative model 
is invariant to $\opO(p)$, both of these approximate Bayes rules yield $\opO(p)$-equivariant estimators of $\vec{\Sigma}$.

%\medskip
%The hierarchical Bayes rule is the estimator that minimizes the average 
%risk for the hBayes generative model.
%\begin{equation} \label{hBayes-BR} 
%\hat{\Sigma}^{\ophBayes}   ( \vec{x})  := \argmin_{\hat{ \vec{\Sigma}}}  \left[
%E_{\vec{\Sigma}  \sim  \pi^{\ophBayes} ( \vec{\Sigma} | \vec{x} ) } 
%L (  \vec{\Sigma}, \hat{\vec{\Sigma}}) \right].
%\end{equation}

%\begin{proposition} \label{prop2w4}
%For an invariant loss function, hBayes  rules are invariant under $\opO (p)$ and under multiplication by %$\delta \vec{I}$.
%\end{proposition}

%%%%%%%%%%%%%%%%%%%%%%%%%%%%%%%%%%%%%%%%%%%%%%%%%%%%%%%%%%
% 				Section 7 - Simulation results
%%%%%%%%%%%%%%%%%%%%%%%%%%%%%%%%%%%%%%%%%%%%%%%%%%%%%%%%%%

\bigskip
\section{Simulation results}

In this section, we report simulation results that illustrate the behavior of the Gibbs sampling algorithms and compare the risks of the oracle and hBayes estimators with those of commonly used estimators.

\subsection{Gibbs sampler log-likelihood profiles}
Figure 1 shows the log-likelihood (\ref{norm-lik}) for simulated data with $p = 40$ and $n = 50$, based on a randomly generated eigenvector matrix $\vec{\Gamma}$ and a linearly decreasing eigenvalue vector $\vec{\lambda} = (40, 39, \ldots, 1)$.
The red line corresponds to $\log(f(\vec{x} \mid \vec{S}))$, and the blue line corresponds to $\log(f(\vec{x} \mid \vec{\Sigma}))$.
The black and green curves represent the log-likelihood trajectories $\log(f(\vec{x} \mid \vec{\Sigma}^{(g)}))$ for iterations $g = 1, \ldots, 5000$ of the oracle and hBayes samplers, respectively.
There is a substantial gap between the log-likelihood at $\vec{S}$, the MLE, and at $\vec{\Sigma}$.
The Gibbs sampler log-likelihood profiles are clustered around the log-likelihood for $\vec{\Sigma}$, indicating that the draws $\vec{\Sigma}^{(g)}$ tend to lie near $\vec{\Sigma}$ and away from $\vec{S}$.
Because the hBayes sampler updates both $\vec{\lambda}^{(g)}$ and $\vec{\Gamma}^{(g)}$, it yields more dispersed log-likelihood profiles than the oracle sampler, which is given $\vec{\lambda}$ and only samples $\vec{\Gamma}^{(g)}$.

\subsection{Eigenvalue estimation in hBayes sampler}
Figures 2 and 3 show how the FPT in the hBayes sampling algorithm estimates the eigenvalue distribution.  
For simulated data with $p = 40$ and $n = 50$, the results are presented in Figure 2, and for $n = 100$ in Figure 3, using a randomly 
generated eigenvector matrix $\vec{\Gamma}$ and three different eigenvalue configurations.  
The first configuration is exponentially decreasing, $\lambda_i = a \cdot \exp(-b \cdot i)$, 
with $a$ and $b$ chosen so that $\lambda_1 = p$ and $\lambda_p = 1$;  
the second is linearly decreasing, $\lambda_i = (p + 1) - i$ for $i = 1, \ldots, p$;  
the third is logarithmically decreasing, $\lambda_i = a \cdot \log(p + 1 - i) + 1$, with $a$ set so that $\lambda_1 = p$.  
%\dx{(Maybe change the ordering to match the figures? Exponential first, then linear, then logarithmically decreasing)}
The blue curves represent the empirical CDF (eCDF) of the true eigenvalues $\lambda_1, \ldots, \lambda_p$.  
The red curves represent the eCDF of the sample eigenvalues $l_1, \ldots, l_p$.  
In iteration $g$ of the hBayes sampler, the CDF of the estimated eigenvalue distribution is given by the cumulative sums 
$\pi^{(g)}_{L,1} + \cdots + \pi^{(g)}_{L,j}$ for $j = 1, \ldots, 2^L$.  
The green curves show the posterior median (solid line) and the $0.05$ and $0.95$ quantiles (dashed lines) of these cumulative sums, 
computed over iterations $g = 1501, \ldots, 5000$.  

Figure 4 presents boxplots summarizing the distribution of the eigenvalue estimates across $100$ simulated datasets 
with $p = 40$ and $n = 50$ for the three eigenvalue configurations.  
The red boxplots correspond to the distribution of the sorted sample eigenvalues, while the green boxplots correspond 
to the distribution of the posterior medians of the sorted eigenvalues from the hBayes sampler.  
The underlying sorted eigenvalues are shown in blue.

Figures 2–4 illustrate the well-known effect that sample eigenvalues tend to underestimate smaller eigenvalues 
and overestimate larger ones, 
and that accurately estimating a set of closely spaced leading eigenvalues is particularly challenging. 
Figures 2 and 3 indicate that the hBayes sampler generates eigenvalue distribution samples that approximate
the true eigenvalue distribution for small values of $\lambda$. However, 
for large $\lambda$, especially in the case of smaller 
sample sizes and the logarithmically decreasing eigenvalue configuration, the hBayes sampler yields eigenvalue distribution 
estimates that are biased upward. Figure 4 shows that the hBayes approach delivers substantially tighter 
eigenvalue estimates overall, with a slight overestimation of the leading eigenvalues.

\begin{figure} \label{fig1}
\includegraphics[width=1.00\textwidth,height=0.70\textwidth]{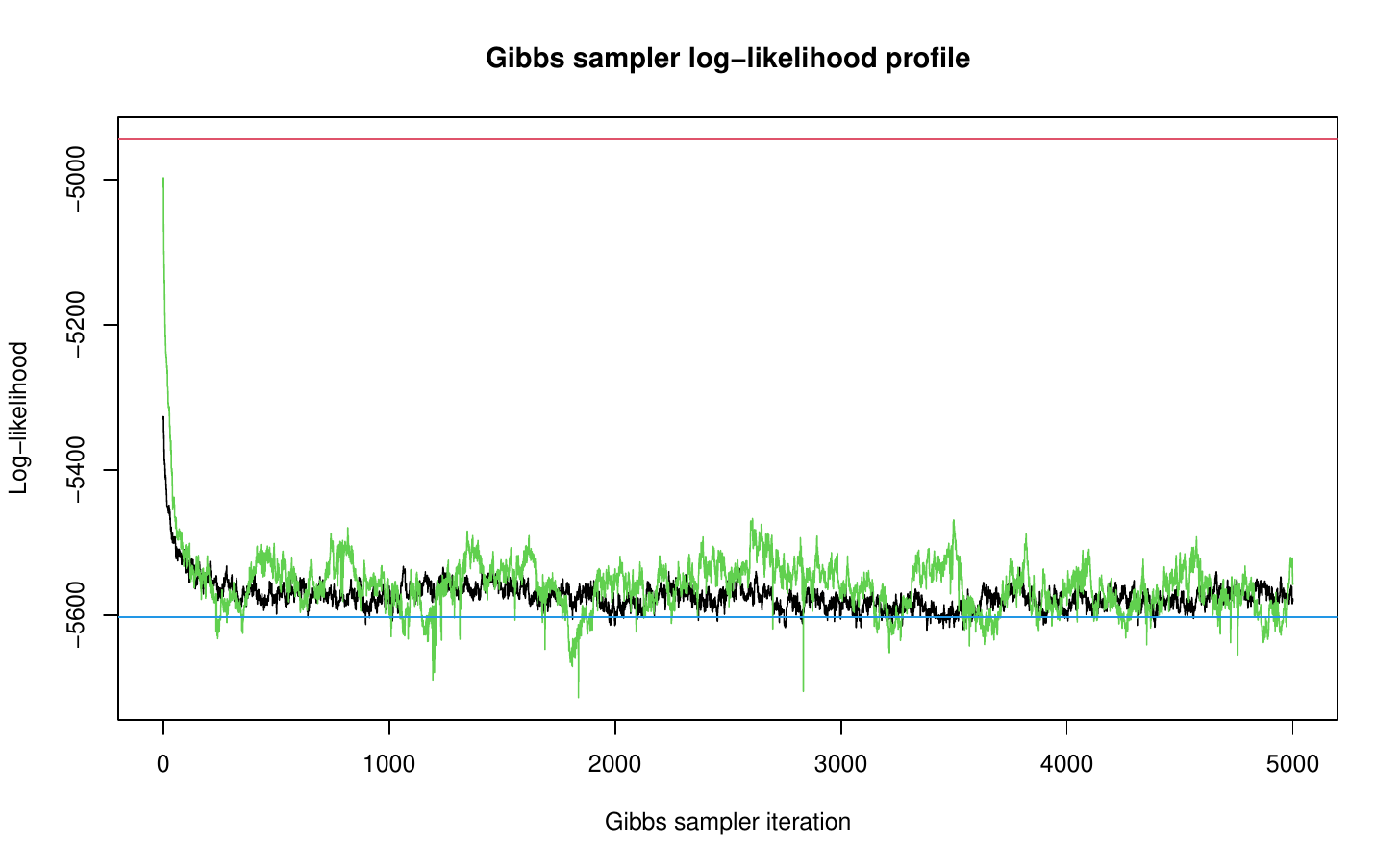}
\caption{Log-likelihood values in 5000 Gibbs sampler iterations. The black and green curves display the 
log-likelihood in each iteration of the oracle sampler and hBayes sampler.
The red and blue horizontal lines display the 
log-likelihood for the sample covariance matrix and for the real covariance matrix.}
\end{figure}

\begin{figure} \label{fig2a}
\includegraphics[width=1.00\textwidth,height=0.35\textwidth]{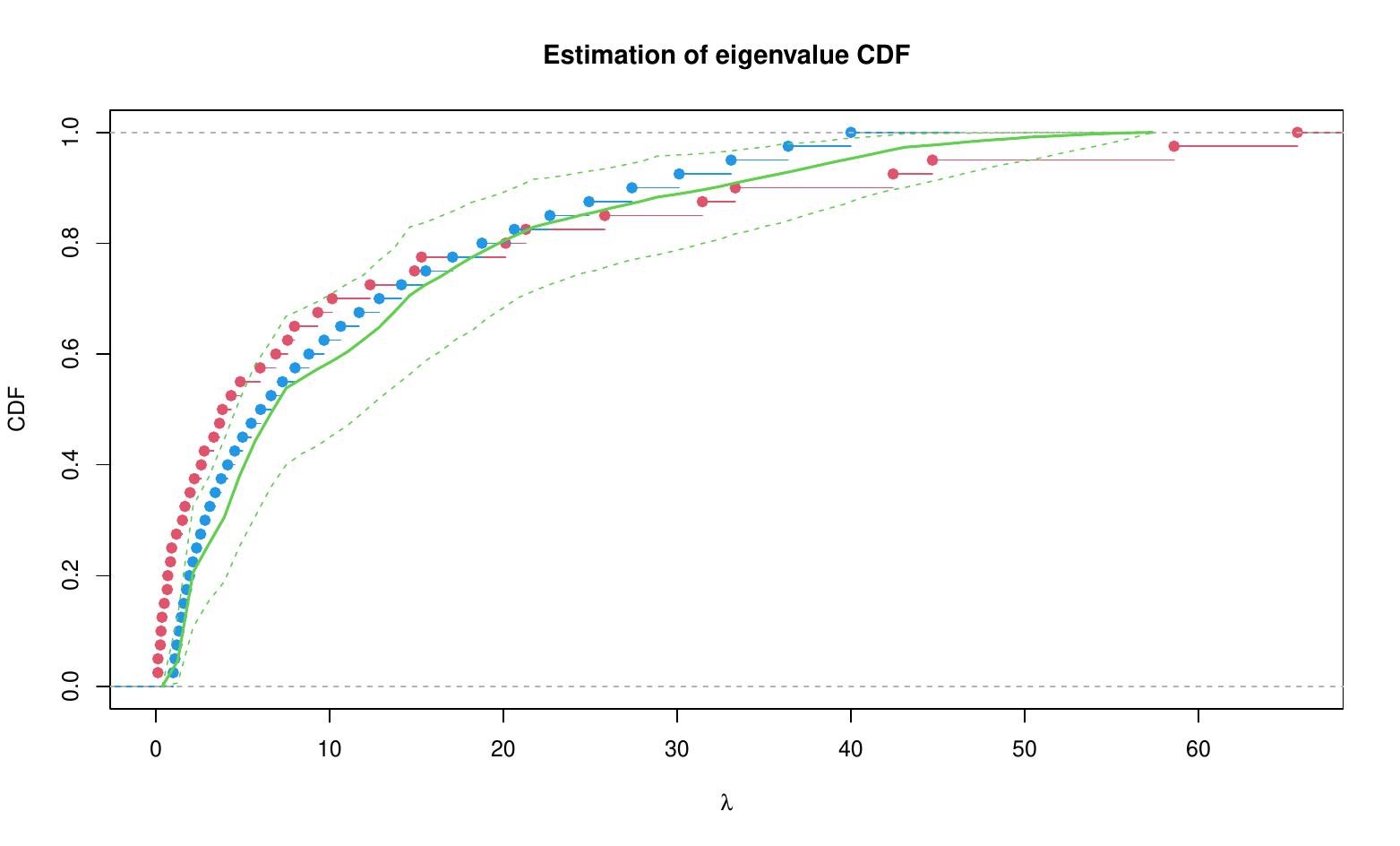}
\includegraphics[width=1.00\textwidth,height=0.35\textwidth]{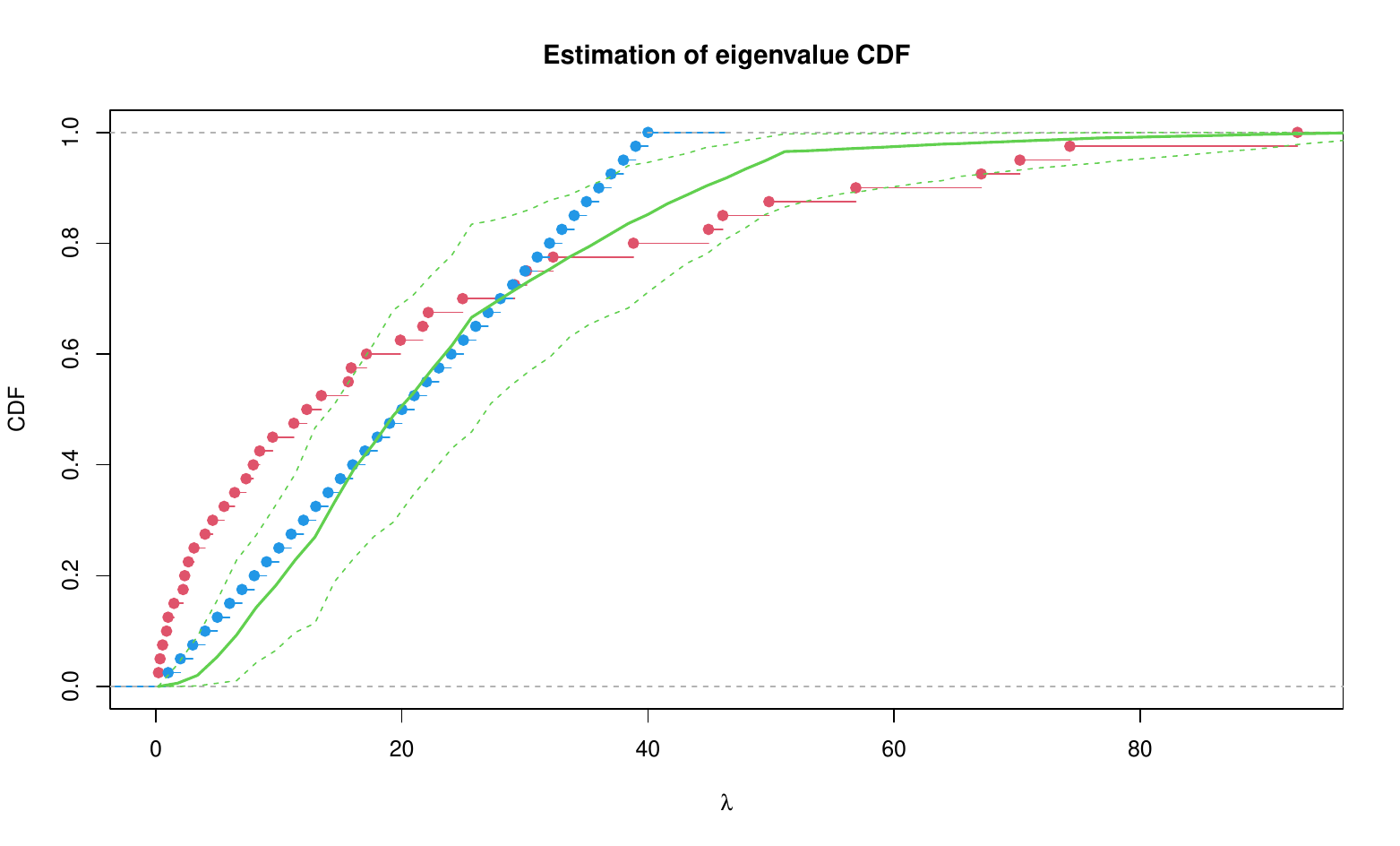}
\includegraphics[width=1.00\textwidth,height=0.35\textwidth]{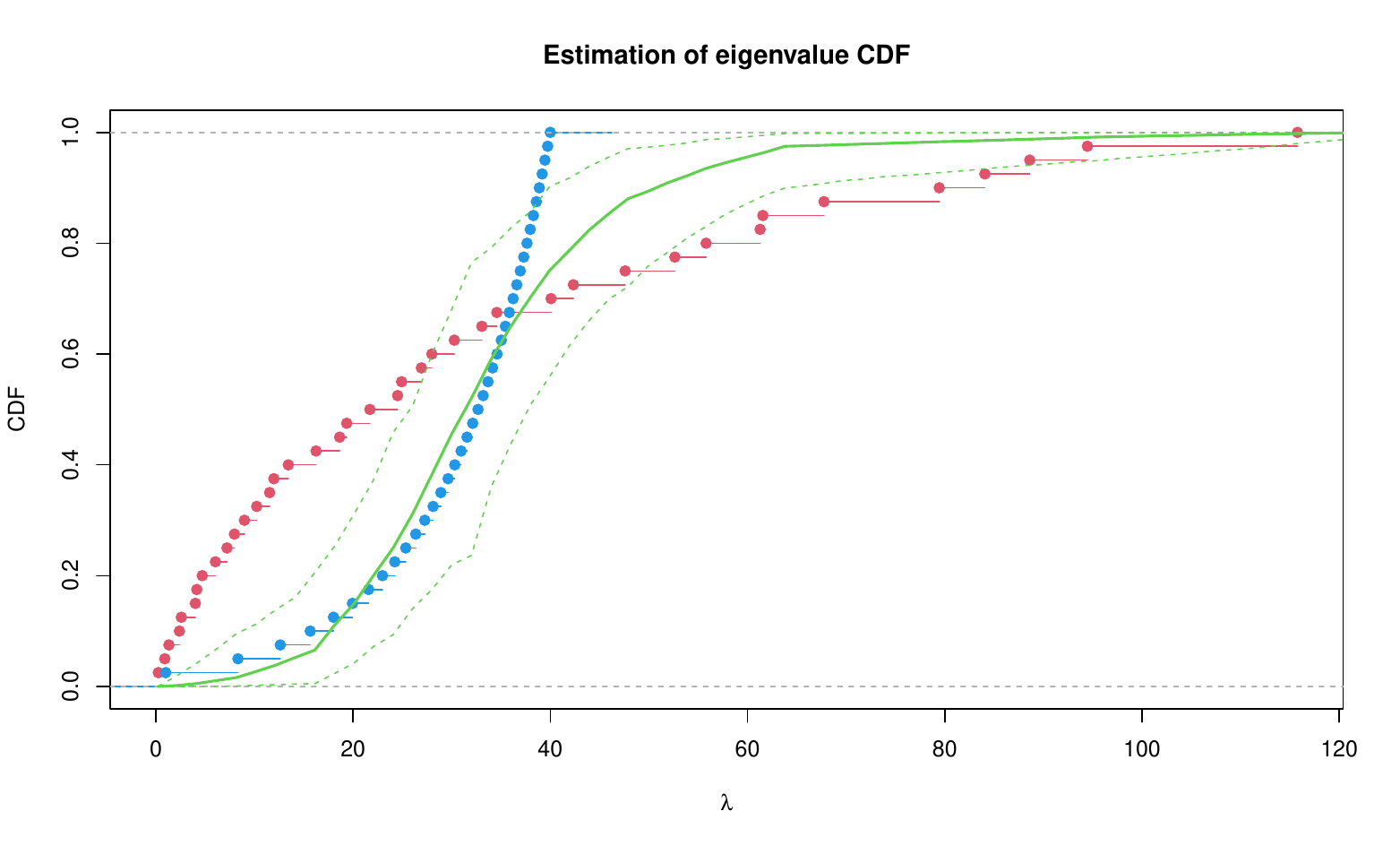}
\caption{Estimation of eigenvalue vector CDF with $p = 40$ and $n = 50$
for exponentially decreasing (top plot), linearly decreasing (middle plot),
and logarithmically decreasing (bottom plot) eigenvalue vector.
The eCDF of the eigenvalue vector is drawn in blue.
The eCDF of the sample eigenvalue vector is drawn in red.
The green curves display the posterior median (solid) 
and $0.05$ and $0.95$ posterior quantiles of eigenvalue vector CDF for the hBayes sampler.}
\end{figure}

\begin{figure} \label{fig2b}
\includegraphics[width=1.00\textwidth,height=0.35\textwidth]{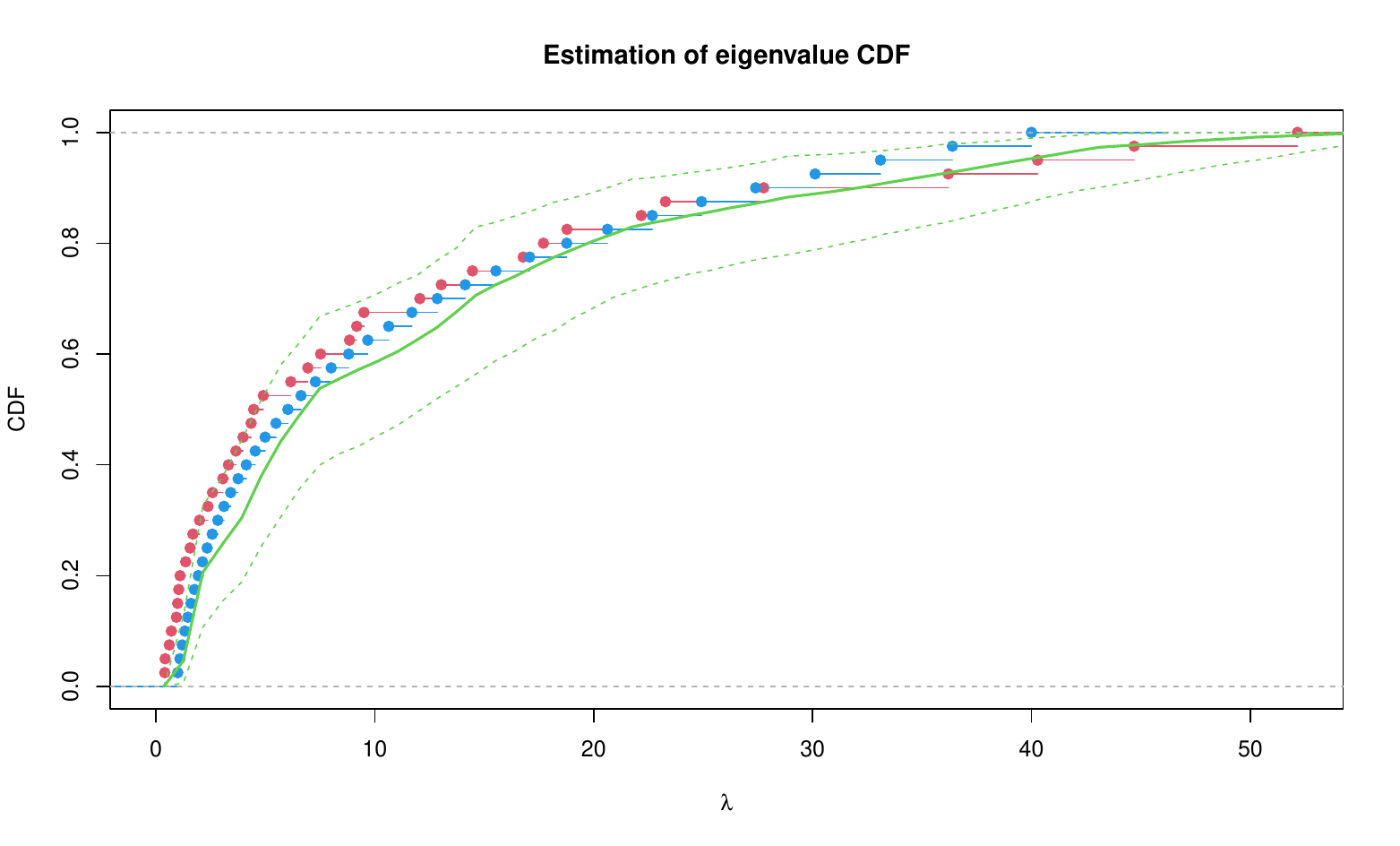}
\includegraphics[width=1.00\textwidth,height=0.35\textwidth]{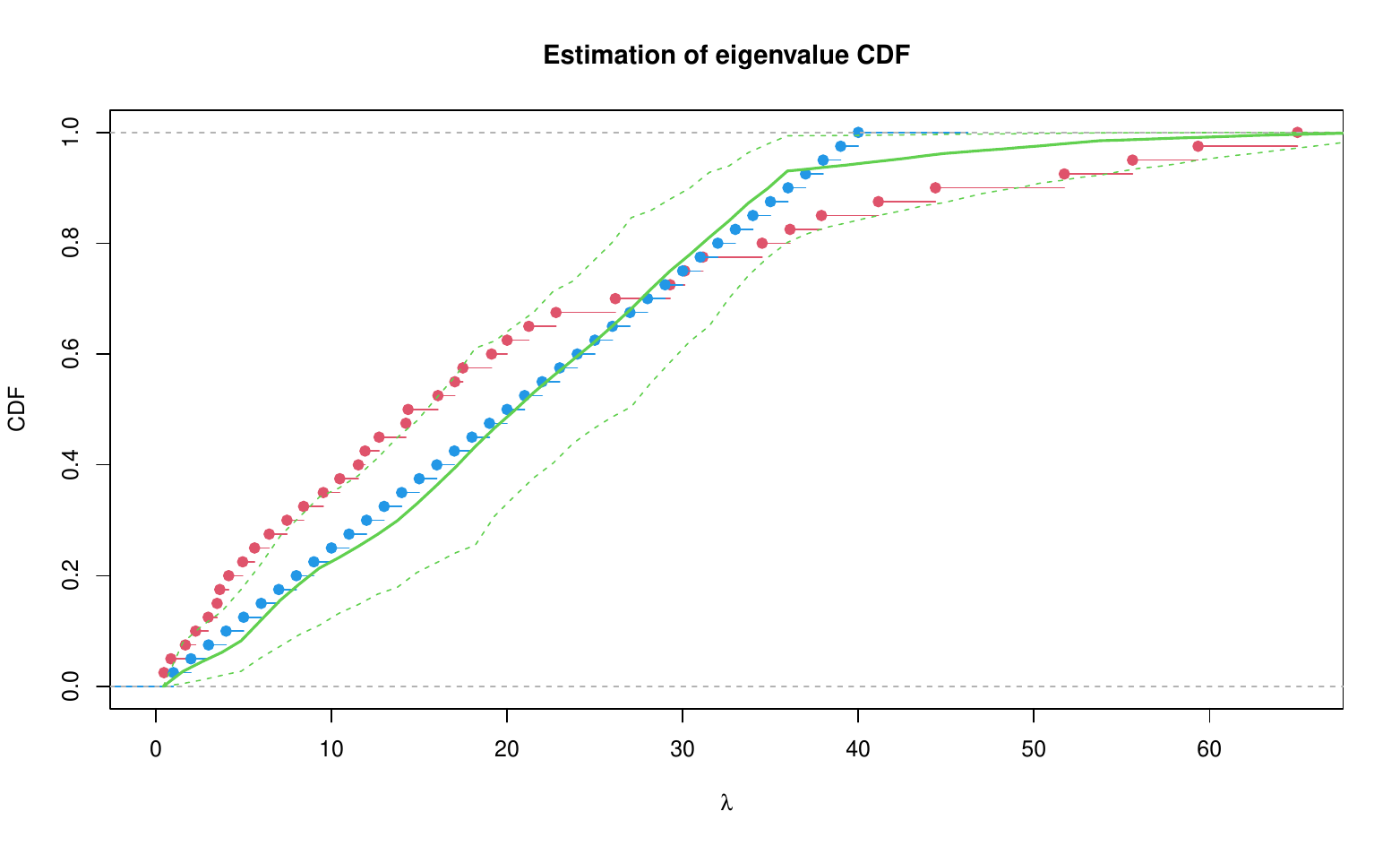}
\includegraphics[width=1.00\textwidth,height=0.35\textwidth]{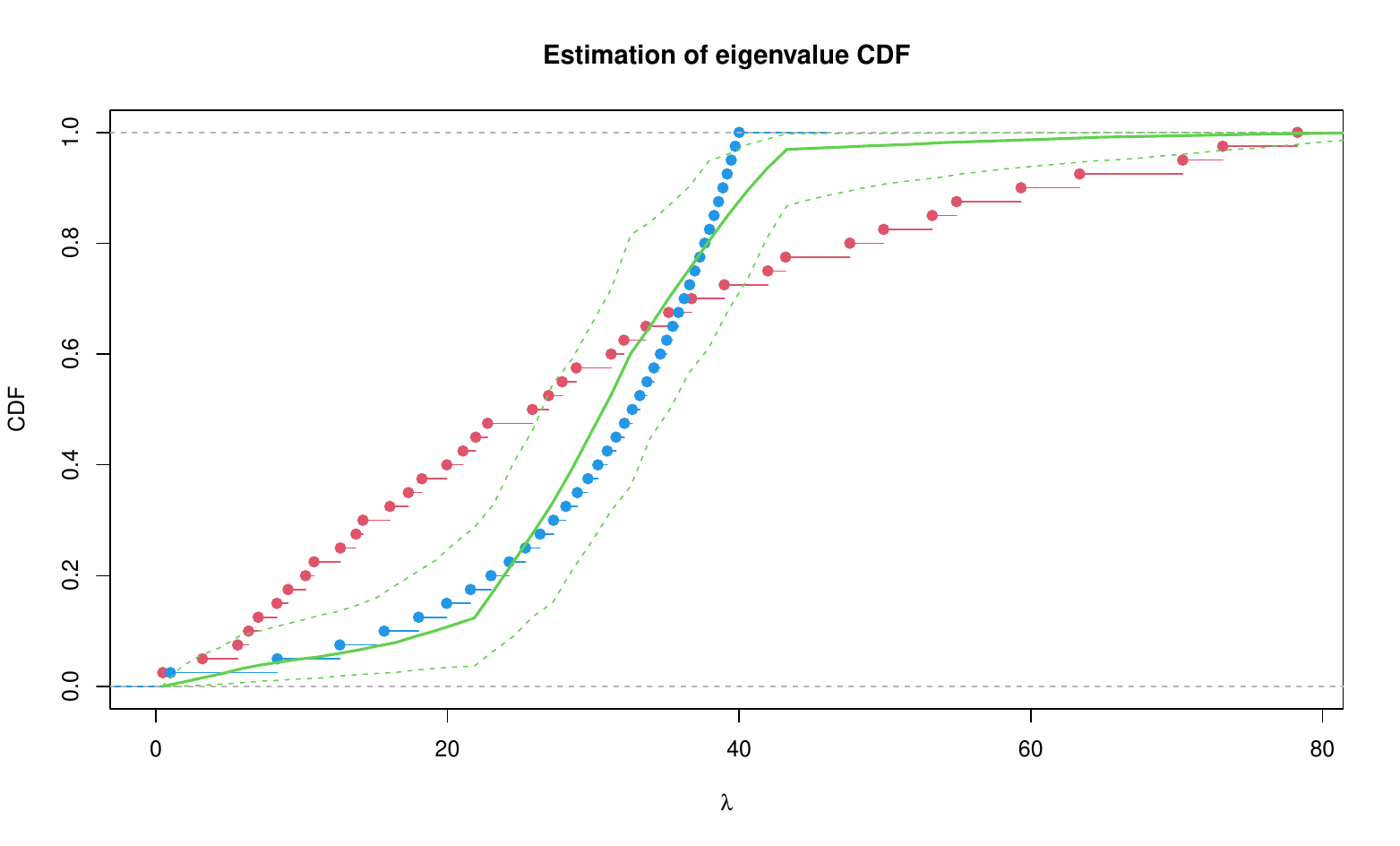}
\caption{Estimation of eigenvalue vector CDF with $p = 40$ and $n = 100$
for exponentially decreasing (top plot), linearly decreasing (middle plot),
and logarithmically decreasing (bottom plot) eigenvalue vector.
The eCDF of the eigenvalue vector is drawn in blue.
The eCDF of the sample eigenvalue vector is drawn in red.
The green curves display the posterior median (solid) 
and $0.05$ and $0.95$ posterior quantiles of eigenvalue vector CDF for the hBayes sampler.}
\end{figure}

%\begin{figure} \label{fig2c}
%\includegraphics[width=1.00\textwidth,height=0.35\textwidth]{pi_cdf_exp_100.pdf}
%\includegraphics[width=1.00\textwidth,height=0.35\textwidth]{pi_cdf_unif_100.pdf}
%\includegraphics[width=1.00\textwidth,height=0.35\textwidth]{pi_cdf_log_100.pdf}
%%\caption{Estimation of eigenvalue vector CDF with $p = 100$ and $n = 110$
%for exponentially decreasing (top plot), linearly decreasing (middle plot),
%and logarithmically decreasing (bottom plot) eigenvalue vector.
%The blue curves are eCDF of the eigenvalue vector.
%The black curves are eCDF of the sample eigenvalue vector.
%The green curves display the posterior median (solid) 
%and $0.05$ and $0.95$ posterior quantiles of eigenvalue vector CDF for the hBayes sampler.}
%\end{figure}

\begin{figure} \label{fig3}
\includegraphics[width=1.00\textwidth,height=0.35\textwidth]{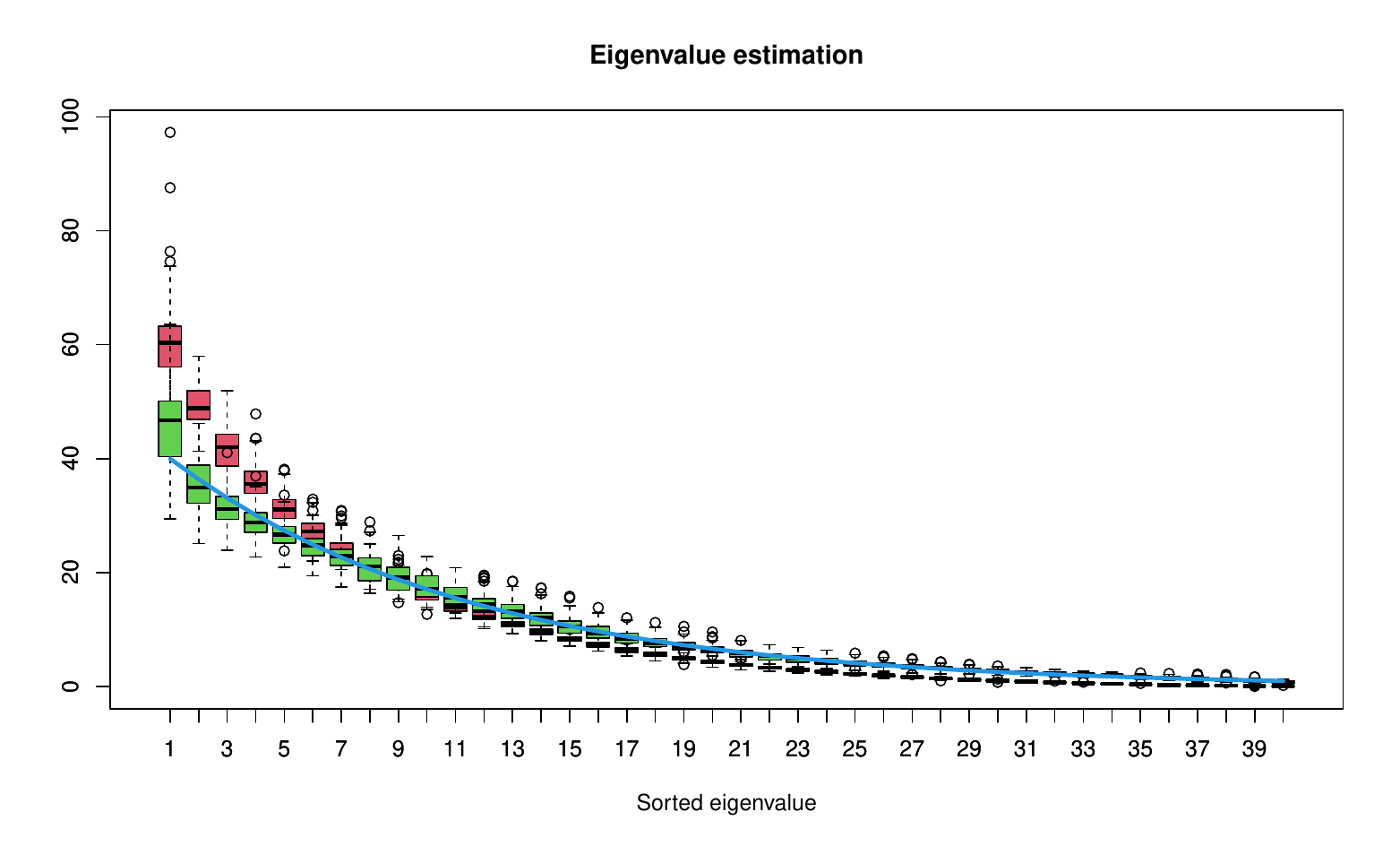}
\includegraphics[width=1.00\textwidth,height=0.35\textwidth]{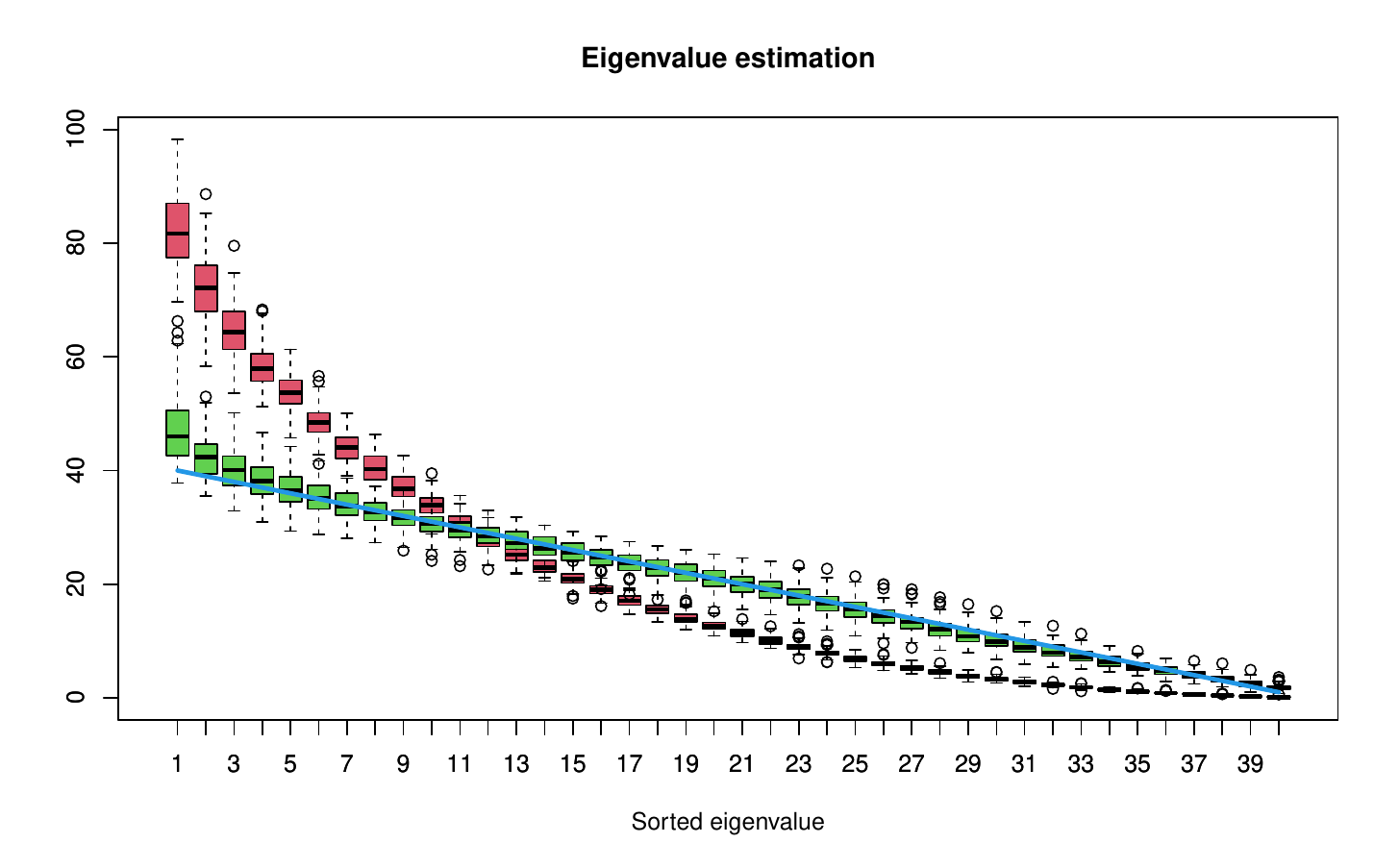}
\includegraphics[width=1.00\textwidth,height=0.35\textwidth]{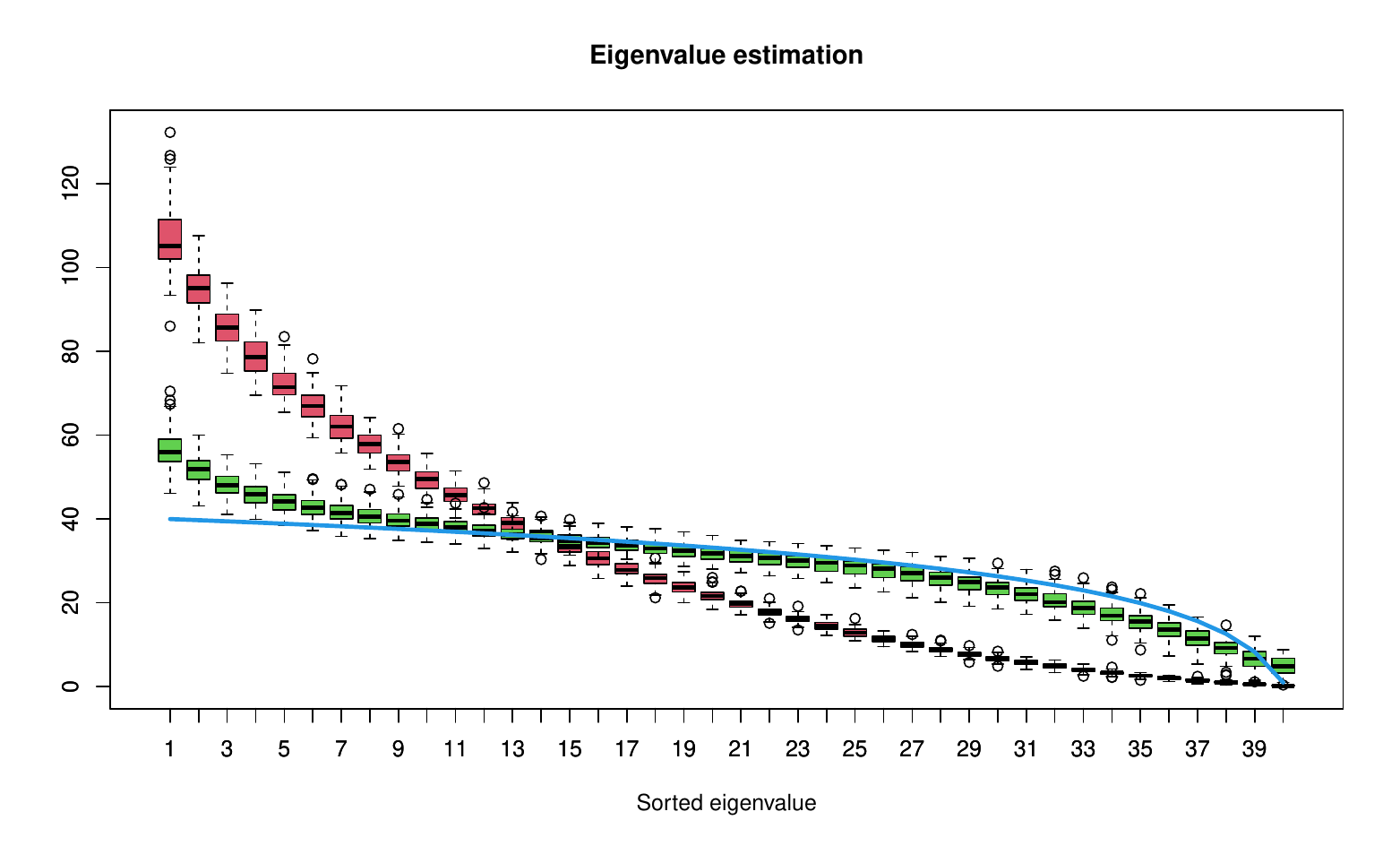}
\caption{Boxplots of the distribution of the eigenvalue estimates in $100$ data samples 
with $p = 40$ and $n = 50$ for exponentially decreasing (top plot), linearly decreasing (middle plot),
and logarithmically decreasing (bottom plot) eigenvalue vector.
The sorted eigenvalues are displayed in blue.
The red boxplots display the distribution of the sorted sample eigenvalues.
The green boxplots display the distribution of the posterior medians of the sorted eigenvalues in the hBayes sampler.}
\end{figure}

\subsection{Confidence interval for the leading eigenvector}
In this section, we present a simulation study to verify that the oracle eigenvalue Gibbs sampler indeed draws samples from 
$\pi^{\opHaar} ( \vec{\Gamma} \mid \vec{x}, \vec{\lambda})$ 
and to assess whether the hBayes Gibbs sampler can be used to construct a confidence region for the eigenvector corresponding to the largest eigenvalue, denoted by $\vec{\Gamma}_1$.
The confidence regions are $(1 - \alpha)$-level credible intervals centered at $\vec{V}_1$, 
the first column of $\vec{V}$, where $\vec{V}$ is the mode of $\pi^{\opHaar} ( \vec{\Gamma} \mid \vec{x}, \vec{\lambda})$.
We quantify similarity on $\mathbb{S}^{p-1}$ using the Absolute Dot Product,
$\opdist( \vec{W}_1, \vec{W}'_1) := \bigl| \vec{W}_1^\top \vec{W}'_1 \bigr|$.
Let $q^{\opHaar}_\alpha$ denote the $\alpha$-quantile of $\opdist( \vec{\Gamma}'_1, \vec{V}_1)$, where
$\vec{\Gamma}'_1$ is the leading eigenvector of $\vec{\Gamma}'$ sampled from
$\pi^{\opHaar} ( \vec{\Gamma} \mid \vec{x}, \vec{\lambda})$.
We then define the credible interval for $\vec{\Gamma}_1$ as
\begin{equation} \label{def-CI}
CI ( \vec{V}, q^{\opHaar}_\alpha) = \{ \vec{W}_1 \in \mathbb{S}^{p-1} : \; q^{\opHaar}_\alpha \le \opdist( \vec{W}_1, \vec{V}_1) \}.    
\end{equation}

\begin{corollary} 
$\Pr ( \vec{\Gamma}_1 \notin CI ( \vec{V}, q^{\opHaar}_\alpha)) = \alpha$.    
\end{corollary}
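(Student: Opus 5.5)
The plan is to read the probability in the statement as a statement about the joint law of $(\vec{\Gamma}, \vec{X})$ in the oracle eigenvalue model. In that model $\vec{\lambda}$ is fixed, $\vec{\Gamma} \sim \pi^{\opHaar}$, and $\vec{X} \mid \vec{\Gamma} \sim f(\vec{x} \mid \vec{\Gamma}\vec{\Lambda}\vec{\Gamma}^\top)$. The quantile $q^{\opHaar}_\alpha = q^{\opHaar}_\alpha(\vec{x})$ is computed from the posterior $\pi^{\opHaar}(\vec{\Gamma} \mid \vec{x}, \vec{\lambda})$, and it is attached to the center $\vec{V}_1 = \vec{V}_1(\vec{x})$.

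The argument will first condition on $\vec{X} = \vec{x}$ and then integrate.

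First I fix $\vec{x}$. By (\ref{eq34}), the conditional law of $\vec{\Gamma}$ given $\vec{X} = \vec{x}$ is exactly $\pi^{\opHaar}(\vec{\Gamma} \mid \vec{x}, \vec{\lambda})$. Since $\vec{V}_1$ is a function of $\vec{x}$ alone, the conditional law of the scalar $\opdist(\vec{\Gamma}_1, \vec{V}_1)$ given $\vec{x}$ is the same as the law of $\opdist(\vec{\Gamma}'_1, \vec{V}_1)$ with $\vec{\Gamma}' \sim \pi^{\opHaar}(\vec{\Gamma} \mid \vec{x}, \vec{\lambda})$. This is precisely the distribution whose $\alpha$-quantile defines $q^{\opHaar}_\alpha$.

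By definition (\ref{def-CI}), the event $\{\vec{\Gamma}_1 \notin CI(\vec{V}, q^{\opHaar}_\alpha)\}$ equals the event $\{\opdist(\vec{\Gamma}_1, \vec{V}_1) < q^{\opHaar}_\alpha\}$. Hence
\[
\Pr\big(\vec{\Gamma}_1 \notin CI(\vec{V}, q^{\opHaar}_\alpha) \mid \vec{X} = \vec{x}\big) = \Pr_{\vec{\Gamma}' \sim \pi^{\opHaar}(\cdot \mid \vec{x}, \vec{\lambda})}\big(\opdist(\vec{\Gamma}'_1, \vec{V}_1) < q^{\opHaar}_\alpha\big) = \alpha .
\]
Taking the expectation over the marginal $f^{\opHaar}(\vec{x} \mid \vec{\lambda})$ then gives the unconditional statement.

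By Lemma \ref{cor34} and Theorem \ref{cor11}-type reasoning, the coverage can also be made frequentist, i.e.\ to hold for every fixed $\vec{\Gamma}_0$:
\begin{itemize}
\item[(i)] $\vec{V}_1$ is equivariant, since $\vec{S}(\vec{x}\vec{R}) = \vec{R}^\top \vec{S}\vec{R}$ has eigenvectors $\vec{R}^\top\vec{V}$.
\item[(ii)] The posterior is equivariant, so $q^{\opHaar}_\alpha(\vec{x}\vec{R}) = q^{\opHaar}_\alpha(\vec{x})$, because $\opdist$ is invariant under a common rotation.
\item[(iii)] The noncoverage indicator is therefore invariant, and its expectation under $f(\vec{x} \mid \vec{\Gamma}_0\vec{\Lambda}\vec{\Gamma}_0^\top)$ is constant in $\vec{\Gamma}_0$, since $\opO(p)$ acts transitively on eigenvector matrices.
\end{itemize}
A constant that averages to $\alpha$ under the Haar measure must equal $\alpha$.

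The main obstacle is the well-definedness of the quantile step. The conditional probability equals exactly $\alpha$ only if the posterior law of $\opdist(\vec{\Gamma}'_1, \vec{V}_1)$ has no atom at $q^{\opHaar}_\alpha$. I will argue that this law is absolutely continuous. The Bingham posterior (\ref{eq34a}) has a density with respect to Haar measure, and the map $\vec{\Gamma} \mapsto |\vec{\Gamma}_1^\top \vec{V}_1|$ has level sets of Haar measure zero for $p \ge 2$. The sign and eigenvector ambiguity of $\vec{V}$ (including the $n<p$ multiplicities of Lemma \ref{lemm5a}) does not matter, because only $|\cdot|$ of the leading column is used and $l_1$ is almost surely simple.

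If atoms cannot be excluded, I will fall back on stating the result with the convention $q^{\opHaar}_\alpha = \inf\{q : \Pr(\opdist < q) \ge \alpha\}$, or with randomization at the boundary.
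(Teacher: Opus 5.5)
Your proposal is correct and is essentially the paper's argument. The noncoverage indicator is an $\opO(p)$-invariant loss, so its risk is constant over all $\vec{\Gamma}_0$ (this is the constant-risk result, Theorem \ref{cor1}, rather than Theorem \ref{cor11}), and the credible-interval property forces its Haar average, and hence that constant, to equal $\alpha$; you are more careful than the paper in checking the equivariance of $\vec{V}_1$ and the invariance of $q^{\opHaar}_\alpha$ via Lemma \ref{cor34}, and in ruling out posterior atoms at the quantile, which is what makes the probability exactly $\alpha$ rather than merely at most $\alpha$.
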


\begin{proof}
The result follows immediately from Theorem 1 and from the invariance of $\opdist( \vec{W}_1, \vec{W}'_1)$ under 
right-multiplication by any $\vec{R} \in \opO(p)$.
We can write
\[
\Pr ( \vec{\Gamma}_1 \notin CI ( \vec{V}, q^{\opHaar}_\alpha)) 
= \EE_{\vec{X} \sim f( \vec{x} \mid \vec{\Sigma})} 
I \{ \vec{\Gamma}_1 \notin CI ( \vec{V} , q^{\opHaar}_\alpha) \},
\]
which is the expectation of the indicator of the event $\vec{\Gamma}_1 \notin CI ( \vec{V}, q^{\opHaar}_\alpha)$.
This indicator specifies a loss function. Because $\opdist( \vec{W}_1, \vec{W}'_1)$ is invariant under multiplication 
by any $\vec{R} \in \opO(p)$, the loss is $\opO(p)$-invariant, and by Theorem 1 its risk is constant over all covariance matrices 
$\vec{\Sigma} = \vec{\Gamma} \vec{\Lambda} \vec{\Gamma}^\top$.
Finally, since $CI ( \vec{V}, q^{\opHaar}_\alpha)$ is a $(1 - \alpha)$-level credible interval, its average risk equals $\alpha$,
which proves the claim.
\end{proof}

In the simulation study, we generated $200$ datasets with $p = 15$, $n = 50$, and eigenvalues $\vec{\lambda} = (15, 5, \ldots, 5)$.  
For each dataset, we computed $\opdist(\vec{V}_1, \vec{\Gamma}_1)$. For $\alpha = 0.1$, we formed the credible interval $CI(\vec{V}, q^{\opHaar}_\alpha)$, where $q^{\opHaar}_\alpha$ is taken as the $\alpha$-quantile of $\opdist(\vec{\Gamma}^{(g)}_1, \vec{V}_1)$ 
based on posterior samples $\vec{\Gamma}^{(g)}$ drawn from the oracle eigenvalue Gibbs sampler.  
Analogously, we constructed $CI(\vec{V}, q^{\ophBayes}_\alpha)$, where $q^{\ophBayes}_\alpha$ is defined as the $\alpha$-quantile of $\opdist(\vec{\Gamma}^{(g)}_1, \vec{V}_1)$ using posterior samples $\vec{\Gamma}^{(g)}$ obtained from the hBayes Gibbs sampler.

The minimum, first quartile, median, mean, third quartile, and maximum of $q^{\opHaar}_{0.1}$ and $q^{\ophBayes}_{0.1}$ were
$(0.770, 0.893, 0.924, 0.917, 0.946, 0.993)$ and
$(0.000, 0.827, 0.899, 0.827, 0.935, 0.988)$, respectively.
The interval $CI(\vec{V}, q^{\opHaar}_\alpha)$ contained $\vec{\Gamma}_1$ in $0.90$ of the datasets, whereas $CI(\vec{V}, q^{\ophBayes}_\alpha)$ contained $\vec{\Gamma}_1$ in $0.97$ of the datasets.
These simulation findings suggest that the oracle Gibbs sampler performs adequately.  
As anticipated, the hBayes sampler produces more dispersed posterior draws of $\vec{\Gamma}$.  
The empirical distribution of $q^{\ophBayes}_{0.1}$ is stochastically smaller 
than that of $q^{\opHaar}_{0.1}$ and, consequently,  
$CI(\vec{V}, q^{\ophBayes}_{0.1})$ achieves a higher coverage probability.

\subsection{Risk comparison}
We report simulation results comparing the risks of the oracle and hBayes estimators for both the covariance and precision matrices under each loss function.  
Risks are summarized by the mean loss over $100$ data sets generated from a fixed covariance matrix with 
$p = 40$, $n = 80$, and a linearly decreasing vector $\vec{\lambda}$.
In this experiment, we consider the $L_1$ loss and the squared Stein loss $L_2$.
For each simulated data set, we compute the covariance and precision matrix oracle Bayes estimators for each loss function
using $G = 1000$ iterations of the oracle Gibbs sampler (the “oracle’’ estimator).
For each simulated data set, we also approximate the corresponding oracle Bayes estimators for the covariance and precision matrices for each loss
by the hBayes plug-in estimators defined in Section~\ref{sec:app} (the “hBayes1’’ and “hBayes2’’ estimators),
using $G = 1000$ iterations of the hBayes Gibbs sampler.

The “Stein-Haff’’ estimator is the Stein/Haff covariance estimator implemented via the {\em haff\_cov} function in the R package {\em stcov} \cite{stcov_man}.
The “Stein-iso’’ estimator is Stein’s isotonized covariance estimator computed by the {\em iso\_cov} function in the same package.
“MLE’’ denotes the sample covariance matrix.  
The graphical lasso estimator of the precision matrix is computed via the {\em glasso} function in the R package ``glasso'' \citep{friedman2008sparse}.
%“glasso’’ is the precision matrix estimator obtained from the \dy{?????} function in the R \dy{????} package.

Table 1 reports the mean losses and corresponding standard errors for covariance matrix estimation.
These results indicate that the hBayes estimators achieve substantially lower risks than the commonly used competitors and closely approach the risk of the oracle estimator.

\begin{table}[ht]
\centering
\begin{tabular}{llrr}
%\toprule
\textbf{Loss} & \textbf{Estimator} & \textbf{Mean} & \textbf{SE} \\
\midrule
\multirow{5}{*}{L1}
 & Oracle  &  1.76 & 0.01 \\
 & hBayes1 &  1.86 & 0.02 \\
 & hBayes2 &  1.87 & 0.02 \\
 & Haff    &  2.24 & 0.03 \\
 & Stein isotonized  &  2.28 & 0.03 \\
 & MLE     & 12.66 & 0.06 \\
\midrule
\multirow{5}{*}{L2}
 & Oracle  &  3.19 & 0.01 \\
 & hBayes1 &  3.49 & 0.04 \\
 & hBayes2 &  3.51 & 0.04 \\
 & Haff    &  4.10 & 0.04 \\
 & Stein iso  &  4.59 & 0.11 \\
 & MLE     & 20.46 & 0.15 \\
\bottomrule
\end{tabular}
\caption{Covariance Matrix Estimation Mean Losses}
\end{table}

\medskip
Table 2 presents the results for precision matrix estimation.
Here again, the hBayes estimators yield markedly lower risks, which are comparable to those of the oracle estimator.

\begin{table}[ht]
\centering
\begin{tabular}{llrr}
%\toprule
\textbf{Loss} & \textbf{Estimator} & \textbf{Mean} & \textbf{SE} \\
\midrule
\multirow{5}{*}{L1}
 & Oracle  &   1.77 & 0.01 \\
 & hBayes1 &   1.84 & 0.01 \\
 & hBayes2 &   1.85 & 0.01 \\
 & glasso  &   2.71 & 0.04 \\
 & MLE     &  29.50 & 0.29 \\
\midrule
\multirow{5}{*}{L2}
 & Oracle  &   3.26 & 0.01 \\
 & hBayes1 &   3.36 & 0.01 \\
 & hBayes2 &   3.38 & 0.01 \\
 & glasso  &   6.12 & 0.14 \\
 & MLE     & 224.38 & 4.36 \\
\bottomrule
\end{tabular}
\caption{Precision Matrix Estimation Mean Losses}
\end{table}

%\begin{enumerate} \item  {\color{red} NOT NEEDED YET}:  Econometric example for eigenvalue significance testing or Genetic example as in Patterson

%\item  {\color{red} NOT NEEDED YET}:  Alessandro performed extensive simulations for condition number  of 3 and 1000. Need to understand why we have problems with condition number 1000 cases. Maybe change eigenvector structure?
%eigenvector minimum 1 instead of 0? less options for $n$ and $p$.

%\item {\color{red} Can Alessandro perform less extensive simulations} comparing mean risk for all 6 Bayes rules and corresponding hBayes estimates for $p = 40$, $n = 50$ and $n = 100$, and three $\vec{\lambda}$ configurations from Section 7: linearly decreasing, $\lambda_i = (p + 1) - i$ for $i = 1, \ldots, p$;  
%exponentially decreasing, $\lambda_i = a \cdot \exp(-b \cdot i)$, with $a$ and $b$ chosen so that $\lambda_1 = p$ and $\lambda_p = 1$;  
%logarithmically decreasing, $\lambda_i = a \cdot \log(p + 1 - i) + 1$, with $a$ set so that $\lambda_1 = p$. 

%\item Perform Eigenvalues estimation simulations
 
%\end{enumerate}

\section*{Acknowledgements}
The authors are grateful to Jake Soloff for his contributions to deriving the Bayes rules, 
to Peter Hoff for helpful discussions and for providing his lecture notes, 
and to Alessandro Fulci for conducting the simulation study on risk comparison.

\bibliographystyle{plainnat}
\small
\bibliography{references}

@inproceedings{JamesStein1961,
  author    = {James, William and Stein, Charles},
  title     = {Estimation with Quadratic Loss},
  booktitle = {Proceedings of the Fourth {Berkeley} Symposium on Mathematical Statistics and Probability},
  volume    = {1},
  pages     = {361--379},
  year      = {1961},
  publisher = {University of California Press},
  address   = {Berkeley, CA}
}

@unpublished{Stein1975,
  author = {Stein, Charles},
  title  = {Estimation of a Covariance Matrix},
  note   = {Rietz Lecture, 39th Annual Meeting of the IMS, Atlanta, GA},
  year   = {1975}
}

@article{Haff1980,
  author  = {Haff, L. R.},
  title   = {Empirical {Bayes} Estimation of the Multivariate Normal Covariance Matrix},
  journal = {The Annals of Statistics},
  volume  = {8},
  number  = {3},
  pages   = {586--597},
  year    = {1980}
}

@article{Haff1991,
  author  = {Haff, L. R.},
  title   = {The Variational Form of Certain {Bayes} Estimators},
  journal = {The Annals of Statistics},
  volume  = {19},
  number  = {3},
  pages   = {1163--1190},
  year    = {1991}
}

@article{LedoitWolf2004,
  author  = {Ledoit, Olivier and Wolf, Michael},
  title   = {A Well-Conditioned Estimator for Large-Dimensional Covariance Matrices},
  journal = {Journal of Multivariate Analysis},
  volume  = {88},
  number  = {2},
  pages   = {365--411},
  year    = {2004}
}

@article{LedoitWolf2020,
  author  = {Ledoit, Olivier and Wolf, Michael},
  title   = {Analytical Nonlinear Shrinkage of Large-Dimensional Covariance Matrices},
  journal = {The Annals of Statistics},
  volume  = {48},
  number  = {5},
  pages   = {3043--3065},
  year    = {2020}
}

@book{Berger1985,
  author    = {Berger, James O.},
  title     = {Statistical Decision Theory and {Bayesian} Analysis},
  edition   = {2nd},
  publisher = {Springer-Verlag},
  address   = {New York},
  year      = {1985},
  series    = {Springer Series in Statistics}
}

@article{DiaconisShahshahani1987,
  author  = {Diaconis, Persi and Shahshahani, Mehrdad},
  title   = {The Subgroup Algorithm for Generating Uniform Random Variables},
  journal = {Probability in the Engineering and Informational Sciences},
  volume  = {1},
  pages   = {15--32},
  year    = {1987}
}

@article{donoho2018optimal,
  title={Optimal shrinkage of eigenvalues in the spiked covariance model},
  author={Donoho, David L and Gavish, Matan and Johnstone, Iain M},
  journal={Annals of statistics},
  volume={46},
  number={4},
  pages={1742},
  year={2018}
}

@article{bickel2008regularized,
  title={Regularized estimation of large covariance matrices},
  author={Bickel, Peter J and Levina, Elizaveta},
  year={2008}
}

@article{friedman2008sparse,
  title={Sparse inverse covariance estimation with the graphical lasso},
  author={Friedman, Jerome and Hastie, Trevor and Tibshirani, Robert},
  journal={Biostatistics},
  volume={9},
  number={3},
  pages={432--441},
  year={2008},
  publisher={Oxford University Press}
}

@article{jupp1979maximum,
  title={Maximum likelihood estimators for the matrix von Mises-Fisher and Bingham distributions},
  author={Jupp, Peter E. and Mardia, Kanti V.},
  journal={The Annals of Statistics},
  volume={7},
  number={3},
  pages={599--606},
  year={1979},
  publisher={Institute of Mathematical Statistics}
}

@article{WeinsteinEtAl2025,
  author  = {Weinstein, Asaf and Wallin, Jonas and Yekutieli, Daniel and Bogdan, Malgorzata},
  title   = {Nonparametric Shrinkage Estimation in High Dimensional {GLMs} via {P}olya Trees},
  journal = {Statistica Sinica},
  year    = {2025},
  note    = {Accepted author version, Manuscript ID SS-2025-0221}
}

@Manual{stcov_man,
    title = {stcov: Stein's Covariance Estimator},
    author = {Brett Naul},
    year = {2016},
    note = {R package version 0.1.0},
    url = {https://CRAN.R-project.org/package=stcov},
  }

@article{hoff2009simulation,
  title={Simulation of the matrix Bingham--von Mises--Fisher distribution, with applications to multivariate and relational data},
  author={Hoff, Peter D},
  journal={Journal of Computational and Graphical Statistics},
  volume={18},
  number={2},
  pages={438--456},
  year={2009},
  publisher={Taylor \& Francis}
}

\newpage

\appendix

\section{Oracle Bayes rules for precision matrix estimation}
\label{sec:precision-rules}
We extend the loss functions from Proposition \ref{prop:oracle-bayes-rules} to the setting of precision matrix estimation as follows:
\begin{align*}
    L_0(\vec{\Omega},\hat{\vec{\Omega}}) &= \|\hat{\vec{\Omega}}-\vec{\Omega}\|_F^2 \\
    L_1(\vec{\Omega},\hat{\vec{\Omega}}) &=  \optr(\vec{\Omega}^{-1}\hat{\vec{\Omega}}) - \log \det(\vec{\Omega}^{-1}\hat{\vec{\Omega}}) - p \\
    L_2(\vec{\Omega},\hat{\vec{\Omega}}) &= \optr((\hat{\vec{\Omega}}\vec{\Omega}^{-1}-I)^2)
\end{align*}
As noted in Corollary \ref{prop:general-bayes-rule}, the oracle Bayes estimators for $\vec{\Omega}$ under these losses are obtained from the oracle Bayes estimators for the covariance matrix (Proposition \ref{prop:oracle-bayes-rules}) simply by replacing each $\lambda_j$ with $\lambda_j^{-1}$. This follows from the fact that $\vec{\Omega}$ has eigenvalues $\lambda_j^{-1}$ while sharing the eigenvectors of $\vec{\Sigma}$. Once this substitution is made in the loss expressions, the proofs proceed in exactly the same way. For completeness, we restate the results here. By the same diagonalization argument, we may, without loss of generality, take $\vec{S}=\vec{L}$ and $\hat{\vec{\Omega}}(\vec{L}) = \opdiag(v_1,\dots,v_p)$.\subsection{Frobenius loss Bayes rule for $\vec{\Omega}$}
When $n\geq p$, the oracle Bayes shrinkage rule for the $k^{\text{th}}$ eigenvalue is
\begin{align*}
    v_k = \sum_{j=1}^p \lambda_j^{-1} \EE[\Gamma_{kj}^2 \mid \vec{S}=\vec{L}], \hspace{2em} k=1,\dots,p.
\end{align*}
When $p > n$, all $v_{n+1}=\dots=v_{p}$ are set equal to $\bar{v}$, defined
\begin{align*}
    \bar{v} = \frac{1}{p-n} \sum_{j=1}^p \lambda_j^{-1} \sum_{k=n+1}^p \EE[\Gamma_{kj}^2 \mid \vec{S}=\vec{L}].
\end{align*}
\subsection{Stein loss Bayes rule for $\vec{\Omega}$}
When $n\geq p$, the oracle Bayes shrinkage rule for the $k^{\text{th}}$ eigenvalue is
\begin{align*}
    v_k = \Big(\sum_{j=1}^p \lambda_j \EE[\Gamma_{kj}^2 \mid \vec{S}=\vec{L}] \Big)^{-1}, \hspace{2em} k=1,\dots,p.
\end{align*}
When $p>n$, all $v_{n+1}=\dots=v_{p}$ are set equal to $\bar{v}$, defined
\begin{align*}
    \bar{v} = \Big(\frac{1}{p-n} \sum_{j=1}^p \sum_{k=n+1}^p \lambda_j \EE[\Gamma_{kj}^2 \mid \vec{S}=\vec{L}]\Big)^{-1}.
\end{align*}

\subsection{$L_2$ loss Bayes rule for $\vec{\Omega}$}

When $n\geq p$, the oracle Bayes shrinkage rule for the eigenvalues is the solution to $\vec{A}v=b$, where
\begin{align*}
    b_k = \sum_{j=1}^p \lambda_j \EE[\Gamma_{kj}^2 \mid\vec{S}=\vec{L}], \hspace{2em} A_{k\ell} =\sum_{i,j=1}^p \lambda_i \lambda_j \EE[\Gamma_{ki}\Gamma_{kj}\Gamma_{\ell i}\Gamma_{\ell j}\mid \vec{S}=\vec{L}].
\end{align*}
The $p>n$ case follows from the $(n+1) \times (n+1)$ linear system described in the last part of Proposition \ref{prop:oracle-bayes-rules}, with the substitution $\lambda_j \mapsto \lambda_j^{-1}$.

\section{Proofs}
\label{sec:proofs}

 \begin{proof}[Proof of Proposition \ref{prop12}]
For $i = 1, \cdots, p$, let $\vec{A}_i$ be the diagonal matrix with $1$ on along the diagonal except for $-1$ at the $i$'th spot.
As $\vec{L}$ is diagonal then  $\vec{L}  = \vec{A}_i^\top  \vec{L} \vec{A}_i$.
As $\vec{A}_i$ is orthogonal then equivariance to multiplication by orthogonal matrices yields,
\[
\hat{\vec{\Sigma}} (\vec{L} )  = 
\hat{\vec{\Sigma}} (\vec{A}_i^\top  \vec{L} \vec{A}_i)  = 
\vec{A}_i^\top \hat{\vec{\Sigma}} (\vec{L} ) \vec{A}_i.
\]
Implying that the off-diagonal entries of $\hat{\vec{\Sigma}} (\vec{L})$ are $0$.

We will now assume that $n \le p-2$. 
For $n < i < j \le p$, let $\vec{B}_{i,j}$ denote the orthogonal matrix  constructed by taking the $p \times p$ identity matrix and swapping Column $i$ with Column $j$.
As $l_i = 0$ for $i = n+1, \cdots, p$ then 
$\vec{L}  = \vec{B}_{i,j}^\top \vec{L} \vec{B}_{i,j}$.
Invoking  equivariance to multiplication by orthogonal matrices yields,
\[
\hat{\vec{\Sigma}} (\vec{L} )  = 
\vec{B}_{i,j}^\top \hat{\vec{\Sigma}} (\vec{L} ) \vec{B}_{i,j}.
\]
Implying that $d_i = d_j$, and in general $d_{n+1} = \cdots = d_p$.
In particular for the no data case, corresponding to $n = 0$, the covariance matrix estimator is  fixed with 
$\hat{\vec{\Sigma}}  \equiv  \opdiag( d_1, \cdots, d_1)$.
\end{proof}

\bigskip
\begin{proof} [Proof of Lemma \ref{lemm2a}]
Property (a), which we express $\bar{g}_{\vec{R} }( \vec{\Omega}) 
= \bar{g}_{\vec{R} }( \vec{\Sigma})^{-1}$, follows from
\[
\bar{g}_{\vec{R} }( \vec{\Omega})  \bar{g}_{\vec{R} }( \vec{\Sigma})
= ( \vec{R}^\top  \vec{\Omega} \vec{R}  ) ( \vec{R}^\top  \vec{\Sigma} \vec{R} )
=  \vec{R}^\top  \vec{\Omega}  (\vec{R}  \vec{R}^\top ) \vec{\Sigma} \vec{R}
= \vec{I}.
\]

\noindent Property (b) follows from equivariance of the sample covariance matrix and 
Property (a) applied to the nonsingular precision matrix of 
$g_{\vec{R}} (\vec{X})$ for $n \ge p$,
\[
( \vec{S} ( g_{\vec{R}} (\vec{X})))^{-1} = 
(\bar{g}_{\vec{R}} ( \vec{S} (\vec{X})))^{-1} = 
\bar{g}_{\vec{R}} ( ( \vec{S} (\vec{X}))^{-1}).
\]
\end{proof}

\bigskip
\begin{proof} [Proof of Lemma \ref{lemm4a}]
$\vec{\Lambda} = \vec{B} \vec{\Lambda} \vec{B}^\top$ implies that $\vec{\Lambda}^{-1} = \vec{B} \vec{\Lambda}^{-1} \vec{B}^\top$, yielding
\begin{eqnarray}
\pi^{\opHaar} ( \vec{\Gamma} \vec{B}  | \vec{x}, \vec{\lambda}) & \propto & \exp ( \optr( -  \vec{\Lambda}^{-1} \vec{B}^\top \vec{\Gamma}^\top  ( n \vec{S} ) \vec{\Gamma} \vec{B} / 2)) \nonumber \\
& = &\exp ( \optr( -   \vec{B} \vec{\Lambda}^{-1} \vec{B}^\top \vec{\Gamma}^\top  ( n \vec{S} ) \vec{\Gamma}  / 2))\label{cyc-trace} \\
& = &\exp ( \optr( -   \vec{\Lambda}^{-1}  \vec{\Gamma}^\top  ( n \vec{S} ) \vec{\Gamma}  / 2)) \propto \pi^{\opHaar} ( \vec{\Gamma} | \vec{x}, \vec{\lambda}), \nonumber
\end{eqnarray}
where the equality in (\ref{cyc-trace}) is due to the invariance of the trace to circular shifts.
\end{proof}

\bigskip
\begin{proof} [Proof of Lemma \ref{lemm5a}]
To prove the lemma we express,
\begin{eqnarray*}
\lefteqn{\pi^{\opHaar} \left(  \vec{\Gamma}  ( \vec{\Gamma}^\top \vec{V}\vec{B}^\top \vec{V}^\top \vec{\Gamma} )  | \vec{x}, \vec{\lambda}\right) 
 = \pi^{\opHaar} \left( \vec{V}\vec{B}^\top \vec{V}^\top \vec{\Gamma}  | \vec{x}, \vec{\lambda}\right) } \\
& \propto & \exp \left[ \optr\left\{ -  \vec{\Lambda}^{-1}  \left( \vec{V}\vec{B}^\top \vec{V}^\top \vec{\Gamma}\right)^\top \vec{S}  
 \vec{V}\vec{B}^\top \vec{V}^\top \vec{\Gamma} ( n / 2) \right\} \right] \nonumber \\
& = & \exp \left[ \optr\left\{ -  \vec{\Lambda}^{-1} \vec{\Gamma}^\top \vec{V} \vec{B} \vec{V}^\top ( \vec{V} \vec{L}   \vec{V}^\top )  
 \vec{V}\vec{B}^\top \vec{V}^\top \vec{\Gamma} ( n / 2) \right\} \right] \nonumber \\
& = & \exp \left[ \optr\left\{ -  \vec{\Lambda}^{-1} \vec{\Gamma}^\top \vec{V} \vec{B}  \vec{L}   
\vec{B}^\top \vec{V}^\top \vec{\Gamma} ( n / 2) \right\} \right] \nonumber \\
& = & \exp \left[ \optr\left\{ -  \vec{\Lambda}^{-1} \vec{\Gamma}^\top \vec{V}   \vec{L}  \vec{V}^\top \vec{\Gamma} ( n / 2) \right\} \right] \nonumber \\
& = & \exp \left[ \optr\left\{ -  \vec{\Lambda}^{-1}   \vec{\Gamma}^\top   \vec{S} \vec{\Gamma}  ( n / 2) \right\} \right] \propto \pi^{\opHaar} ( \vec{\Gamma} | \vec{x}, \vec{\lambda})
\end{eqnarray*}
\end{proof}

\bigskip
\begin{proof} [Proof of Lemma \ref{cor34}]
For $\vec{R} \in \opO(p)$, $\bar{g}_{\vec{R}} ( \vec{\Sigma}) = ( \vec{R}^\top \vec{\Gamma}) \vec{\Lambda} ( \vec{\Gamma}^\top \vec{R})$
reveals that multiplication by $\vec{R}$ maps eigenvector matrix $\vec{\Gamma}$ to 
$\vec{R}^\top \vec{\Gamma}\in \opO(p)$.
Equivariance of the sample covariance matrix, $\vec{S} ( g_{\vec{R}} (\vec{x})) = 
\bar{g}_{\vec{R}} ( \vec{S} ( \vec{x}))$, yields,
\begin{eqnarray*}
\lefteqn{\pi^{\opHaar} ( \vec{R}^\top \vec{\Gamma} | g_{\vec{R}} (\vec{x}), \vec{\lambda})  
\propto  \exp \left[ \optr \left\{ -  \vec{\Lambda}^{-1} 
( \vec{R}^\top \vec{\Gamma})^\top   \vec{S} ( g_{\vec{R}} (\vec{x})) 
  \vec{R}^\top \vec{\Gamma} (n /  2) \right\} \right] } \\
&   =  &  \exp \left[ \optr \left\{ -  \vec{\Lambda}^{-1} \vec{\Gamma}^\top   
\vec{R} \vec{R}^\top  \vec{S} (\vec{x}) \vec{R}  \vec{R}^\top \vec{\Gamma} (n/ 2) \right\} \right] 
\propto \pi^{\opHaar} ( \vec{\Gamma} | \vec{x}, \vec{\lambda}).
\end{eqnarray*}
\end{proof}

\bigskip
\begin{proof}[Derivation of Bayes rule for $L_0$ loss in Proposition \ref{prop:oracle-bayes-rules}]
Working in the MLE eigenbasis, we may assume without loss of generality that $\vec{S} = \vec{L}$ and $\hat{\vec{\Sigma}}=\opdiag(d_1,\dots,d_p)$. Writing $\vec{\Sigma}=\vec{\Gamma}\vec{\Lambda}\vec{\Gamma}^\top=\sum_{j=1}^p \lambda_j \vec{\Gamma}_j \vec{\Gamma}_j^\top$, where $\vec{\Gamma}_j$ is the $j^{\text{th}}$ column of $\vec{\Gamma}$, we expand the loss:
\begin{align*}
    L_0(\hat{\vec{\Sigma}},\vec{\Sigma}) = \optr((\hat{\vec{\Sigma}}-\vec{\Sigma})^2) = \optr(\hat{\vec{\Sigma}}^2) - 2\optr(\hat{\vec{\Sigma}}\vec{\Sigma}) + \optr(\vec{\Sigma}^2).
\end{align*}
The first term is $\optr(\hat{\vec{\Sigma}}^2)=\sum_{k=1}^p d_k^2$. For the cross term, since $\hat{\vec{\Sigma}}=\opdiag(d_1,\dots,d_p)$, 
\begin{align*}
    \optr(\hat{\vec{\Sigma}}\vec{\Sigma}) = \sum_{k=1}^p d_k \Sigma_{kk} = \sum_{k=1}^p d_k \sum_{j=1}^p \lambda_j \Gamma_{kj}^2.
\end{align*}
The last term $\optr(\vec{\Sigma}^2)$ does not depend on $d_1,\dots,d_p$. Taking posterior expectations, we have
\begin{align*}
    \EE[L_0 \mid \vec{S}] = \sum_{k=1}^p d_k^2 - 2\sum_{k=1}^p d_k \sum_{j=1}^p \EE[\Gamma_{kj}^2 \mid \vec{S}] + \text{constant.}
\end{align*}
Differentiating with respect to $d_k$ and setting equal to zero,
\begin{align*}
    \frac{\partial}{\partial d_k} \EE[L_0 \mid \vec{S}] = 2d_k - 2\sum_{j=1}^p \lambda_j \EE[\Gamma_{kj}^2 \mid \vec{S}]=0,
\end{align*}
giving $d_k = \sum_{j=1}^p \lambda_j \EE[\Gamma_{kj}^2 \mid \vec{S}]$. 

When $p>n$, set $d_{n+1}=\dots=d_p=\bar{d}$, and differentiate with respect to $\bar{d}$. The terms involving $\bar{d}$ are
\begin{align*}
    (p-n)\bar{d}\;^2 - 2\bar{d} \sum_{k=n+1}^p \sum_{j=1}^p \lambda_j \EE[\Gamma_{kj}^2 \mid \vec{S}].
\end{align*}
Differentiating and setting equal to zero,
\begin{align*}
    \frac{\partial}{\partial \bar{d}} \; \EE[L_0 \mid \vec{S}] = 2(p-n)\bar{d} - 2 \sum_{k=n+1}^p \sum_{j=1}^p \lambda_j \EE[\Gamma_{kj}^2 \mid \vec{S}]=0,
\end{align*}
giving $\bar{d} = \frac{1}{p-n}\sum_{k=n+1}^p \sum_{j=1}^p \lambda_j \EE[\Gamma_{kj}^2 \mid \vec{S}]$.
\end{proof}

\begin{proof}[Derivation of Bayes rule for $L_1$ loss in Proposition \ref{prop:oracle-bayes-rules}]
Working in the MLE eigenbasis, we may assume without loss of generality that $\vec{S} = \vec{L}$ and $\hat{\vec{\Sigma}}=\opdiag(d_1,\dots,d_p)$. Write $\vec{\Sigma}^{-1}=\vec{\Gamma}\vec{\Lambda}^{-1}\vec{\Gamma}^\top=\sum_{j=1}^p \lambda_j^{-1} \vec{\Gamma}_j \vec{\Gamma}_j^\top$, where $\vec{\Gamma}_j$ is the $j^{\text{th}}$ column of $\vec{\Gamma}$. Then
\begin{align*}
    \optr(\hat{\vec{\Sigma}}\vec{\Sigma}^{-1}) = \sum_{k=1}^p d_k \Sigma_{kk}^{-1} = \sum_{k=1}^p d_k \sum_{j=1}^p \lambda_j^{-1} \Gamma_{kj}^2,
\end{align*}
and $\log \det(\hat{\vec{\Sigma}}) = \sum_{k=1}^p \log d_k$. Taking posterior expectation, we have
\begin{align*}
    \EE[L_1 \mid \vec{S}] = \sum_{k=1}^p d_k \sum_{j=1}^p \lambda_j^{-1}\EE[\Gamma_{kj}^2 \mid \vec{S}] - \sum_{k=1}^p \log d_k + \text{constant.}
\end{align*}
Differentiating with respect to $d_k$ and setting equal to zero,
\begin{align*}
    \frac{\partial }{\partial d_k} \EE[L_1 \mid \vec{S}] = \sum_{j=1}^p \lambda_j^{-1} \EE[\Gamma_{kj}^2 \mid \vec{S}] - \frac{1}{d_k} = 0,
\end{align*}
giving $d_k = \big(\sum_{j=1}^p \lambda_j^{-1}\EE[\Gamma_{kj}^2 \mid \vec{S}] \big)^{-1}$. When $p>n$, set $d_{n+1}=\dots=d_p=\bar{d}$. The terms in the posterior expected loss that involve $\bar{d}$ are
\begin{align*}
    \bar{d} \sum_{k=n+1}^p \sum_{j=1}^p \lambda_j^{-1} \EE[\Gamma_{kj}^2 \mid \vec{S}] - (p-n) \log \bar{d}.
\end{align*}
Differentiating with respect to $\bar{d}$ and setting equal to zero,
\begin{align*}
     \sum_{k=n+1}^p \sum_{j=1}^p \lambda_j^{-1} \EE[\Gamma_{kj}^2 \mid \vec{S}] - \frac{p-n}{\bar{d}}  = 0,
\end{align*}
giving $\bar{d} = \big(\frac{1}{p-n}\sum_{k=n+1}^p \sum_{j=1}^p \lambda_j^{-1} \EE[\Gamma_{kj}^2 \mid \vec{S}]\big)^{-1}$.

\end{proof}

\begin{proof}[Derivation of Bayes rule for $L_2$ loss in Proposition \ref{prop:oracle-bayes-rules}]
Working in the MLE eigenbasis, we assume $\vec{S}=\vec{L}$ with $\hat{\vec{\Sigma}} = \opdiag(d_1,\dots,d_p)$ and $\vec{\Sigma}^{-1} = \sum_{j=1}^p \lambda_j^{-1} \vec{\Gamma}_j\vec{\Gamma}_j^\top$. Expanding the loss gives,
\begin{align*}
    L_2(\hat{\vec{\Sigma}},\vec{\Sigma}) = \optr((\hat{\vec{\Sigma}}\vec{\Sigma}^{-1})^2)-2\optr(\hat{\vec{\Sigma}}\vec{\Sigma}^{-1}) + p.
\end{align*}
For the second term, 
\begin{align*}
    \optr(\hat{\vec{\Sigma}}\vec{\Sigma}^{-1}) = \sum_{k=1}^p d_k \Sigma^{-1}_{kk} = \sum_{k=1}^p d_k \sum_{j=1}^p \lambda_j^{-1} \Gamma_{kj}^2.
\end{align*}
For the first term,
\begin{align*}
    \optr((\hat{\vec{\Sigma}}\vec{\Sigma}^{-1})^2) &= \optr\Big(\Big( \sum_{j=1}^p \lambda_j^{-1}\hat{\vec{\Sigma}}\vec{\Gamma}_{j}\vec{\Gamma}_j^\top \Big)^2\Big) = \sum_{i,j=1}^p \lambda_i^{-1}\lambda_j^{-1} (\vec{\Gamma}_i^\top \hat{\vec{\Sigma}} \vec{\Gamma}_j)^2
\end{align*}
Since $\hat{\vec{\Sigma}}=\opdiag(d_1,\dots,d_p)$, the scalar $\vec{\Gamma}_i^\top\hat{\vec{\Sigma}}\vec{\Gamma}_j = \sum_{k=1}^p \Gamma_{ki}\Gamma_{kj} d_k$, so
\begin{align*}
    \optr((\hat{\vec{\Sigma}}\vec{\Sigma}^{-1})^2) &= \sum_{i,j=1}^p \lambda_i^{-1} \lambda_j^{-1} \Big( \sum_{k=1}^p \Gamma_{ki} \Gamma_{kj} d_k \Big)^2.
\end{align*}
Opening the square and combining terms, we have
\begin{align*}
    L_2(\hat{\vec{\Sigma}}, \vec{\Sigma}) = \sum_{i,j=1}^p \lambda_i^{-1}\lambda_j^{-1} \sum_{k,\ell=1}^p \Gamma_{ki}\Gamma_{kj}\Gamma_{\ell i}\Gamma_{\ell j} d_k d_\ell - 2\sum_{k=1}^p d_k \sum_{j=1}^p \lambda_j^{-1} \Gamma_{kj}^2 + p.
\end{align*}
Taking posterior expectation gives
\begin{align*}
    \EE[L_2 \mid \vec{S}] = \sum_{i,j=1}^p \lambda_i^{-1}\lambda_j^{-1} \sum_{k,\ell=1}^p \EE[\Gamma_{ki}\Gamma_{kj}\Gamma_{\ell i}\Gamma_{\ell j}\mid \vec{S}] \; d_k d_\ell - 2\sum_{k=1}^p d_k \sum_{j=1}^p \lambda_j^{-1} \EE[\Gamma_{kj}^2\mid \vec{S}] + p.
\end{align*}
Differentiating with respect to $d_k$ and setting equal to zero,
\begin{align*}
    \frac{\partial }{\partial d_k} \; \EE[L_2 \mid \vec{S}] = 2 \sum_{\ell=1}^p d_\ell \sum_{i,j=1}^p \lambda_i^{-1} \lambda_j^{-1} \EE[\Gamma_{ki}\Gamma_{kj}\Gamma_{\ell i}\Gamma_{\ell j}\mid \vec{S}] - 2 \sum_{j=1}^p \lambda_j^{-1}  \EE[\Gamma_{kj}^2 \mid \vec{S}] = 0.
\end{align*}
Equivalently, for each $k=1,\dots,p$, we have
\begin{align*}
    \sum_{\ell=1}^p d_\ell \underbrace{\sum_{i,j=1}^p \lambda_i^{-1} \lambda_j^{-1} \EE[\Gamma_{ki}\Gamma_{kj}\Gamma_{\ell i}\Gamma_{\ell j} \mid \vec{S}]}_{A_{k\ell}} =  \underbrace{\sum_{j=1}^p \lambda_j^{-1}  \EE[\Gamma_{kj}^2 \mid \vec{S}]}_{b_k} ,
\end{align*}
which is the $k^{\text{th}}$ row of $\vec{A}d=b$, establishing \eqref{eq:b} and \eqref{eq:A}.

If $p>n$, set $d_\ell = \bar{d}$ for $\ell=n+1,\dots,p$. The posterior expected loss becomes
\begin{align*}
    \EE[L_2 \mid \vec{S}] &= \sum_{i,j=1}^p \lambda_i^{-1}\lambda_j^{-1}\Big[ \sum_{k,\ell=1}^n \EE[\Gamma_{ki}\Gamma_{kj}\Gamma_{\ell i}\Gamma_{\ell j} \mid \vec{S}] d_k d_\ell     + 2\bar{d}\; \sum_{k=1}^n d_k \sum_{\ell=n+1}^p \EE[\Gamma_{ki} \Gamma_{kj}\Gamma_{\ell i}\Gamma_{\ell j} \mid \vec{S}] \\
    &+ \bar{d}\;^2 \sum_{k,\ell=n+1}^p \EE[\Gamma_{ki}\Gamma_{kj}\Gamma_{\ell i}\Gamma_{\ell j}\mid \vec{S}]\Big] - 2\sum_{j=1}^p \lambda_j^{-1}\Big[\sum_{k=1}^n \EE[\Gamma_{kj}^2 \mid\vec{S}] d_k + \bar{d} \sum_{k=n+1}^p \EE[\Gamma_{kj}^2 \mid \vec{S}]\Big]+p.
\end{align*}
For $k\leq n$, differentiating with respect to $d_k$ and setting equal to zero gives
\begin{align*}
    \frac{\partial }{\partial d_k} \; \EE[L_2 \mid \vec{S}] &= 2 \sum_{\ell=1}^n d_\ell \sum_{i,j=1}^p \lambda_i^{-1}\lambda_j^{-1}\EE[\Gamma_{ki}\Gamma_{kj}\Gamma_{\ell i}\Gamma_{\ell j}\mid \vec{S}] \\
    &+ 2\bar{d} \; \sum_{\ell=n+1}^p \sum_{i,j=1}^p \lambda_i^{-1}\lambda_j^{-1}\EE[\Gamma_{ki} \Gamma_{kj}\Gamma_{\ell i}\Gamma_{\ell j} \mid \vec{S}] - 2\sum_{j=1}^p \lambda_j^{-1} \EE[\Gamma_{kj}^2 \mid \vec{S}] = 0.
\end{align*}
Dividing by 2 and rearranging gives
\begin{align*}
    \sum_{\ell=1}^n d_\ell A_{k \ell} + \bar{d} \; A_{k,(n+1)} = b_k,
\end{align*}
where $A_{k\ell}$ is defined in \eqref{eq:A2} for $\ell \leq n$, and $A_{k,(n+1)}$ is defined in \eqref{eq:Ak(n+1)}, and $b_k$ in \eqref{eq:b2}. This is the $k^{\text{th}}$ row of $\vec{A}\vec{d}=b$. Differentiating with respect to $\bar{d}$ gives
\begin{align*}
    \frac{\partial }{\partial \bar{d}} \; \EE[L_2 \mid \vec{S}] &= 2\sum_{k=1}^n d_k \sum_{i,j=1}^p \lambda_i^{-1}\lambda_j^{-1}\sum_{\ell=n+1}^p \EE[\Gamma_{ki}\Gamma_{kj}\Gamma_{\ell i}\Gamma_{\ell j} \mid \vec{S}] \\
    &+ 2\bar{d} \sum_{k,\ell=n+1}^p \sum_{i,j=1}^p \lambda_i^{-1}\lambda_j^{-1}\EE[\Gamma_{ki}\Gamma_{kj}\Gamma_{\ell i}\Gamma_{\ell j} \mid \vec{S}] - 2 \sum_{k=n+1}^p \sum_{j=1}^p \lambda_j^{-1} \EE[\Gamma_{kj}^2 \mid \vec{S}].
\end{align*}
Setting equal to zero and dividing by 2 gives
\begin{align*}
    \sum_{k=1}^n d_k A_{k,(n+1)} + \bar{d} \; A_{(n+1),(n+1)} = b_{n+1}.
\end{align*}
By symmetry of $A$, the above is equivalent to
\begin{align*}
    \sum_{k=1}^n d_k A_{(n+1),k} + \bar{d} \; A_{(n+1),(n+1)} = b_{n+1},
\end{align*}
which is the $(n+1)^{\text{th}}$ row of $\vec{A}\vec{d}=b$.

\end{proof}

\end{document}